\pgfplotsset{compat=1.18} 
\newtheorem{theorem}{Theorem}
\newtheorem{lemma}[theorem]{Lemma}
\newtheorem{claim}[theorem]{Claim}
\newtheorem{corollary}[theorem]{Corollary}
\newtheorem{observation}[theorem]{Observation}
\theoremstyle{definition}
\newtheorem{definition}[theorem]{Definition}
\def\DEBUG{true} 
  \def\rem#1{{\marginpar{\raggedright\scriptsize #1}}}
  \newcommand{\adir}[1]{\rem{\textcolor{Red}{$\bullet$ Adithya: #1}}}
  \newcommand{\ankr}[1]{\rem{\textcolor{Green}{$\bullet$ Ankit: #1}}}
  \newcommand{\ashr}[1]{\rem{\textcolor{Blue}{$\bullet$ Ashish: #1}}}
  \newcommand{\adir}[1]{}
  \newcommand{\ankr}[1]{}
  \newcommand{\ashr}[1]{}
\title{Weighted $k$-Server Admits an Exponentially Competitive Algorithm}
\author{
  Adithya Bijoy\thanks{National University of Singapore. This work done while the author was affiliated to Indian Institute of Technology, Delhi, India, and was partially supported by the CSE Research Acceleration Fund of IIT Delhi.} \and
  Ankit Mondal\thanks{Indian Institute of Technology, Delhi, India}  \and
  Ashish Chiplunkar\thanks{Indian Institute of Technology, Delhi, India, \url{https://www.cse.iitd.ac.in/\~ashishc/}}
}
\date{}
\begin{document}
\maketitle

\begin{abstract}
The weighted $k$-server is a variant of the $k$-server problem, where the cost of moving a server is the server's weight times the distance through which it moves. The problem is famous for its intriguing properties and for evading standard techniques for designing and analyzing online algorithms. Even on uniform metric spaces with sufficiently many points, the deterministic competitive ratio of weighted $k$-server is known to increase doubly exponentially with respect to $k$, while the behavior of its randomized competitive ratio is not fully understood. Specifically, no upper bound better than doubly exponential is known, while the best known lower bound is singly exponential in $k$. In this paper, we close the exponential gap between these bounds by giving an $\exp(\mathcal{O}(k^2))$-competitive randomized online algorithm for the weighted $k$-server problem on uniform metrics, thus breaking the doubly exponential barrier for deterministic algorithms for the first time. This is achieved by a recursively defined notion of a \textit{phase} which, on the one hand, forces a lower bound on the cost of any offline solution, while, on the other hand, also admits a randomized online algorithm with bounded expected cost. The algorithm is also recursive; it involves running several algorithms virtually and in parallel and following the decisions of one of them in a random order. We also show that our techniques can be lifted to construct an $\exp(\mathcal{O}(k^2))$-competitive randomized online algorithm for the generalized $k$-server problem on weighted uniform metrics.
\end{abstract}

\section{Introduction}\label{sec_intro}

The $k$-server problem of Manasse, McGeoch, and Sleator~\cite{ManasseMS_STOC88} is considered to be one of the cornerstone problems in the domain of online computation. The problem concerns serving requests on an underlying metric space using $k$ identical mobile servers. Specifically, an adversary presents a sequence of requests, one request at a time, where each request is a point in the metric space. In response, an online algorithm must move one of its $k$ servers to the requested point (unless a server is already located there). The cost of moving a server is the distance through which the server moves, and the objective is to minimize the total cost. 
Since the algorithm must serve a request before it gets to see the next one, it is unreasonable to hope that the algorithm outputs the cheapest solution to the instance. The performance of an online algorithm is measured using the framework of \textit{competitive analysis} introduced by Sleator and Tarjan~\cite{SleatorT85}. Informally, the \textit{competitive ratio} of an online algorithm is the worst case ratio of the (expected) cost of the algorithm's solution to the cost of an optimal offline solution. While analyzing the competitive ratio of randomized algorithms it is customary to assume that the algorithm competes against an \textit{oblivious adversary} who does know the algorithm but can't access the outcomes of the randomness used by the algorithm, and thus, can't necessarily infer the positions of the algorithm's servers. The competitive ratio of an online problem is the best competitive ratio that can be attained by any online algorithm for the problem.

The competitive ratio of deterministic algorithms for the $k$-server problem is reasonably well understood. Already in their seminal paper, Manasse, McGeoch, and Sleator~\cite{ManasseMS_STOC88} proved that no deterministic algorithm for the $k$-server problem can have competitive ratio less than $k$ on any metric space having more than $k$ points. They also posed the $k$-server conjecture, which guarantees the existence of a deterministic $k$-competitive algorithm for the $k$-server problem on arbitrary metric spaces. Fiat, Rabani, and Ravid~\cite{FiatRR_FOCS90} were the first to design an algorithm for the $k$-server problem with competitive ratio bounded by a function of $k$ only. The currently best known algorithm for $k$-server is the so-called \textit{work function algorithm} by Koutsoupias and Papadimitriou~\cite{KoutsoupiasP_JACM95}, which is known to have competitive ratio of at most $2k-1$ on every metric space. In the case of randomized algorithms, the recent breakthrough by Bubeck, Coester, and Rabani~\cite{BubeckCR_STOC23} shows a competitive ratio lower bound of $\Omega(\log^2k)$ on an appropriately designed metric space, and $\Omega(\log k)$ on every metric space with more than $k$ points. It is noteworthy that no upper bound on the randomized competitive ratio better than $2k-1$ is known. The $k$-server problem on \textit{uniform metrics} (where any two distinct points are a unit distance apart) is particularly interesting because it is identical to the paging problem. In this case, the deterministic competitive ratio is exactly $k$ \cite{SleatorT85}, while the randomized competitive ratio is exactly $h(k)=1+1/2+\cdots+1/k$ \cite{FiatKLMSY_JAlg91,AchlioptasCN_TCS00}.

\subsection*{Weighted $k$-server}

A natural generalization of the $k$-server defined by Newberg~\cite{Newberg91} is where servers are non-identical -- they have \textit{weights} $w_1,\ldots,w_k$ and the cost of moving a server of weight $w$ through a distance $d$ is $w\cdot d$. In this case, it is reasonably easy to see that a $c$-competitive $k$-server algorithm is also a $c\cdot(w_{\max}/w_{\min})$-competitive weighted $k$-server algorithm, where $w_{\max}$ and $w_{\min}$ are the maximum and minimum weights respectively. However, the challenge here is to design algorithms having competitive ratios that depend only on $k$ and not on the weights.

The weighted $k$-server problem turns out to be notoriously hard on arbitrary metric spaces in two ways. First, the problem seems to evade standard techniques to such an extent that no competitive algorithm is known for arbitrary metrics for any $k>2$. Second, even when competitive ratio bounds are known, they are much worse than the bounds in the unweighted case for the same $k$ and the same metric space. For example, the best-known upper bound on the competitive ratio for $k=2$ on arbitrary metrics is 879~\cite{SittersS_JACM06}.

Due to its difficulty, it becomes prudent to understand the competitive ratio of weighted $k$-server on \textit{simple} metric spaces, and in particular, the uniform metric space. On the uniform metric space, the weighted $k$-server problem is identical to a variant of the paging problem where the cost of page replacement is determined by the memory location where the replacement takes place. Fiat and Ricklin~\cite{FiatR_TCS94} designed a deterministic algorithm for weighted $k$-server on uniform metrics whose competitive ratio increases doubly exponentially with respect to $k$. More than two decades later, Bansal, Eli\'{a}s, and Koumoutsos~\cite{BansalEK_FOCS17} proved a doubly exponential lower bound on the competitive ratio of any deterministic algorithm. Contrast this with the unweighted case where the deterministic competitive ratio is only $k$.

Until recently, the understanding of the randomized competitive ratio of weighted $k$-server on uniform metrics was embarrassing. The best upper bound known so far is doubly exponential in $k$, resulting from a deterministic algorithm~\cite{FiatR_TCS94}, and from a randomized memoryless algorithm~\cite{ChiplunkarV_TAlg20} against a stronger form of adversary called the adaptive online adversary (refer to~\cite{Ben-DavidBKTW94} for the definition). In fact, the doubly exponential dependence is unavoidable under each of those restrictions. On the other hand, no better lower bound than the $\Omega(\log k)$ bound in the unweighted case was known, thus, leaving a triply exponential gap between the upper and lower bounds. Recently, Ayyadevara and Chiplunkar~\cite{AyyadevaraC_ESA21} achieved a breakthrough by improving the lower bound to $\Omega(2^{k})$, reducing the gap to singly exponential. Very recently, Ayyadevara, Chiplunkar, and Sharma~\cite{AyyadevaraCS_FSTTCS24} took a decomposition approach to the problem, where they defined two relaxations of weighed $k$-server with the property that the product of their competitive ratios is an upper bound on the product of weighted $k$-server on uniform metrics. They proved that the exponential lower bound of weighted $k$-server applies to one of the relaxations and gave an $\exp(\mathcal{O}(k^2))$ algorithm for the same. However, because the other relaxation was left unsolved, the exponential gap remains.

In this paper, we close the exponential gap and prove that the randomized competitive ratio of weighted $k$-server on uniform metrics indeed behaves singly exponentially with respect to $k$. Specifically, we prove,

\begin{restatable}{theorem}{main}\label{thm_main}
There exists an $\exp(\mathcal{O}(k^2))$-competitive randomized algorithm for the weighted $k$-server problem on uniform metrics.
\end{restatable}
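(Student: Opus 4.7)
The plan is to prove the theorem by strong induction on $k$, using a recursive notion of a $k$-\emph{phase} as the unit of amortization. As a standard preprocessing step, the server weights can be rounded and rescaled so that they grow geometrically, with $w_i = W^{i}$ for some $W$ depending on $k$; since this rounding costs only a multiplicative factor of $W^{k}$ in the competitive ratio, I can afford to take $W = 2^{O(k)}$, contributing an $\exp(O(k^2))$ factor that fits within the target bound.

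A $k$-phase, starting from a fixed configuration, should be defined so that it continues as long as the request sequence admits an offline schedule that does not move the heaviest server $k$. The phase then ends at the earliest request which, together with the past requests of the phase, cannot be covered by the $k-1$ lighter servers alone; a standard exchange argument forces the offline optimum to pay at least $w_k$ during each such phase. Inside a phase, the sub-instance seen by the lighter servers is essentially a weighted $(k-1)$-server problem, which by the inductive hypothesis admits an $\exp(O((k-1)^2))$-competitive randomized algorithm.

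For the online algorithm, the plan is to maintain, in parallel and virtually, one copy of the inductive $(k-1)$-server algorithm for each candidate location of the heaviest server among the currently relevant points; this amounts to roughly $2^{O(k)}$ simulations. At the start of each $k$-phase, the algorithm commits to one of these virtual copies chosen at random and places the real server $k$ at the point associated with that copy. The end of the phase is detected online by monitoring when each virtual copy has been forced to put a lighter server on the ``heavy point'' it was protecting, which certifies that the offline solution too must move server $k$. The random commitment pays only a $2^{O(k)}$ factor over the cheapest virtual copy, whose expected cost is $\exp(O((k-1)^2))$ times the offline $(k-1)$-server cost within the phase; combining this with the per-phase offline lower bound of $w_k$ yields the inductive step with a per-level blowup of $2^{O(k)}$.

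The main obstacle will be making the phase definition and the virtual simulations mutually consistent. The phase boundary must be determined by a rule that does not depend on the algorithm's randomness, so that the $\Omega(w_k)$ per-phase offline lower bound is not undermined by conditioning on the algorithm's choices; simultaneously, the virtual copies must collectively certify a forced movement of the offline heavy server, and the transitions between consecutive phases must not introduce amortized asymmetry in the charging scheme. Crucially, the accounting has to show that each recursion level multiplies the competitive ratio by only $2^{O(k)}$ rather than the $2^{\Omega(k \log k)}$ (or worse) factor that a naive union-bound style argument would produce; this is where randomization, and in particular the choice to follow one copy in a random order, must be shown to shave exactly one exponential over the classical deterministic approach, so that the ratio telescopes to $\prod_{i=1}^{k} 2^{O(i)} = 2^{O(k^2)}$.
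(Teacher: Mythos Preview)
Your proposal has the right architecture---recursive phases, parallel virtual copies, following them in a random order---but the phase definition you give is unworkable for this problem, and without a working phase definition the rest of the argument cannot get off the ground.

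You define a $k$-phase to end at ``the earliest request which, together with the past requests of the phase, cannot be covered by the $k-1$ lighter servers alone,'' and you rely on a ``standard exchange argument'' to force the offline to pay $w_k$ per phase. But on a uniform metric every request can be served by any server, so the lighter servers can always cover everything; your phase never ends and no offline lower bound follows. The difficulty in weighted $k$-server is precisely that the adversary may park its heavy server at some unknown point $p$ and serve everything else cheaply with the light servers; nothing ever \emph{forces} the heavy server to move. The paper therefore cannot and does not define a phase by a feasibility or forcing condition. Instead it first runs an \emph{explore} part (a full $(\ell-1)$-multiphase) whose accumulated \emph{demand vector} identifies a bounded-size critical set $S$ of plausible locations for the adversary's $\ell$-th server, and only then runs the parallel simulations, one per $p\in S$, in the \emph{exploit} part. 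The phase ends when all these simulations terminate, which is a purely syntactic, randomness-independent event.

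Correspondingly, the offline lower bound is not an exchange argument but a counting argument: if the adversary's server $\ell$ sits outside $S$, the demand-vector accounting shows that most sub-phases of the explore multiphase are ``non-contaminated'' and each costs $\Omega(w_{\ell-1})$; if it sits at some $p\in S$, the same applies to the corresponding exploit multiphase. Your proposal skips both the explore step (you assume ``currently relevant points'' are known) and this contamination bookkeeping, so neither the algorithm nor the lower bound can be carried out as written. The random-order trick you allude to at the end is indeed the right idea for keeping the per-level blowup at $1+h(|S|)$ rather than $|S|$, but it only becomes meaningful once $S$ has been produced by the explore part.
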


\subsection*{Generalized $k$-server}

The generalized $k$-server problem on weighted uniform metrics is a generalization of the weighted $k$-server problem on uniform metrics. Here, we have $k$ pairwise disjoint uniform metric spaces $M_1,M_2,\ldots,M_k$, and a server $s_{\ell}$ that moves in each $M_{\ell}$. Distinct points in each metric space $M_{\ell}$ are separated by a distance $w_{\ell}$. Each request is a tuple $r \in M_1\times M_2 \times \cdots \times M_k$ of $k$ points, where we denote the $\ell$'th component of the tuple by $r[\ell]$. The algorithm is required to satisfy the request by ensuring that at least one server $s_{\ell}$ is located at the corresponding point $r[\ell]$. A movement of $s_{\ell}$ results in a cost of $w_{\ell}$. 

It is easy to see that the weighted $k$-server problem can be reduced to the generalized $k$-server problem. Let $M'$ be the underlying uniform metric space for the weighted $k$-server problem, and let $w_1,\ldots,w_k$ be the weights of servers. Make $k$ copies $M_1,\ldots,M_k$ of $M'$ and let $w_{\ell}$ be the pairwise distance between points in each $M_{\ell}$. Simulate a weighted $k$-server request to a point $p\in M$ by a generalized $k$-server request consisting of the copies of $p$ in all $M_{\ell}$'s.

Although generalized $k$-server might seem to be strictly harder than weighted $k$-server and no reduction from the former to the latter is known, algorithms originally designed for weighted $k$-server have been lifted to get algorithms for generalized $k$-server with the same competitive ratio and minor changes to proofs~\cite{BansalEKN_TALG23,ChiplunkarV_TAlg20}. In the same vein, we prove that our algorithm and analysis can be lifted to establish the following result, whose proof sketch is presented in Appendix~\ref{app_gks}.

\begin{restatable}{theorem}{thmgks}\label{thm_gks}
There exists an $\exp(\mathcal{O}(k^2))$-competitive randomized algorithm for the generalized $k$-server problem on weighted uniform metrics.
\end{restatable}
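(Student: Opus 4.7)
The plan is to lift the algorithm and analysis that establishes Theorem~\ref{thm_main} to the generalized $k$-server setting, following the blueprint of similar liftings in the prior literature~\cite{BansalEKN_SODA18,ChiplunkarV_TAlg20}. The basic observation is that the recursive phase structure underlying the weighted $k$-server result depends on the server weights $w_1,\ldots,w_k$ only through the per-server movement costs, so it transfers directly when the weights are reinterpreted as pairwise distances in the disjoint uniform metrics $M_1,\ldots,M_k$; the remaining work is to account for the additional flexibility that the offline adversary enjoys in the generalized setting.

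Concretely, I would redefine the notion of a phase associated with the server $s_\ell$ so that it depends only on the subsequence of requests projected onto the $\ell$'th coordinate, i.e., the sequence of prescribed locations $r[\ell]\in M_\ell$. The online algorithm then runs the same virtual copies in parallel as in the weighted case, but each copy now commits to a point $q_\ell \in M_\ell$ for $s_\ell$ rather than to a point in a common metric; a request $r$ is automatically served whenever some coordinate $\ell$ happens to have $s_\ell$ already at $r[\ell]$. The expected-cost calculation is essentially coordinate-local (each move of $s_\ell$ costs $w_\ell$, paralleling the weighted case), and so the upper bound lifts directly via the same induction on $k$ used to prove Theorem~\ref{thm_main}.

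The lower bound on offline cost per phase requires more care, because the offline adversary can satisfy any request $r$ via \emph{any} coordinate $\ell$ at which $s_\ell$ happens to already lie at $r[\ell]$. I would handle this by canonically designating, for each request and each phase, a specific coordinate that must be charged---roughly, the deepest coordinate at which the recursive phase structure changes---so that the offline cost is bounded below by a sum of per-coordinate lower bounds, each of which is inherited from the weighted $k$-server analysis. The main obstacle I anticipate is showing that this canonical charging is self-consistent with the recursive definition of phases: specifically, that the phase structure on the $\ell$'th coordinate induced by the full request sequence agrees, up to a loss absorbed in the $\exp(\mathcal{O}(k^2))$ bound, with the phase structure one would obtain from the projected subsequence alone. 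I expect this to follow from an induction on $k$ mirroring the proof of Theorem~\ref{thm_main}, with a trivial base case ($k=1$) since generalized $1$-server on a weighted uniform metric coincides with the weighted $1$-server problem on a uniform metric; modulo this bookkeeping, the recursive algorithm and its analysis transfer mechanically.
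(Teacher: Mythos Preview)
Your high-level instinct---lift the weighted $k$-server machinery to generalized $k$-server with only local edits---is exactly what the paper does, but the concrete lifting you propose diverges from the paper's in a way that leaves a real gap.

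You propose to define the level-$\ell$ phase structure purely from the projected sequence $(r[\ell])$, and then reconcile this with the full request sequence via a separate ``canonical charging'' argument. You correctly flag the obstacle: the projected phase structure need not agree with the one induced by the full sequence, because a request can be satisfied at \emph{any} coordinate, so the number of ``hard'' requests seen at level $\ell$ depends on where the heavier servers sit. You do not resolve this, and it is not clear your per-coordinate phases support the lower bound at all---an offline solution that parks $s_{\ell'}$ at a popular point in $M_{\ell'}$ can absorb arbitrarily many requests, collapsing the lower-level phase structure in a way your projection-based definition does not see.

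The paper sidesteps this entirely by \emph{not} projecting requests. Phases are still defined on the full request sequence: a $(1,H)$-phase is a maximal run of requests all satisfied by $H\cup\{p\}$ for some $p\in M_1$, where ``satisfied'' means some coordinate matches. The only coordinate-specific operation is in the \emph{demand vector}: at the base case the demand vector of a $(1,H)$-phase records $\sum_{\ell} u_{r[\ell]}$ for the first request $r$ not satisfied by $H$, and the critical set of a level-$\ell$ multiphase is $\text{top}_{d_{\ell+1}-1}(v\vert_{\ell+1})$, the projection of the demand vector (not the requests) onto $M_{\ell+1}$. With this change, the contamination framework and the cost analysis go through verbatim---there is no ``consistency'' obstacle because there is only one phase structure. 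Your charging idea is also unnecessary: the lower bound still comes from the contamination lemma, now stated for $v\vert_{\ell'}$ in place of $v$.
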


\subsection*{Our techniques} 

On a very high level, the idea behind the analysis of our algorithm is rather straightforward: we show that any request sequence can be split into \textit{phases}, where each phase forces a lower bound on the optimum cost while also allowing an online algorithm to serve its requests with bounded expected cost. However, the definition of a phase is tricky and involves induction. Assume that we number servers in non-decreasing order of weight so that $w_1\leq\cdots\leq w_k$. Intuitively, for any $\ell\in\{1,\ldots,k\}$, and a set $H$ of at most $k-\ell$ points in the metric space, an $(\ell,H)$-phase is a request sequence $\rho$ that simultaneously satisfies the following two properties.
\begin{itemize}
\item If we start with the heaviest $k-\ell$ servers covering points in $H$, then it is possible to serve all requests in $\rho$ in an online manner with expected cost at most $c_{\ell}\cdot w_{\ell}$, using only the lightest $\ell$ servers.
\item If we start with the heaviest $k-\ell$ servers not covering any point from a set of important points associated with $\rho$, then any solution, not necessarily online, that serves $\rho$ must pay a cost of at least $w_{\ell}/c'_{\ell}$.
\end{itemize}
Here, $c_{\ell}$ and $c'_{\ell}$ are constants that depend on $\ell$ only. We show how we can build up an $(\ell,H)$-phase from several $(\ell-1,H')$-phases and unambiguously parse a request sequence as a concatenation of $(k,\emptyset)$-phases (and an incomplete $(k,\emptyset)$-phase). As a result, we establish a competitive ratio of $c_k\cdot c'_k$, which turns out to be $\exp(\mathcal{O}(k^2))$.

Our online algorithm to serve an $(\ell,H)$-phase uses the following idea. Suppose it starts out from a configuration that covers all points in $H$ with the heaviest $k-\ell$ servers. The algorithm never moves those servers, so it can ignore requests to points in $H$. The first half of the phase consists of several $(\ell-1,H)$-phases. The algorithm serves these phases recursively using only the lightest $\ell-1$ servers, and while doing so, identifies a \textit{critical set} $S$ of points. Informally, these points are the best candidates for the adversary to keep its $\ell$'th lightest server during the phase. The behavior of our algorithm in the second half of a phase is inspired by the so-called \textit{randomized-min operator} resulting from Theorem 11 of~\cite{FiatFKRRV_SIAMJC98}. For the second half of the phase, our algorithm, in its imagination and in parallel for each $p\in S$, serves requests by keeping the $\ell$'th lightest server at $p$ and recursively running algorithms to serve a fixed number of $(\ell-1,H\cup\{p\})$-phases. In reality, the algorithm chooses one of those runs uniformly at random and follows it until it terminates. Upon termination of that imaginary run, the algorithm switches to another uniformly random imaginary run that has not terminated. This continues until all the imaginary runs terminate, and by this time, our algorithm is done serving the phase. This process serves the purpose of weeding out bad choices for the location of the $\ell$'th lightest server -- indeed, if the sequence of requests of the fixed number of $(\ell-1,H\cup\{p\})$-phases is too short, it is very unlikely that the corresponding run is ever chosen by the algorithm to follow.

Our idea might appear to be similar to that of Fiat and Ricklin~\cite{FiatR_TCS94} behind their deterministic algorithm, but the similarity is only superficial. Since Fiat and Ricklin analyze a deterministic algorithm, they can assume without loss of generality that the adversary never requests a point which is already occupied by any of the algorithm's servers. This enables them to define phases based on the number of requests only, and claim an exact value of the cost paid by their algorithm in every phase. None of this works as soon as the algorithm is randomized and the adversary is oblivious. Therefore, we have to meticulously define phases and their critical sets, making the definitions more involved than those of Fiat and Ricklin. The other difference is in the second part of a phase. Here, Fiat and Ricklin's algorithm visits all points in the critical set, one at a time, in arbitrary order. In contrast, we simulate all these choices in parallel to weed out the worse ones and try out only those choices that survive for a sufficiently long duration.

\subsection*{Organization of the paper}

In Section~\ref{sec_prelims}, we formally define the weighted $k$-server problem and state assumptions that we make without loss of generality to simplify the presentation of our algorithm and its analysis. Section~\ref{sec_phase} focuses on defining phases and several attributes of phases, including the request sequence associated with a phase and its critical set. In Section~\ref{sec_opt}, we show lower bounds on the optimal cost of serving request sequences associated with phases. We present the recursive construction of an online algorithm to serve requests that constitute a phase in Section~\ref{sec_strategy}. Finally, we put everything together and prove a competitive guarantee in Section~\ref{sec_alg}. We present proofs skipped from these sections in Appendix~\ref{app_missingproofs}, while Appendix~\ref{app_gks} is devoted to listing the changes that need to be made to lift the algorithm and its analysis to generalized $k$-server on weighted uniform metrics.

\section{Preliminaries}\label{sec_prelims}

Consider a set $M$, referred to as the \textit{set of points} henceforth. The uniform metric on $M$ is the function that assigns a unit \textit{distance} between any pair of distinct points in $M$, and a zero distance between a point and itself. The weighted $k$-server problem on the uniform metric on $M$ is specified by $k$ non-negative real numbers $w_1\leq\cdots\leq w_k$, which are the weights of $k$ servers occupying points in $M$, and the initial positions of all servers. In each round, an adversary specifies a request on a point in $M$, and it must be served by moving one of the servers to the requested point, unless a server already occupies that point. The cost of a movement (if any) is the weight of the server moved. The objective is to minimize the total cost of serving a finite sequence $\rho$ of points in $M$. We say that a randomized online algorithm $\mathcal{A}$ is $c$-competitive for a function $c$ of $k$ (alone) if there exists $c'$ (possibly depending on weights) such that for every sequence $\rho$ of points in $M$, we have
\[\text{ALG}(\rho)\leq c\cdot\text{OPT}(\rho)+c'\text{,}\]
where $\text{ALG}(\rho)$ denotes the expected cost of $\mathcal{A}$ on $\rho$, and $\text{OPT}(\rho)$ denote the minimum cost of serving requests $\rho$ (in hindsight). Note that neither $c$ nor $c'$ is allowed to depend on $\rho$. The goal of this paper is to give an $\exp(\mathcal{O}(k^2))$-competitive randomized online algorithm for weighted $k$-server on uniform metrics.

We assume that for each $i\in\{2,\ldots,k\}$, $w_i$ is a multiple of $w_{i-1}$ greater than $w_{i-1}$. If this condition does not hold, we can successively round up each $w_i$ to a multiple of $w_{i-1}$. This distorts each $w_i$ by at most a factor $2^{i-1}$, resulting in a loss of at most $2^{k-1}$ in the competitive ratio (see Lemma~\ref{lem_distort} in Appendix~\ref{appendix_prelims}). Since we are targeting an $\exp(\mathcal{O}(k^2))$ bound on the competitive ratio, the $2^{k-1}$ factor loss can be ignored, and hence, there is no loss of generality in making the assumption.

We treat a request sequence as a string over the alphabet of points in the metric space. Here are some related definitions and notation.
\begin{definition}\label{def_concat}
The concatenation of strings $\rho_1$ and $\rho_2$ is written as $\rho_1\rho_2$. If the string $\rho_1$ is a prefix of the string $\rho_2$, then we use $\rho_1\backslash\rho_2$ to denote the unique string $\rho_3$ such that $\rho_2=\rho_1\rho_3$, that is, the string that remains on removing the prefix $\rho_1$ from $\rho_2$. We call a finite set $R$ of strings a \textit{prefix chain} if for every $\rho_1,\rho_2\in R$ one of $\rho_1$, $\rho_2$ is a prefix of the other, or equivalently, $R$ contains a unique maximum length string and every string in $R$ is a prefix of that maximum length string.
\end{definition}

We will be using non-negative vectors in the real vector space with dimensions being the points in the metric space. Here are some related definitions and notation.
\begin{definition}\label{def_one_hot}
For a point $p\in M$, $u_p$ denotes the vector that has coordinate value $1$ in the direction $p$ and $0$ in all other directions. If $v$ is a non-negative vector in this vector space and $d\in\mathbb{N}$, then $\text{top}_d(v)$ denotes the set of $d$ points that have the $d$ maximum coordinates in $v$, breaking ties using some fixed total order on the set of points $M$. By $|v|$, we denote the $1$-norm of $v$, which equals the sum of entries of $v$.
\end{definition}

Throughout this paper, we use constants $d_1,\ldots,d_k$ defined as follows.
\begin{definition}\label{def_dl}
$d_{\ell}=2^{5^{\ell-1}-1}$.
\end{definition}

\section{Phases and Multiphases}\label{sec_phase}

The notion of a \textit{phase} and a \textit{multiphase} is central to the design and analysis of our algorithm for weighted $k$-server. Informally, a phase represents a request sequence which is hard enough for an offline solution, while being easy enough for an online algorithm. More specifically, for $\ell\in\{1,\ldots,k\}$ and a subset of $H$ at most $k-\ell$ points, an $(\ell,H)$-phase is a recursively defined structure that exists for a request sequence only if the sequence can be served in an online manner with a small enough cost if we start out with the heaviest $k-\ell$ servers covering all points in $H$, while any offline solution necessarily incurs sufficiently large cost, unless some of the heavier servers cover some crucial points initially. These properties are formalized and rigorously proven in the next two sections. An $(\ell,H)$-multiphase can be simply thought of as a sequence of $w_{\ell+1}/w_{\ell}$ $(\ell,H)$-phases. The terms $(\ell,H)$-phase and $(\ell,H)$-multiphase, their request sequence, their demand vector, and their critical set are defined inductively as follows. Appendix~\ref{appendix_example} illustrates these definitions with an example.

\begin{definition}\label{def_phase}
Let $\ell\in\{1,\ldots,k\}$ and let $H$ be a set of at most $k-\ell$ points from the metric space. 

A \textit{$(1,H)$-phase} is any string $\rho$ over the set $H\cup\{p\}$ for some point $p\notin H$ that contains at least one occurrence of $p$. The \textit{request sequence} of such a $(1,H)$-phase is the string $\rho$, its \textit{demand vector} is $u_p$, and its critical set is $\{p\}$.

For $\ell<k$, an \textit{$(\ell,H)$-multiphase} $Q$ is a sequence $P_1,\ldots,P_{w_{\ell+1}/w_{\ell}}$ of $(\ell,H)$ phases such that, if $\rho_i$ is the request sequence of each $P_i$, then for all $i$, $\rho_i$ is the longest prefix of $\rho_i\cdots \rho_{w_{\ell+1}/w_{\ell}}$ which is the request sequence of some $(\ell,H)$-phase. The \textit{request sequence} of the multiphase $Q$ is defined to be the string $\rho_1\cdots \rho_{w_{\ell+1}/w_{\ell}}$, and its \textit{demand vector} is defined to be $v=\sum_{i=1}^{w_{\ell+1}/w_{\ell}}v_i$, where $v_i$ is the demand vector of $P_i$. The \textit{critical set} of $Q$ is defined to be the set $\text{top}_{d_{\ell+1}-1}(v)$.

For $\ell>1$, an \textit{$(\ell,H)$-phase} $P$ is a pair $(Q^0,\mathcal{Q})$, where
\begin{itemize}
\item $Q^0$ is an $(\ell-1,H)$-multiphase, also called the \textit{explore part} of $P$. Let $\rho^0$ be the request sequence, $v^0$ be the demand vector, and $S$ be the critical set of $Q^0$.
\item $\mathcal{Q}$ is the set $\{Q^p\mid p\in S\}$, where each $Q^p$ is an $(\ell-1,H\cup\{p\})$-multiphase such that, if $\rho^p$ is the request sequence of $Q^p$, then the following conditions are satisfied.
\begin{enumerate}
\item $\{\rho^p\mid p\in S\}$ is a prefix chain. Let $\rho'$ be the longest string in this prefix chain.
\item $\rho^0$ is the longest prefix of $\rho^0\rho'$ which is the request sequence of some $(\ell-1,H)$-multiphase.
\item For each $p\in S$, $\rho^p$ is the longest prefix of $\rho'$ which is the request sequence of some $(\ell-1,H\cup\{p\})$-multiphase.
\end{enumerate}
$\mathcal{Q}$ is called the \textit{exploit part} of $P$.
\end{itemize}
The \textit{request sequence} of $P$ is defined to be the string $\rho^0\rho'$. The \textit{critical set} of $P$ is defined to be $S$, the critical set of the explore part of $P$. The \textit{demand vector} of $P$ is defined to be $v^0+\sum_{p\in S}v^p$, where $v^p$ is the demand vector of $Q^p$.
\end{definition}

The above definition has been tailor-made for the design and analysis of our online algorithm. Here, the demand vector of an $(\ell,H)$-(multi)phase can be thought of as a vector which, for each point in the metric space, indicates the utility of placing one of the $k-\ell$ heaviest servers at that point. The critical set is the set of an appropriate number of points that have the highest utility. Let us start by stating a few properties that follow from  Definition~\ref{def_phase} easily.

\begin{observation}\label{obs_non-empty}
For every $\ell\in\{1,\ldots,k\}$ (resp.\ $\ell\in\{1,\ldots,k-1\}$) and every set $H$ of at most $k-\ell$ points, the request sequence of every $(\ell,H)$-phase (resp.\ $(\ell,H)$-multiphase) is non-empty.
\end{observation}

The above observation is obvious from the fact that a $(1,H)$-phase is non-empty by definition, and from the recursive construction of (multi)phases given by Definition~\ref{def_phase}.

\begin{observation}\label{obs_demand_H}
For every $\ell\in\{1,\ldots,k\}$ (resp.\ $\ell\in\{1,\ldots,k-1\}$), every set $H$ of at most $k-\ell$ points, and every $p\in H$, the demand vector $v$ of every $(\ell,H)$-phase (resp.\ $(\ell,H)$-multiphase) satisfies $v[p]=0$.
\end{observation}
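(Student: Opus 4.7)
The plan is a straightforward induction on $\ell$ that handles $(\ell,H)$-phases and $(\ell,H)$-multiphases together, following the interleaved recursive structure of Definition~\ref{def_phase}. I would use the lexicographic order: at level $\ell$, first establish the claim for phases (which only invokes the multiphase claim at level $\ell-1$), then use the phase claim to establish the multiphase claim at level $\ell$. The base case $\ell = 1$ is immediate, since by Definition~\ref{def_phase} the demand vector of a $(1,H)$-phase is $u_q$ for some $q \notin H$, and $u_q[p] = 0$ for every $p \in H$.

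For the inductive step on a phase at level $\ell > 1$, write $P = (Q^0, \mathcal{Q})$ with demand vector $v = v^0 + \sum_{q \in S} v^q$. The explore part $Q^0$ is an $(\ell-1,H)$-multiphase, so by the multiphase hypothesis at level $\ell-1$, $v^0[p] = 0$ for every $p \in H$. Each $Q^q$ in the exploit part is an $(\ell-1, H \cup \{q\})$-multiphase, so the same hypothesis yields $v^q[r] = 0$ for all $r \in H \cup \{q\}$, and in particular $v^q[p] = 0$ for every $p \in H$. Summing gives $v[p] = 0$. For the inductive step on a multiphase at level $\ell$, its demand vector is the sum of the demand vectors of its constituent $(\ell,H)$-phases, so the phase claim just established applies coordinatewise to give $v[p] = 0$ for $p \in H$.

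I expect no substantive obstacle --- the observation is purely a bookkeeping consequence of the recursive definition. The only care needed is to set up the interleaving of the two inductive statements so every invocation of the hypothesis refers to a level at which the claim has already been proven; in particular, the $(\ell-1, H \cup \{q\})$-multiphases arising in the exploit part are legitimate targets of the level-$(\ell-1)$ multiphase hypothesis because $|H \cup \{q\}| \leq k - \ell + 1 = k - (\ell-1)$.
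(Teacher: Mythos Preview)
Your proposal is correct and is precisely the straightforward structural induction the paper alludes to (the paper states only that the observation ``follows by a straightforward induction on the structure of (multi)phases'' without spelling out the details). Your handling of the interleaved phase/multiphase recursion and the cardinality check on $H\cup\{q\}$ is exactly what is needed.
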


The above observation follows by a straightforward induction on the structure of (multi)phases. The next lemma gives an exact value of the $1$-norm of the demand vector of a (multi)phase.

\begin{restatable}{lemma}{lemeqsize}\label{lem_eq_size}
For every $\ell\in\{1,\ldots,k\}$ (resp.\ $\ell\in\{1,\ldots,k-1\}$) and every set $H$ of at most $k-\ell$ points, the demand vector $v$ of every $(\ell,H)$-phase (resp.\ $(\ell,H)$-multiphase) satisfies $|v|=(w_{\ell}/w_1)\cdot\prod_{i=1}^{\ell}d_i$ (resp.\ $|v|=(w_{\ell+1}/w_1)\cdot\prod_{i=1}^{\ell}d_i$).
\end{restatable}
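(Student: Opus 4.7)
The plan is to prove the two claims simultaneously by induction on $\ell$, interleaving the phase and multiphase cases. I would organize the induction so that, for each $\ell$, the phase claim at level $\ell$ follows from the multiphase claim at level $\ell - 1$, and the multiphase claim at level $\ell$ follows directly from the phase claim at level $\ell$.

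The base case $\ell = 1$ is immediate: a $(1,H)$-phase has demand vector $u_p$ for some $p\notin H$, so $|v|=1$, and the target value is $(w_1/w_1)\cdot d_1 = 1$ since $d_1 = 2^{5^0 - 1} = 1$. For the multiphase step at a general level $\ell$, an $(\ell, H)$-multiphase is by Definition~\ref{def_phase} a concatenation of $w_{\ell+1}/w_\ell$ many $(\ell, H)$-phases whose demand vector is the sum of the individual phase demand vectors. Since $1$-norms add on sums of non-negative vectors, the multiphase formula falls out of the phase formula at the same level by multiplying by $w_{\ell+1}/w_\ell$.

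The substantive step is the phase case at level $\ell \geq 2$. Using Definition~\ref{def_phase}, decompose the phase as $(Q^0,\{Q^p : p\in S\})$, where $v^0$ is the demand vector of $Q^0$ and $S = \text{top}_{d_\ell - 1}(v^0)$ has size exactly $d_\ell - 1$ (assuming $M$ is large enough, as elsewhere in the paper). The induction hypothesis for the multiphase claim at level $\ell - 1$ applies both to $Q^0$ (an $(\ell-1, H)$-multiphase) and to each $Q^p$ (an $(\ell-1, H\cup\{p\})$-multiphase), giving $|v^0| = |v^p| = (w_\ell/w_1)\prod_{i=1}^{\ell-1} d_i$. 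Summing these $|S| + 1 = d_\ell$ contributions yields
\[
|v| = |v^0| + \sum_{p \in S} |v^p| = d_\ell \cdot \frac{w_\ell}{w_1} \prod_{i=1}^{\ell - 1} d_i = \frac{w_\ell}{w_1} \prod_{i=1}^{\ell} d_i,
\]
which closes the induction. I do not anticipate any genuine obstacle: the constants $d_\ell$ and the cardinality $d_\ell - 1$ of the critical set have been defined precisely so that $|S| + 1 = d_\ell$, and this is the one algebraic identity the proof hinges on.
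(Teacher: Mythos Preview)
Your proposal is correct and follows essentially the same approach as the paper: both argue by interleaved induction, using the base case $|u_p|=1=d_1$, then deriving the multiphase formula at level $\ell$ from the phase formula at level $\ell$ by multiplying by $w_{\ell+1}/w_\ell$, and deriving the phase formula at level $\ell\ge 2$ from the multiphase formula at level $\ell-1$ by summing the $d_\ell$ contributions from $Q^0$ and the $Q^p$'s.
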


The proof follows easily from Definition~\ref{def_phase}, and is deferred to Appendix~\ref{appendix_phase}. The next lemma states that if a string is the request sequence of an $(\ell,H)$-(multi)phase, then it has a unique hierarchical decomposition that serves as a witness of it being the request sequence of an $(\ell,H)$-(multi)phase. This is important to avoid any ambiguity in the presentation of our subsequent claims.

\begin{restatable}{lemma}{lemunamb}\label{lem_unambiguous}
For every $\ell\in\{1,\ldots,k\}$ (resp.\ $\ell\in\{1,\ldots,k-1\}$), every set $H$ of at most $k-\ell$ points and every string $\rho$ of requests, there exists at most one $(\ell,H)$-phase (resp.\ $(\ell,H)$-multiphase) whose request sequence is $\rho$.
\end{restatable}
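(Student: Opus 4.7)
The plan is to prove both assertions simultaneously by induction on $\ell$, exploiting the fact that Definition~\ref{def_phase} has been engineered so that every constituent of an $(\ell,H)$-(multi)phase is pinned down as a \emph{longest prefix} satisfying some property at a strictly smaller level of the recursion. Concretely, at each level $\ell$ I will first prove phase uniqueness, then deduce multiphase uniqueness from phase uniqueness at the same level.

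For the base case $\ell=1$, the phase part is immediate: a $(1,H)$-phase literally is its request sequence, so there is nothing to disambiguate. For the $(1,H)$-multiphase part, I would show that the decomposition $\rho=\rho_1\cdots\rho_{w_2/w_1}$ is forced by the requirement that each $\rho_i$ be the longest prefix of $\rho_i\cdots\rho_{w_2/w_1}$ that is a $(1,H)$-phase request sequence: the greedy left-to-right carving of $\rho$ is unambiguous, and each resulting $P_i$ is determined by the phase part just handled.

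For the inductive step at level $\ell>1$, suppose the statement holds at level $\ell-1$. Given $\rho$ and an $(\ell,H)$-phase $(Q^0,\mathcal{Q})$ with request sequence $\rho=\rho^0\rho'$, condition~(2) of Definition~\ref{def_phase} identifies $\rho^0$ as the longest prefix of $\rho$ that is the request sequence of some $(\ell-1,H)$-multiphase; this pins $\rho^0$ down without reference to the phase under examination. By the inductive hypothesis, $Q^0$ is then uniquely determined by $\rho^0$, so its critical set $S$ is fixed, and $\rho'=\rho^0\backslash\rho$ follows. For each $p\in S$, condition~(3) forces $\rho^p$ to be the longest prefix of $\rho'$ that is the request sequence of some $(\ell-1,H\cup\{p\})$-multiphase, which is therefore also uniquely determined, and one more appeal to the inductive hypothesis fixes $Q^p$. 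Hence $\mathcal{Q}=\{Q^p\mid p\in S\}$ is forced, completing the phase part. The multiphase part at level $\ell$ then proceeds exactly as in the base case, using the phase uniqueness at level $\ell$ just established to carve $\rho$ greedily into $w_{\ell+1}/w_\ell$ pieces.

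I do not anticipate a serious obstacle: the only thing to verify carefully is that whenever the definition refers to ``the longest prefix that is the request sequence of some $X$,'' the set of such prefixes is closed under taking maxima in a meaningful sense, which is automatic because prefixes of a given string are totally ordered by the prefix relation. If no prefix of $\rho$ admits the required structure at any step, then there is simply no $(\ell,H)$-(multi)phase with request sequence $\rho$, and the lemma holds vacuously; otherwise the greedy choice at every level is singleton-valued. The closest thing to a subtlety is book-keeping the prefix-chain clause in the definition of a phase, but this plays no role in the uniqueness argument since $\rho'$ is recovered from $\rho$ and $\rho^0$ directly, and each $\rho^p$ is then independently determined from $\rho'$.
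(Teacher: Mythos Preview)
Your proposal is correct and follows essentially the same approach as the paper's proof: both argue by induction on $\ell$, alternating between the phase and multiphase assertions, and both exploit that the ``longest prefix'' clauses in Definition~\ref{def_phase} pin down $\rho^0$, the $\rho_i$'s, and the $\rho^p$'s uniquely from $\rho$, after which the inductive hypothesis determines the corresponding sub-(multi)phases. The paper's write-up is slightly more explicit in comparing two putative (multi)phases with the same request sequence and arguing componentwise equality, but the content is identical to your greedy-carving description.
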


The proof of the above lemma has been deferred to Appendix~\ref{appendix_phase}.

\section{Analysis of the Optimum Cost}\label{sec_opt}

In this section, we prove a lower bound on the cost of the optimum solution to the request sequence of any $(k,\emptyset)$-phase. We start by a few natural definitions.

\begin{definition}\label{def_conf}
A \textit{server configuration} is a $k$-tuple of points. The $\ell$'th point in a configuration $C$ is denoted by $C[\ell]$. Given a sequence of requests $\rho=r_1\cdots r_m$, a \textit{solution} to $\rho$ is a sequence $\mathcal{C}=[C_0,C_1,\ldots,C_m]$ of configurations such that for each $i\in\{1,\ldots,m\}$ there exists $\ell\in\{1,\ldots,k\}$ such that $r_i=C_i[\ell]$. The \textit{cost} of such a solution is $\sum_{i=1}^m\sum_{\ell=1}^kw_{\ell}\cdot\mathbb{I}[C_{i-1}[\ell]\neq C_i[\ell]]$. The cost of a minimum cost solution of a request sequence $\rho$ is denoted by $\text{OPT}(\rho)$.
\end{definition}

\begin{definition}\label{def_substring}
Given a sequence of requests $\rho=r_1\cdots r_m$ and $0\leq i\leq j\leq m$, $\rho[i,j]$ denotes the sequence of requests $r_{i+1}\cdots r_j$. If $\mathcal{C}=[C_0,C_1,\ldots,C_m]$ is a sequence of configurations, then $\mathcal{C}[i,j]$ denotes the sequence of configurations $[C_i,\ldots,C_j]$.
\end{definition}

\begin{observation}\label{obs_cost}
For every sequence of requests $\rho$ of length $m$, numbers $i,j$ such that $0\leq i\leq j\leq m$, and solution $\mathcal{C}$ to $\rho$, $\mathcal{C}[i,j]$ is a solution to $\rho[i,j]$. Hence $\text{OPT}(\rho)\geq\text{OPT}(\rho[i,j])$.
\end{observation}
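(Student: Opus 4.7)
The plan is to prove the observation essentially by unpacking the definitions in Definition~\ref{def_conf} and Definition~\ref{def_substring}. The first assertion (that $\mathcal{C}[i,j]$ is a valid solution to $\rho[i,j]$) is a direct consequence of the fact that being a solution is a local property: solution-hood only requires that the request at each position is covered by some server in the configuration at that position. The second assertion (the inequality on $\text{OPT}$) will follow because the cost of $\mathcal{C}[i,j]$ on $\rho[i,j]$ is a partial sum of the cost of $\mathcal{C}$ on $\rho$ over a contiguous range of indices, and all summands are non-negative.

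Concretely, I would first write $\mathcal{C}[i,j] = [C_i, C_{i+1}, \ldots, C_j]$ and $\rho[i,j] = r_{i+1}\cdots r_j$. For each index $t$ in the range $\{i+1,\ldots,j\}$, since $\mathcal{C}$ is a solution to $\rho$ there exists some $\ell \in \{1,\ldots,k\}$ with $r_t = C_t[\ell]$; the very same choice of $\ell$ witnesses that the $(t-i)$'th request of $\rho[i,j]$ is served by $\mathcal{C}[i,j]$. Hence $\mathcal{C}[i,j]$ is a solution to $\rho[i,j]$, as required.

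For the inequality, I would compute directly from Definition~\ref{def_conf}:
\[
\text{cost}(\mathcal{C}[i,j]) \;=\; \sum_{t=i+1}^{j}\sum_{\ell=1}^{k} w_{\ell}\cdot\mathbb{I}[C_{t-1}[\ell]\neq C_t[\ell]] \;\leq\; \sum_{t=1}^{m}\sum_{\ell=1}^{k} w_{\ell}\cdot\mathbb{I}[C_{t-1}[\ell]\neq C_t[\ell]] \;=\; \text{cost}(\mathcal{C})\text{,}
\]
since every summand in the middle expression is non-negative and the first sum is a sub-range of the second. Now taking $\mathcal{C}$ to be an optimum solution to $\rho$ yields $\text{OPT}(\rho[i,j]) \leq \text{cost}(\mathcal{C}[i,j]) \leq \text{cost}(\mathcal{C}) = \text{OPT}(\rho)$, which is the desired bound. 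There is no genuine obstacle in this proof; it is entirely a matter of carefully tracking indices and invoking non-negativity of the per-step movement costs.
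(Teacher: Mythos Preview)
Your argument is correct and complete: you verify solution-hood of $\mathcal{C}[i,j]$ pointwise and then bound $\text{OPT}(\rho[i,j])$ by the cost of $\mathcal{C}[i,j]$, which is a non-negative partial sum of the cost of an optimal $\mathcal{C}$ for $\rho$. The paper states this as an observation without proof, so there is nothing further to compare---your write-up simply spells out what the paper leaves implicit.
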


\begin{definition} \label{def_restriction}
Let $P$ be an $(\ell,H)$-phase for an arbitrary $\ell\in\{2,\ldots,k\}$ and subset $H$ of at most $k-\ell$ points, having explore part $Q^0$, critical set $S$, exploit part $\mathcal{Q}=\{Q^p\mid p\in S\}$ (where $Q^p$ is an $(\ell-1,H\cup\{p\})$-multiphase) and request sequence $\rho=r_1\cdots r_m$. Recall that $Q^0$ and each $Q^p$ is a sequence of $(w_{\ell}/w_{\ell-1})$ $(\ell-1,H')$-phases (where $H'=H$ in case of $Q^0$ and $H'=H\cup\{p\}$ in case of $Q^p$). Let $P'$ be any of these phases. Observe that the request sequence of $P'$ is $\rho[i,j]$ for some $i,j$. Let $\mathcal{C}$ be a solution to $\rho$. Then we call the solution $\mathcal{C}[i,j]$ to $\rho[i,j]$ the \textit{restriction} of $\mathcal{C}$ to $P'$.
\end{definition}

\begin{definition}\label{def_l_active}
A sequence $\mathcal{C}=[C_0,C_1,\ldots,C_m]$ of configurations is said to be $\ell$-\textit{active} for $\ell\in\{1,\ldots,k\}$ if for all $\ell'\in\{\ell+1,\ldots,k\}$, we have $C_0[\ell']=C_1[\ell']=\cdots=C_m[\ell']$ (that is, no server heavier than the $\ell$'th lightest server moves).
\end{definition}

Our plan is to inductively prove a lower bound on the cost of a solution to the request sequence of any $(\ell,H)$-phase $P$ for every $\ell$, which finally gives us the desired bound for $(k,\emptyset)$-phases. If the optimum solution ever moves the $\ell$'th lightest server or any heavier server, then it already pays a lot. So assume that the optimum solution moves only the lightest $\ell-1$ servers, that is, it is $(\ell-1)$-active. Then we plan to prove that the solution must pay a sufficient cost on one of the $(\ell-1,H')$-multiphases within $P$ (which could be the explore part or in the exploit part of $P$). We call that multiphase the hard multiphase of $P$. The formal definition of a hard multiphase is as follows.

\begin{definition}\label{def_hard_multiphase}
Let $P$ be an $(\ell,H)$-phase for an arbitrary $\ell\in\{2,\ldots,k\}$ and subset $H$ of at most $k-\ell$ points, having explore part $Q^0$, critical set $S$, and exploit part $\mathcal{Q}=\{Q^p\mid p\in S\}$, where $Q^p$ is an $(\ell-1,H\cup\{p\})$-multiphase. Let $\mathcal{C}=[C_0,\ldots,C_m]$ be an $(\ell-1)$-active solution to the request sequence of $P$ (so that $C_0[\ell]=\cdots=C_m[\ell]$). Then the \textit{hard multiphase} of $P$ for $\mathcal{C}$ is defined to be $Q^0$ if $C_0[\ell]\notin S$, and $Q^{C_0[\ell]}$ if $C_0[\ell]\in S$.
\end{definition}

Our lower bound on the cost of a solution to the request sequence of any $(\ell,H)$-phase is on the condition that the heaviest $k-\ell$ servers are not occupying some crucial points associated with the phase in the beginning itself. If this is not the case, then we say that the solution is contaminated. The formal definition is as follows.

\begin{definition}\label{def_contamination}
Let $P$ be an $(\ell,H)$-phase for an arbitrary $\ell\in\{1,\ldots,k\}$ and subset $H$ of at most $k-\ell$ points. Let $\mathcal{C}=[C_0,\ldots,C_m]$ be an arbitrary $(\ell-1)$-active solution to the request sequence of $P$, and let $\ell'\in\{\ell+1,\ldots,k\}$ (so that $C_0[\ell']=\cdots=C_m[\ell']$). We say that $\mathcal{C}$ is $\ell'$-\textit{contaminated} for $P$ if one of the following conditions is satisfied.
\begin{itemize}
\item $\ell=1$ and $v[C_0[\ell']]=1$, where $v$ is the demand vector of $P$. (This condition is equivalent to $\ell=1$ and the critical set of $P$ being $\{C_0[\ell']\}$. The condition has been stated in such a way that it remains the same in the analysis of the generalized $k$-server too.)
\item $\ell>1$ and, out of the $w_{\ell}/w_{\ell-1}$ phases that make up the hard multiphase of $P$ for $\mathcal{C}$, at least a $2^{-(\ell'-\ell+3)}\cdot(w_{\ell}/w_{\ell-1})$ phases $P'$ are such that the restriction of $\mathcal{C}$ to $P'$ is ($(\ell-2)$-active, and) $\ell'$-contaminated for $P'$.
\end{itemize}
We say that $\mathcal{C}$ is \textit{contaminated} for $P$ if $\mathcal{C}$ is $\ell'$-contaminated for $P$ for some $\ell'\in\{\ell+1,\ldots,k\}$.
\end{definition}

Consider an $(\ell,H)$-phase $P$ and a non-contaminated $(\ell-1)$-active solution to the request sequence of $P$. As stated earlier, we plan to prove a lower bound on the cost of the solution by proving that it already pays a lot while serving the hard multiphase of $P$. This hard multiphase is made up of several $(\ell-1,H')$-phases. To inductively get a lower bound on the cost of the solution on any such subphase, the solution is required to remain non-contaminated for the subphase too. The following lemma gives an upper bound on the number of subphases for which the solution becomes contaminating due to the location of the $\ell$'th lightest server.

\begin{restatable}{lemma}{lemlcont}\label{lem_l_cont}
Let $P$ be an $(\ell,H)$-phase for an arbitrary $\ell\in\{2,\ldots,k\}$ and subset $H$ of at most $k-\ell$ points. Let $\mathcal{C}=[C_0,\ldots,C_m]$ be an arbitrary $(\ell-1)$-active solution to the request sequence of $P$. Then, out of the $w_{\ell}/w_{\ell-1}$ phases that make up the \textit{hard multiphase} of $P$ for $\mathcal{C}$, the number of phases $P'$ such that the restriction of $\mathcal{C}$ to $P'$ is $\ell$-contaminated for $P'$, is at most $w_{\ell}/(8w_{\ell-1})$.
\end{restatable}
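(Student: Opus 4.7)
The plan is to prove Lemma~\ref{lem_l_cont} by first establishing a stronger auxiliary claim via induction on the level of the phase, and then reducing the lemma to this claim via a case analysis on the starting position of the $\ell$-th server. The auxiliary claim I would prove is the following: for every $m\in\{1,\ldots,k-1\}$ and every $L\in\{m+1,\ldots,k\}$, if $P'$ is an $(m,H')$-phase and $\mathcal{C}'$ is an $(m-1)$-active solution to its request sequence that is $L$-contaminated for $P'$, then the demand vector $v'$ of $P'$ satisfies $v'[C'_0[L]]\geq\alpha_m\cdot|v'|$, where $\alpha_1=1$ and $\alpha_m=\alpha_{m-1}/(d_m\cdot 2^{L-m+3})$ for $m\geq 2$. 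Morally, this quantifies the intuitive idea that contamination can only occur if the heavier server sits on a point that carries significant demand mass.

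The base case $m=1$ is immediate from the $\ell=1$ clause of Definition~\ref{def_contamination}, which directly gives $v'[C'_0[L]]=1=|v'|$. For the inductive step $m\geq 2$, I would expand the definition of $L$-contamination to extract at least $2^{-(L-m+3)}(w_m/w_{m-1})$ subphases $P''$ of the hard multiphase $Q_{hm}$ of $P'$ whose restrictions are $L$-contaminated for $P''$; note that $L$-contamination itself ensures each such restriction is $(m-2)$-active, so the inductive hypothesis applies and yields $v''[C'_0[L]]\geq\alpha_{m-1}|v''|$. Summing over these subphases and invoking Lemma~\ref{lem_eq_size} to fix the norms $|v''|$ and $|v_{hm}|$ yields $v_{hm}[C'_0[L]]\geq 2^{-(L-m+3)}\alpha_{m-1}|v_{hm}|$. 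Finally, writing $v'=v^0+\sum_{p\in S'}v^p$ and observing that $v_{hm}$ is one of these nonnegative summands gives $v'\geq v_{hm}$ coordinatewise, while $|v'|=(1+|S'|)|v_{hm}|=d_m|v_{hm}|$ because the critical set of $P'$ has size exactly $d_m-1$ and all the summands have equal $1$-norms by Lemma~\ref{lem_eq_size}. Combining these two facts gives $v'[C'_0[L]]\geq\alpha_m|v'|$.

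For the lemma proper, I would apply the auxiliary claim with $m=\ell-1$ and $L=\ell$. Let $p^*:=C_0[\ell]$, which is constant throughout $P$ because $\mathcal{C}$ is $(\ell-1)$-active, and split on whether $p^*$ lies in the critical set $S$ of $P$. If $p^*\in S$, the hard multiphase of $P$ is $Q^{p^*}$, an $(\ell-1,H\cup\{p^*\})$-multiphase whose every subphase demand vector $v_j$ satisfies $v_j[p^*]=0$ by Observation~\ref{obs_demand_H}; the claim then forbids any subphase from being contaminated, since that would force $v_j[p^*]\geq\alpha_{\ell-1}|v_j|>0$, so the count is zero. If $p^*\notin S$, the hard multiphase is $Q^0$ with demand vector $v^0$, and $S=\text{top}_{d_\ell-1}(v^0)$ together with $p^*\notin S$ yields $v^0[p^*]\leq|v^0|/d_\ell$ by pigeonhole. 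Each contaminated subphase $P'_j$ contributes at least $\alpha_{\ell-1}|v_j|$ to $\sum_j v_j[p^*]=v^0[p^*]$, so after substituting the norms via Lemma~\ref{lem_eq_size} the number of contaminated subphases is at most $(w_\ell/w_{\ell-1})/(d_\ell\alpha_{\ell-1})$.

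The main obstacle I anticipate is the arithmetic verification that $d_\ell\cdot\alpha_{\ell-1}\geq 8$ for every $\ell\geq 2$, which is precisely what converts the above estimate into the stated $w_\ell/(8w_{\ell-1})$. Plugging in $d_i=2^{5^{i-1}-1}$, the factor $d_\ell/\prod_{i=2}^{\ell-1}d_i$ has base-$2$ logarithm of order $\Theta(5^{\ell-1})$, which dwarfs the quadratic-in-$\ell$ penalty $\sum_{i=2}^{\ell-1}(\ell-i+3)$ accumulated from the $2^{-(L-m+3)}$ terms in the recursion defining $\alpha$. Hence $d_\ell\alpha_{\ell-1}$ grows like $2^{\Theta(5^{\ell-1})}$, easily exceeding $8$ for all $\ell\geq 3$, while the boundary case $\ell=2$ reduces to $d_2\cdot\alpha_1=16\geq 8$. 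Beyond this computation, the only subtle point to verify is that each recursive invocation of $L$-contamination implicitly guarantees the required activity level of the restricted solution, which the definition already builds in.
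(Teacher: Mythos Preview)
Your proposal is correct and follows essentially the same route as the paper. Your auxiliary claim is precisely the paper's Lemma~\ref{lem_cont_lbd} (with $\alpha_m = 1/(d_m F_{m,L})$ in the paper's notation), your case split on whether $p^*\in S$ mirrors the paper's, and your numerical requirement $d_\ell\alpha_{\ell-1}\geq 8$ is equivalent to the paper's Claim~\ref{claim_df} ($d_\ell\geq 2F_{\ell,\ell}$); the only cosmetic difference is that you bound the count directly via the pigeonhole inequality $v^0[p^*]\leq|v^0|/d_\ell$, whereas the paper reaches the same conclusion by contradiction.
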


The proof of the above lemma is deferred to Appendix~\ref{appendix_opt}. Using this lemma, we prove the following lemma which gives a lower bound on the cost of a non-contaminating solution to the request sequence of an $(\ell,H)$-phase.

\begin{lemma}\label{lem_opt_rec}
Let $P$ be an $(\ell,H)$-phase for an arbitrary $\ell\in\{2,\ldots,k\}$ and subset $H$ of at most $k-\ell$ points. Let $\mathcal{C}=[C_0,\ldots,C_m]$ be an arbitrary solution to the request sequence of $P$ such that $\mathcal{C}$ is not contaminated for $P$. Then the cost of $\mathcal{C}$ is at least $w_{\ell}/2^{\ell}$.
\end{lemma}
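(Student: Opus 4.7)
The plan is to first dispose of the easy case where $\mathcal{C}$ fails to be $(\ell-1)$-active, then induct on $\ell$. If $\mathcal{C}$ is not $(\ell-1)$-active, some server of weight $w_{\ell''} \geq w_\ell$ moves, so the cost is already at least $w_\ell \geq w_\ell/2^\ell$. Henceforth assume $\mathcal{C}$ is $(\ell-1)$-active; then the hard multiphase $Q^{\ast}$ of $P$ for $\mathcal{C}$ is well-defined by Definition~\ref{def_hard_multiphase} and consists of $N := w_\ell/w_{\ell-1}$ subphases $P'_1,\ldots,P'_N$, each an $(\ell-1, H')$-phase for some $H'$. Call $P'_i$ \emph{bad} if the restriction of $\mathcal{C}$ to $P'_i$ is $\ell''$-contaminated for some $\ell'' \in \{\ell,\ldots,k\}$, and \emph{good} otherwise. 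Lemma~\ref{lem_l_cont} bounds the number of restrictions that are $\ell$-contaminated by $N/8$, and for every $\ell' \in \{\ell+1,\ldots,k\}$, non-$\ell'$-contamination of $\mathcal{C}$ for $P$ yields strictly fewer than $2^{-(\ell'-\ell+3)} N$ restrictions that are $\ell'$-contaminated; the geometric tail over $\ell' \geq \ell+1$ sums to strictly less than $N/8$. Thus fewer than $N/4$ subphases are bad and strictly more than $3N/4$ are good.

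For the inductive step ($\ell \geq 3$), each good $P'_i$ is an $(\ell-1, H')$-phase whose restriction is not contaminated for $P'_i$ (either because it is $(\ell-2)$-active and non-contaminated for $P'_i$, or vacuously because it is not $(\ell-2)$-active at all), so the induction hypothesis yields cost at least $w_{\ell-1}/2^{\ell-1}$ on each good restriction. Summing over the $> 3N/4$ good subphases gives total cost at least $(3N/4)\cdot w_{\ell-1}/2^{\ell-1} = 3 w_\ell/2^{\ell+1} > w_\ell/2^\ell$, as required.

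The base case $\ell = 2$ needs a direct combinatorial argument since the subphases are $(1, H')$-phases, for which no recursive cost bound is available. Writing $p_i$ for the critical point of $P'_i$, a structural fact I verify from Definition~\ref{def_phase} is that consecutive critical points differ: if $p_{i+1} = p_i$, then the first symbol of $\rho_{i+1}$ lies in $H' \cup \{p_i\}$, which would extend $\rho_i$ to a longer $(1,H')$-phase and contradict the maximality in the multiphase definition. For each good $P'_i$, non-contamination at every $\ell'' \geq 2$ gives $C_0[\ell''] \neq p_i$, and since $\mathcal{C}$ is $1$-active no server heavier than $s_1$ ever occupies $p_i$; hence $s_1$ must visit $p_i$ during $P'_i$. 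Partitioning the subphases into $N/2$ consecutive pairs, fewer than $N/4$ pairs contain a bad subphase, so more than $N/4$ pairs are all-good; in each such pair $s_1$ must move between the distinct points $p_{2j-1} \neq p_{2j}$, paying $w_1$. The total is at least $(N/4)\cdot w_1 = w_2/4 = w_2/2^2$, matching the claimed bound.

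The main obstacle I anticipate is the definitional book-keeping around Definition~\ref{def_contamination}, which is phrased only for $(\ell''-1)$-active restrictions: I need a consistent convention under which restrictions that fail $(\ell-2)$-activeness are either absorbed by the trivial $w_{\ell-1}$ cost or treated as vacuously non-contaminated, without miscounting ``bad'' versus ``good.'' A smaller but essential technical point is to keep all the counting inequalities strict, so that the $N/4$ threshold the base-case pairing argument rests on is actually achieved rather than missed by one.
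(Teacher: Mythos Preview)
Your approach is essentially identical to the paper's: both dispose of the non-$(\ell-1)$-active case trivially, then pass to the hard multiphase, combine Lemma~\ref{lem_l_cont} with the contrapositive of Definition~\ref{def_contamination} to bound the number of contaminated sub-phases by $N/4$, and induct on the remaining $\geq 3N/4$ good ones; the base case $\ell=2$ uses $p_i\neq p_{i+1}$ to force movements of the lightest server.

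The one place your write-up diverges from the paper, and where a step fails as written, is the base-case counting. You partition into $N/2$ \emph{disjoint} consecutive pairs, implicitly assuming $N=w_2/w_1$ is even; the paper only assumes $N\geq 2$ is an integer. For odd $N$ (say $N=5$) you have only $\lfloor N/2\rfloor=2$ disjoint pairs and up to one bad sub-phase may land inside a pair, leaving possibly a single all-good pair and hence cost $\geq w_1 < 5w_1/4 = w_2/4$---so your ``more than $N/4$ all-good pairs'' claim does not go through. The paper instead counts \emph{overlapping} adjacent good pairs $(P_i,P_{i+1})$: each bad phase kills at most two such pairs, so with at most $N/4$ bad phases one retains at least $(N-1)-2\cdot(N/4)$ good adjacent pairs, and integrality of the bad count rescues the small-$N$ cases. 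Swapping your disjoint-pair count for the adjacent-pair count fixes the argument; everything else (including your handling of the $(\ell-2)$-activeness bookkeeping) is correct and matches the paper.
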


\begin{proof}
If the solution $\mathcal{C}$ is not an $(\ell-1)$-active solution, then it must move some server other than the $\ell-1$ lightest servers, so its cost is at least $w_{\ell}$. Therefore, let us assume that $\mathcal{C}$ is $(\ell-1)$-active. We prove that the solution must already pay at least $w_{\ell}/2^{\ell}$ while serving requests in the hard multiphase of $P$ for $\mathcal{C}$, say $Q$. Let $P_1,\ldots,P_{w_{\ell}/w_{\ell-1}}$ be the sequence of $(\ell-1,H')$ phases that makes up $Q$. By Lemma~\ref{lem_l_cont}, among $P_1,\ldots,P_{w_{\ell}/w_{\ell-1}}$, at most $w_{\ell}/(8w_{\ell-1})$ phases are such that the restriction of $\mathcal{C}$ to them is $\ell$-contaminated.  Moreover, we are given that for every $\ell'\in\{\ell+1,\ldots,k\}$, $\mathcal{C}$ is not $\ell'$-contaminated for $P$. By Definition~\ref{def_contamination}, at most $w_{\ell}/(2^{\ell'-\ell+3}\cdot w_{\ell-1})$ of the phases $P_1,\ldots,P_{w_{\ell}/w_{\ell-1}}$ are such that the restriction of $\mathcal{C}$ to them is $\ell'$-contaminated. Summing over $\ell'\in\{\ell,\ldots,k\}$, we get that, among $P_1,\ldots,P_{w_{\ell}/w_{\ell-1}}$, at most $w_{\ell}/(4w_{\ell-1})$ phases are such that the restriction of $\mathcal{C}$ to them is $\ell'$-contaminated for some $\ell' \in \{\ell,\ldots,k\}$. So, among $P_1,\ldots,P_{w_{\ell}/w_{\ell-1}}$, at least $(3w_{\ell})/(4w_{\ell-1})$ phases are such that the restriction of $\mathcal{C}$ to them is not contaminated for them. Call such phases \textit{costly}.

We prove the claim by induction on $\ell$. Suppose $\ell = 2$. Recall that, by the definition of a $(1,H')$-multiphase, each $P_i$ is a $(1,H')$-phase, and therefore, its critical set is $\{p_i\}$ for some point $p_i\notin H'$. Let $\rho_i$ denote the request sequence of $P_i$. Then $\rho_i$ is a string over the set $H'\cup\{p_i\}$ with at least one occurrence of $p_i$. Moreover, for each $i$, $p_i\neq p_{i+1}$, otherwise $\rho_i$ will not be the longest prefix of $\rho_i\rho_{i+1}\cdots\rho_{w_2/w_1}$ which is the request sequence of a $(1,H')$-phase; $\rho_i\rho_{i+1}$ would qualify to be a longer such prefix. By definition of contamination, for each costly phase $P_i$, none of the servers except the lightest server can serve requests to the critical point $p_i$, which means all such requests must be served by the lightest server. By a simple counting argument, among the costly phases, there are at least $w_2/(4w_1)$ pairs of consecutive phases. For each such pair $(P_i,P_{i+1})$, the cheapest server must move from $p_i$ to $p_{i+1}$, incurring a cost of $w_1$. Thus the total cost incurred by the solution is at least $w_2/(4w_1) \times w_1 = w_2/4$.

Next, let us consider the inductive case, where $\ell>2$. Suppose the claim holds for $\ell-1$. By induction hypothesis, in each of the at least $(3w_{\ell})/(4w_{\ell-1})$ costly phases in $Q$, any solution must incur a cost of $w_{\ell-1}/2^{\ell-1}$. Thus, the cost incurred by any solution is at least $((3w_{\ell})/(4w_{\ell-1}))\times (w_{\ell-1}/2^{\ell-1}) = (3w_{\ell})/2^{\ell+1} \geq w_{\ell}/2^{\ell}$.
\end{proof}

\begin{corollary}\label{cor_opt}
Let $P$ be an arbitrary $(k,\emptyset)$-phase. Then the cost of every solution to the request sequence of $P$ is at least $w_k/2^k$.
\end{corollary}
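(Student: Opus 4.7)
The plan is to obtain this corollary as an almost immediate specialization of Lemma~\ref{lem_opt_rec} at $\ell=k$ and $H=\emptyset$. The lemma tells us that any solution $\mathcal{C}$ to the request sequence of an $(\ell,H)$-phase $P$ that is not contaminated for $P$ has cost at least $w_{\ell}/2^{\ell}$. With $\ell=k$, this already gives the desired bound of $w_k/2^k$, provided we can verify the non-contamination hypothesis for \emph{every} solution $\mathcal{C}$ to the request sequence of the $(k,\emptyset)$-phase $P$.

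The one thing to check is therefore the hypothesis itself. By Definition~\ref{def_contamination}, $\mathcal{C}$ is said to be contaminated for $P$ precisely when there exists some $\ell'\in\{\ell+1,\ldots,k\}$ such that $\mathcal{C}$ is $\ell'$-contaminated for $P$. When $\ell=k$, the index set $\{k+1,\ldots,k\}$ is empty, so this existential condition is vacuously false. Hence \emph{every} solution to the request sequence of a $(k,\emptyset)$-phase is automatically non-contaminated in the sense required by Lemma~\ref{lem_opt_rec}; there is no need to reason about the internal structure of the solution, nor about the locations of heavier servers, because there simply are no heavier servers when $\ell=k$.

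Putting the two pieces together: given an arbitrary $(k,\emptyset)$-phase $P$ and an arbitrary solution $\mathcal{C}$ to its request sequence, the vacuous satisfaction of non-contamination allows us to invoke Lemma~\ref{lem_opt_rec} with parameters $\ell=k$ and $H=\emptyset$, yielding that the cost of $\mathcal{C}$ is at least $w_k/2^k$. Since $\mathcal{C}$ was arbitrary, this also gives the lower bound on $\text{OPT}$ of the request sequence of $P$. There is no genuine obstacle in the derivation; the entire content of the corollary is packed into the inductive statement of Lemma~\ref{lem_opt_rec}, and the only role of this corollary is to observe that the topmost call strips away the contamination hypothesis for free.
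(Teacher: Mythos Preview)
Your proof is correct and essentially identical to the paper's: both apply Lemma~\ref{lem_opt_rec} with $\ell=k$ and observe that the non-contamination hypothesis is vacuously satisfied because the index set $\{k+1,\ldots,k\}$ is empty. The paper compresses this into a single sentence, but the content is the same.
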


\begin{proof}
Follows from Lemma~\ref{lem_opt_rec} because the non-contamination condition required for $\mathcal{C}$ is vacuously true for $\ell=k$.
\end{proof}

\section{Recursive Construction and Analysis of Strategies}\label{sec_strategy}

In this section, we show inductively that the request sequence of an $(\ell,H)$-phase (resp.\ $(\ell,H)$-multiphase) can be served by a randomized online algorithm with expected cost at most $c_{\ell}\cdot w_{\ell}$ (resp.\ $c_{\ell}\cdot w_{\ell+1}$) for an appropriately defined constant $c_{\ell}$, provided the algorithm's heaviest $k-\ell$ servers cover all points in $H$ initially. We call such an algorithm an $\ell$-phase strategy (resp.\ $\ell$-multiphase strategy). In particular, a $k$-phase strategy serves the request sequence of a $(k,\emptyset)$-phase, starting from an arbitrary server configuration.

We begin by defining the constant $c_{\ell}$ for every $\ell$, followed by the rigorous definition of $\ell$-(multi)phase strategy. We use  $h:\mathbb{N}\cup\{0\}\longrightarrow\mathbb{R}$ to denote the \textit{harmonic function}, defined as $h(n)=1+1/2+\cdots+1/n$.

\begin{definition}\label{def_c}
The sequence $c_1,c_2,\ldots$ of constants is defined by the recurrence $c_1=1$ and for all $\ell>1$, $c_{\ell}=(1+h(d_{\ell}-1))\cdot c_{\ell-1}+2h(d_{\ell}-1)$.
\end{definition}

Unrolling the above recurrence, it is easy to see that $c_{\ell}$ given by the following observation.

\begin{observation}\label{obs_c}
$c_{\ell}=3\cdot\prod_{\ell'=2}^{\ell}(1+h(d_{\ell'}-1))-2$.
\end{observation}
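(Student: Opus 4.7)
The plan is to prove the closed form by unrolling the given linear recurrence. The main trick is to eliminate the additive $2h(d_\ell-1)$ term by introducing an affine shift, after which the recurrence becomes purely multiplicative and the product formula reads off immediately.

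Concretely, I would set $a_\ell=1+h(d_\ell-1)$, so the recurrence rewrites as
\[c_\ell \;=\; a_\ell\cdot c_{\ell-1}+2(a_\ell-1)\;=\;a_\ell\cdot c_{\ell-1}+2a_\ell-2.\]
Adding $2$ to both sides collapses the inhomogeneous term:
\[c_\ell+2 \;=\; a_\ell\cdot c_{\ell-1}+2a_\ell \;=\; a_\ell\cdot(c_{\ell-1}+2).\]
So if I define $b_\ell:=c_\ell+2$, then $b_\ell=a_\ell\cdot b_{\ell-1}$, a purely multiplicative recurrence, with base value $b_1=c_1+2=3$.

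Next, I would verify the closed form by a one-line induction on $\ell$. The base case is $\ell=1$: the right-hand side of the claimed formula is $3\cdot\prod_{\ell'=2}^{1}(1+h(d_{\ell'}-1))-2=3-2=1=c_1$, using the convention that an empty product equals $1$. For the inductive step, assuming the formula for $\ell-1$, I would compute
\[b_\ell \;=\; a_\ell\cdot b_{\ell-1} \;=\; (1+h(d_\ell-1))\cdot 3\prod_{\ell'=2}^{\ell-1}(1+h(d_{\ell'}-1)) \;=\; 3\prod_{\ell'=2}^{\ell}(1+h(d_{\ell'}-1)),\]
and then subtract $2$ to recover $c_\ell$.

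I do not anticipate any genuine obstacle here: the observation is a mechanical unrolling and the only mild subtlety is spotting the affine shift $c_\ell\mapsto c_\ell+2$ that makes the recurrence homogeneous. The conventions for the empty product at $\ell=1$ and the matching of $b_1=3$ with the factor $3$ in the statement are the only bookkeeping points worth stating explicitly.
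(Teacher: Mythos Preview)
Your proposal is correct and is exactly the kind of ``unrolling'' the paper has in mind; the paper does not spell out a proof beyond remarking that the formula follows by unrolling the recurrence. Your affine shift $c_\ell\mapsto c_\ell+2$ is the natural way to make that unrolling precise, and the base case with the empty-product convention matches.
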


\begin{definition}\label{def_alg}
For $\ell\in\{1,\ldots,k\}$ (resp.\ for $\ell\in\{1,\ldots,k-1\}$), an \textit{$\ell$-phase strategy} (resp.\ \textit{$\ell$-multiphase strategy}) is a randomized online algorithm $\mathcal{A}_{\ell}$ (resp.\ $\mathcal{A}'_{\ell}$) that satisfies the following specifications.
\begin{enumerate}
\item $\mathcal{A}_{\ell}$ (resp.\ $\mathcal{A}'_{\ell}$) takes as initial input a set $H$ of at most $k-\ell$ points and an initial configuration $C_0$ such that $H\subseteq\{C_0[\ell+1],\ldots,C_0[k]\}$.
\item $\mathcal{A}_{\ell}$ (resp.\ $\mathcal{A}'_{\ell}$) takes as online input an arbitrary sequence $\rho$ of requests.
\item $\mathcal{A}_{\ell}$ (resp.\ $\mathcal{A}'_{\ell}$) serves requests by moving only the lightest $\ell$ servers. 
\item If $\rho$ is not the request sequence of any $(\ell,H)$-phase (resp.\ $(\ell,H)$-multiphase) but $\rho$ has the request sequence of some $(\ell,H)$-phase (resp.\ $(\ell,H)$-multiphase) as a prefix, then
\begin{itemize}
\item $\mathcal{A}_{\ell}$ (resp.\ $\mathcal{A}'_{\ell}$) serves the longest such prefix, say $\rho^*$, and terminates.
\item Let $P$ denote the unique $(\ell,H)$-phase (resp.\ $(\ell,H)$-multiphase) whose request sequence is $\rho^*$ (uniqueness is guaranteed by Lemma~\ref{lem_unambiguous}). Upon termination, $\mathcal{A}_{\ell}$ (resp.\ $\mathcal{A}'_{\ell}$) returns the demand vector of $P$.
\end{itemize}
\item Else, $\mathcal{A}_{\ell}$ (resp.\ $\mathcal{A}'_{\ell}$) serves the whole of $\rho$ and waits for more requests.
\item The expected cost of $\mathcal{A}_{\ell}$ (resp.\ $\mathcal{A}'_{\ell}$) is at most $c_{\ell}\cdot w_{\ell}$ (resp.\ $c_{\ell}\cdot w_{\ell+1}$).
\end{enumerate}
\end{definition}

Note that the above specification states that, even though an $\ell$-(multi)phase strategy is a randomized algorithm, the prefix of a given request sequence served by the $\ell$-(multi)phase strategy is deterministic.

Our goal is to show the existence of a $k$-phase strategy. We construct such a strategy inductively using the next three lemmas.

\begin{restatable}{lemma}{lemalgbase}\label{lem_alg_base}
A $1$-phase strategy exists.
\end{restatable}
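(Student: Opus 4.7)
The plan is to exhibit an explicit deterministic strategy $\mathcal{A}_1$ (no randomness is required in the base case) that maintains a single internal variable $p$, initialised as undefined, representing the distinguished non-$H$ point of the ongoing $(1,H)$-phase. On a request $r$, the strategy acts as follows: if $r\in H$, do nothing, since $H\subseteq\{C_0[2],\ldots,C_0[k]\}$ and the strategy never moves a heavier server (specification~3), so some heavier server still occupies $r$; if $r\notin H$ and $p$ is undefined, set $p\leftarrow r$ and move the lightest server to $r$ unless some server is already located there; if $r\notin H$ and $r=p$, do nothing, since a server already sits at $p$ by construction; and if $r\notin H$ and $r\neq p$, do not serve $r$ but instead terminate, returning the vector $u_p$.

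Specifications~1--3 are immediate from the description. For specification~4, suppose $\rho$ admits some $(1,H)$-phase prefix but is not itself a $(1,H)$-phase; then $\rho$ must contain at least two distinct points outside $H$, and letting $p$ and $q$ be the first and second such (by order of first appearance), the longest $(1,H)$-phase prefix of $\rho$ is precisely the portion strictly preceding the first occurrence of $q$, and its demand vector is $u_p$. This is exactly the string that $\mathcal{A}_1$ serves before halting, and the value it returns is $u_p$; the corresponding $(1,H)$-phase is unique by Lemma~\ref{lem_unambiguous}. Specification~5 holds vacuously: if no second distinct non-$H$ point is ever encountered, the termination clause is never triggered, so the strategy serves every request and waits for more.

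For specification~6, observe that the lightest server moves at most once during the entire execution, namely the single move performed when $p$ is first set (and only if no server was already at $r$); every subsequent request is either in $H$ or equal to $p$, neither of which forces a move. Hence the total cost is at most $w_1 = c_1\cdot w_1$, and because the algorithm is deterministic, the expected cost equals the actual cost. This is a genuine base case and presents no substantive obstacle; the only mild check required is matching the informal notion of ``everything before the next distinct non-$H$ point'' with the syntactic definition of a $(1,H)$-phase and its demand vector $u_p$, which is transparent from Definition~\ref{def_phase}.
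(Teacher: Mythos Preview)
Your proposal is correct and takes essentially the same approach as the paper: both describe the same deterministic algorithm that waits for the first non-$H$ request $p$, moves the lightest server there, and terminates upon seeing a second distinct non-$H$ point, returning $u_p$. Your verification of the six specifications is in fact more explicit than the paper's, which simply asserts that they are ``easy to see.''
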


The design of a $1$-phase strategy is obvious from the definition of $(1,H)$-phase. The construction and the proof of the above lemma is deferred to Appendix~\ref{appendix_strategy}.

\begin{restatable}{lemma}{lemrecmultiphase}\label{lem_rec_multiphase}
For every $\ell\in\{1,\ldots,k-1\}$, if an $\ell$-phase strategy exists, then an $\ell$-multiphase strategy exists.
\end{restatable}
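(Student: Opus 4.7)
The plan is to construct an $\ell$-multiphase strategy $\mathcal{A}'_{\ell}$ by invoking the given $\ell$-phase strategy $\mathcal{A}_{\ell}$ sequentially, up to $w_{\ell+1}/w_{\ell}$ times, mirroring the definition of an $(\ell,H)$-multiphase as a concatenation of that many $(\ell,H)$-phases. Concretely, on input $(H,C_0)$, $\mathcal{A}'_{\ell}$ starts an instance of $\mathcal{A}_{\ell}$ with the same input and forwards every online request to it. When an instance terminates, $\mathcal{A}'_{\ell}$ reads the returned demand vector $v_i$, adds it to a running sum $v$, and, if fewer than $w_{\ell+1}/w_{\ell}$ instances have terminated so far, starts a fresh instance of $\mathcal{A}_{\ell}$ with input $H$ and the current configuration. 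Once the $(w_{\ell+1}/w_{\ell})$-th instance terminates, $\mathcal{A}'_{\ell}$ itself terminates and returns $v$.

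I would then verify each requirement of Definition~\ref{def_alg}. The input condition $H\subseteq\{C[\ell+1],\ldots,C[k]\}$ is preserved across restarts because specification (3) forces $\mathcal{A}_{\ell}$ to leave the heaviest $k-\ell$ servers in place. The key correctness step is to argue that the requests consumed by the $i$-th invocation of $\mathcal{A}_{\ell}$ are exactly $\rho_i$, the request sequence of the $i$-th phase $P_i$ of the (unique) multiphase structure embedded in the input: by Definition~\ref{def_phase}, $\rho_i$ is the longest prefix of what remains that is the request sequence of some $(\ell,H)$-phase, which is precisely what $\mathcal{A}_{\ell}$ consumes before terminating by specification (4), with Lemma~\ref{lem_unambiguous} guaranteeing that this prefix is unambiguous. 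It follows that after $w_{\ell+1}/w_{\ell}$ instances terminate, $\mathcal{A}'_{\ell}$ has consumed exactly the multiphase request sequence $\rho_1\cdots\rho_{w_{\ell+1}/w_{\ell}}$ and returns the demand vector $v=\sum_i v_i$, as required by Definition~\ref{def_phase}.

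The cost bound is immediate from linearity of expectation: at most $w_{\ell+1}/w_{\ell}$ instances of $\mathcal{A}_{\ell}$ ever run, each incurring expected cost at most $c_{\ell}\cdot w_{\ell}$ by hypothesis, yielding a total of at most $c_{\ell}\cdot w_{\ell+1}$. I do not anticipate a substantial technical obstacle here—the work in this lemma is essentially bookkeeping. The only subtle points are (i) ensuring that when the current $\mathcal{A}_{\ell}$ instance is still awaiting requests at the end of a finite input, $\mathcal{A}'_{\ell}$ correctly remains in the waiting state (matching specification (5) rather than (4)), and (ii) confirming that the request that triggers termination of the final $\mathcal{A}_{\ell}$ instance is the same request that triggers termination of $\mathcal{A}'_{\ell}$. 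Both follow directly from the termination semantics of $\mathcal{A}_{\ell}$ given in specification (4), since that instance terminates precisely when the input first strictly extends the request sequence of $P_{w_{\ell+1}/w_{\ell}}$, which is exactly when the overall input first strictly extends the multiphase request sequence.
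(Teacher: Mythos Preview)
Your proposal is correct and follows essentially the same approach as the paper: construct $\mathcal{A}'_\ell$ by chaining $w_{\ell+1}/w_\ell$ invocations of $\mathcal{A}_\ell$, sum the returned demand vectors, and bound the expected cost by linearity. The one detail you gloss over is that Definition~\ref{def_phase} only asserts $\rho_i$ is the longest $(\ell,H)$-phase prefix \emph{within the multiphase}, not within the full remaining online input; the paper closes this gap via Lemma~\ref{lem_prefix}, which guarantees that any strictly longer phase-prefix in the full input would already yield a strictly longer one inside the multiphase, contradicting maximality.
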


This design is straightforward too: given an $\ell$-phase strategy $\mathcal{A}_{\ell}$, the required $\ell$-multiphase strategy simply calls $\mathcal{A}_{\ell}$ $w_{\ell+1}/w_{\ell}$ times, adds up the vectors returned by these calls, and returns the result. The rigorous construction and proof of correctness are deferred to Appendix~\ref{appendix_strategy}.

\begin{restatable}{lemma}{lemrecphase}\label{lem_rec_phase}
For every $\ell\in\{2,\ldots,k\}$, if an $(\ell-1)$-multiphase strategy exists, then an $\ell$-phase strategy exists.
\end{restatable}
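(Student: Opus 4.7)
The plan is to build the $\ell$-phase strategy $\mathcal{A}_\ell$ by invoking the given $(\ell-1)$-multiphase strategy $\mathcal{A}'_{\ell-1}$ once for the explore part of an $(\ell,H)$-phase, and then $|S|=d_\ell-1$ times, virtually and in parallel, for its exploit part. Concretely, on initial input $(H,C_0)$, the algorithm first runs $\mathcal{A}'_{\ell-1}$ with input $(H,C_0)$ on the incoming stream; by Definition~\ref{def_alg} this subroutine serves precisely the longest prefix $\rho^0$ that is the request sequence of some $(\ell-1,H)$-multiphase and returns its demand vector $v^0$, from which the strategy reads off the critical set $S=\mathrm{top}_{d_\ell-1}(v^0)$. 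For the exploit part, for every $p\in S$ I spawn a virtual instance of $\mathcal{A}'_{\ell-1}$ on input $(H\cup\{p\},C^p)$, where $C^p$ agrees with the current actual configuration except that $C^p[\ell]=p$, and feed every upcoming request to all of them in parallel. In reality, the strategy maintains the set $T$ of virtual instances that have not yet terminated, picks $p_1\in T$ uniformly at random, moves the $\ell$-th server to $p_1$, moves each of the lightest $\ell-1$ servers to the position it occupies in the $p_1$-th instance's current virtual configuration, and thereafter mirrors the $p_1$-th instance's moves on the actual light servers. When the $p_1$-th instance terminates, the strategy drops from $T$ every instance that has also just terminated, picks a fresh $p\in T$ uniformly at random, and repeats. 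When $T$ becomes empty, $\mathcal{A}_\ell$ terminates and returns $v^0+\sum_{p\in S}v^p$, where $v^p$ is the demand vector returned by the $p$-th virtual instance.

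For correctness, the plan is to verify via Definition~\ref{def_phase} that the served prefix is exactly $\rho^0\rho'$, where $\rho'$ is the longest of the $\rho^p$'s: the set $\{\rho^p:p\in S\}$ is a prefix chain because all $\rho^p$ are prefixes of the same processed stream; $\rho^0$ is the longest $(\ell-1,H)$-multiphase prefix of $\rho^0\rho'$ by the spec of the explore subroutine; and each $\rho^p$ is the longest $(\ell-1,H\cup\{p\})$-multiphase prefix of $\rho'$ by the spec of the $p$-th virtual instance. Lemma~\ref{lem_unambiguous} then yields that the returned vector $v^0+\sum_{p\in S}v^p$ equals the demand vector of the unique $(\ell,H)$-phase with request sequence $\rho^0\rho'$, and the termination rule of Definition~\ref{def_alg} follows from that of the subroutine specs.

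For the cost bound I would split the expected cost into (i) the explore subroutine's cost, at most $c_{\ell-1}w_\ell$; (ii) the cost $w_\ell$ of moving the $\ell$-th server at each switch; (iii) the cost of syncing the lightest $\ell-1$ servers at each switch, at most $\sum_{j=1}^{\ell-1}w_j\le 2w_{\ell-1}\le w_\ell$ using the paper's geometric-growth assumption on the weights; and (iv) the cost of mirroring the currently followed virtual instance's moves between switches. The expected number of switches equals the expected number of left-to-right maxima in a uniformly random permutation of $|S|=d_\ell-1$ elements ordered by their deterministic termination times, which is $h(d_\ell-1)$; hence (ii) and (iii) together contribute at most $2h(d_\ell-1)w_\ell$. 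For (iv), the randomized-min operator argument inspired by Theorem~11 of~\cite{FiatFFKRRV_SIAMJC98} gives the key symmetry that, at every moment $\tau$, the virtual instance currently being followed is uniformly distributed over the still-active set $A(\tau)$; combining this with the per-instance bound $\mathbb{E}[X_p]\le c_{\ell-1}w_\ell$ and the harmonic-sum identity $\sum_{p\in S}1/a_p = h(d_\ell-1)$, where $a_p$ is the rank of $t_p$ from the top of the termination order, yields expected mirroring cost at most $h(d_\ell-1)\,c_{\ell-1}w_\ell$. Summing, the expected total cost is at most $c_{\ell-1}w_\ell+2h(d_\ell-1)w_\ell+h(d_\ell-1)c_{\ell-1}w_\ell = (1+h(d_\ell-1))c_{\ell-1}w_\ell + 2h(d_\ell-1)w_\ell = c_\ell w_\ell$, matching Definitions~\ref{def_c} and~\ref{def_alg}.

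The main obstacle I anticipate is the clean execution of the randomized-min step and its harmonic accounting. One must argue both that the uniform-over-active-instances property genuinely holds at every instant (a symmetry argument showing that the instance being followed at a fixed time $\tau$ is the first element of $A(\tau)$ in the uniformly random permutation $\pi$, hence uniform on $A(\tau)$), and that this symmetry translates to an $h(d_\ell-1)$ factor, rather than a linear-in-$|S|$ factor, in the bound on (iv). A subtler point is that the per-switch sync cost of the lightest $\ell-1$ servers must be charged against $w_\ell$, not against $c_{\ell-1}w_\ell$; this is precisely what the geometric-decay $\sum_{j<\ell}w_j\le 2w_{\ell-1}\le w_\ell$ enables, and failing to exploit it would blow up the recurrence into something super-exponential in $\ell$.
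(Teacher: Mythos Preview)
Your proposal is correct and follows essentially the same approach as the paper: the algorithm you describe is the paper's $\mathcal{A}_\ell$ (your fresh-uniform-choice rule is distributionally equivalent to the paper's sample-a-permutation-upfront rule), and your cost decomposition into explore cost, switching cost bounded by $2w_\ell$ per switch, and mirroring cost bounded via the harmonic sum $\sum_j 1/j = h(d_\ell-1)$ matches Claim~\ref{claim_cost} exactly. The paper makes the independence of the follow-indicator $Z^j$ and the virtual cost $X^j$ explicit (both rely on the fact that the termination times of the virtual instances are deterministic, as noted after Definition~\ref{def_alg}), which is the point you correctly flag as the main obstacle.
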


Let $\mathcal{A}'_{\ell-1}$ be an $(\ell-1)$-multiphase strategy. Consider the following algorithm, which we call $\mathcal{A}_{\ell}$, that takes as initial input a set of points $H$ of size at most $k-\ell$ and an initial configuration $C_0$ such that $H\subseteq\{C_0[\ell+1],\ldots,C_0[k]\}$.
\begin{enumerate}
\item Initialize $\mathcal{A}'_{\ell-1}$ with $H,C_0$.
\item Pass the requests from $\rho$ to $\mathcal{A}'_{\ell-1}$, and let it serve requests until it terminates. (We refer to this as the ``explore step'' in our analysis.)
\item Let $v^0$ be the demand vector returned by $\mathcal{A}'_{\ell-1}$. Set $S$ to be $\text{top}_{d_{\ell}-1}(v^0)$.
\item Sample $\pi:\{1,\ldots,d_{\ell}-1\}\longrightarrow S$, a uniformly random permutation of $S$. Set $f$ to $1$.
\item For each $p\in S$, initialize an instance of $\mathcal{A}'_{\ell-1}$ with $H\cup\{p\},C_0^p$, where $C_0^p$ is the configuration $C_0$ except that $C_0^p[\ell]=p$. Call this instance $\mathcal{A}'_{\ell-1}[p]$.
\item For each next request $r$ in the yet unserved suffix of $\rho$: (We refer to this as the ``exploit step'' in our analysis.)
\begin{enumerate}
\item Pass $r$ to each $\mathcal{A}'_{\ell-1}[p]$ that hasn't yet terminated.
\item For each $\mathcal{A}'_{\ell-1}[p]$ that terminates as a result, save the vector returned by it as $v^p$.
\item If $\mathcal{A}'_{\ell-1}[\pi(f)]$ just terminated, set $f$ to be the smallest number $f'$ in $\{1,\ldots,d_{\ell}-1\}$ such that $\mathcal{A}'_{\ell-1}[\pi(f')]$ hasn't yet terminated. If no such $f'$ exists, terminate and return $v^0+\sum_{p\in S}v^p$.
\item Move servers so that the configuration is the same as the current configuration of $\mathcal{A}'_{\ell-1}[\pi(f)]$. (This ensures that the request $r$ is served.)
\end{enumerate}
\end{enumerate}

Informally, $\mathcal{A}_{\ell}$ works as follows. Using $\mathcal{A}'_{\ell-1}$, it serves the longest prefix of the $\rho$ which is the request sequence of some $(\ell-1,H)$-multiphase $Q^0$. Once this call terminates returning the demand vector $v^0$ of $Q^0$, the critical set $S$ of $Q^0$ is computed. Thereafter, $\mathcal{A}_{\ell}$ exploits parallelism and randomness as follows. It creates $|S|$ copies of $\mathcal{A}'_{\ell-1}$, one for each $p\in S$, initialized with the set $H\cup\{p\}$. $\mathcal{A}_{\ell}$ forwards requests to each of these copies, one request at a time, and observes how each of them serves the request. $\mathcal{A}_{\ell}$, however, follows the decisions of exactly one copy at any time. Specifically, starting from a uniformly random copy of $\mathcal{A}'_{\ell-1}$, $\mathcal{A}_{\ell}$ follows a copy until it terminates, after which, $\mathcal{A}_{\ell}$ again chooses a uniformly random copy of $\mathcal{A}'_{\ell-1}$, among those who haven't yet terminated. This process continues as long as at least one copy of $\mathcal{A}'_{\ell-1}$ is running. Once the last copy terminates, $\mathcal{A}_{\ell}$ also terminates. Note that, except for the copy of $\mathcal{A}'_{\ell-1}$ that $\mathcal{A}_{\ell}$ is following, all copies of $\mathcal{A}'_{\ell-1}$ are merely ``running in imagination'' -- none of the server movements they attempt to perform become a reality. Our goal is to prove that $\mathcal{A}_{\ell}$ is an $\ell$-phase strategy.

Observe that all calls to $\mathcal{A}'_{\ell-1}$ made by $\mathcal{A}_{\ell}$ are initialized with a configuration that has the same positions of the heaviest $k-\ell$ servers as the initial configuration $C_0$. None of these calls move those servers, and $\mathcal{A}_{\ell}$ simply follows the movements of one of them at any point of time. Thus, $\mathcal{A}_{\ell}$ never moves the heaviest $k-\ell$ servers either. The next three claims state that $\mathcal{A}_{\ell}$ satisfies the last three properties in the specification of $(\ell,H)$-phase.

\begin{restatable}{claim}{claimserveone}
Suppose $\rho$ is not the request sequence of any $(\ell,H)$-phase but $\rho$ has the request sequence of some $(\ell,H)$-phase as a prefix. Then
\begin{itemize}
\item $\mathcal{A}_{\ell}$ serves the longest such prefix, say $\rho^*$, and terminates.
\item Let $P$ denote the unique $(\ell,H)$-phase whose request sequence is $\rho^*$. Upon termination, $\mathcal{A}_{\ell}$ returns the demand vector of $P$.
\end{itemize}
\end{restatable}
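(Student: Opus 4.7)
The plan is to trace the execution of $\mathcal{A}_{\ell}$ step by step and match it against the recursive anatomy of an $(\ell,H)$-phase supplied by Definition~\ref{def_phase}, invoking the specification of $\mathcal{A}'_{\ell-1}$ in Definition~\ref{def_alg} at each recursive call. Set up notation: let $\rho^*$ denote the longest $(\ell,H)$-phase request-sequence prefix of $\rho$, let $P=(Q^0,\{Q^p:p\in S\})$ be the unique such phase by Lemma~\ref{lem_unambiguous}, and write $\rho^0$, $\rho^p$, and $\rho'=\max_{p\in S}\rho^p$ (the longest element of the prefix chain) for the corresponding request sequences, so that $\rho^*=\rho^0\rho'$.

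The first step is to verify that the initial invocation of $\mathcal{A}'_{\ell-1}$ on $\rho$ terminates after serving exactly $\rho^0$ and returns the demand vector $v^0$ of $Q^0$. By Definition~\ref{def_alg}, this reduces to two facts: (i) $\rho^0$ is the longest prefix of $\rho$ that is the request sequence of some $(\ell-1,H)$-multiphase, and (ii) $\rho$ itself is not such a request sequence; uniqueness of $Q^0$ given its request sequence is then Lemma~\ref{lem_unambiguous}. The claim that $\rho^0$ is the longest such prefix of $\rho^*$ is exactly condition~2 of Definition~\ref{def_phase}. To lift this to $\rho$, I would argue by contradiction: a strictly longer $(\ell-1,H)$-multiphase prefix of $\rho$ would have to extend past $\rho^*$, and together with appropriate exploit multiphases drawn from the remaining suffix of $\rho$ it could be packaged into a valid $(\ell,H)$-phase whose request sequence is a prefix of $\rho$ strictly longer than $\rho^*$, contradicting the maximality of $\rho^*$. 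The same packaging argument rules out (ii).

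Having identified the explore part, the set $S=\text{top}_{d_\ell-1}(v^0)$ computed by $\mathcal{A}_{\ell}$ coincides by construction with the critical set of $Q^0$, and hence with that of $P$ by Definition~\ref{def_phase}. For each $p\in S$ the parallel instance $\mathcal{A}'_{\ell-1}[p]$ is initialized with $H\cup\{p\}$ and fed the suffix $\rho^0\backslash\rho$. An argument entirely analogous to the explore step, now using condition~3 of Definition~\ref{def_phase} and the prefix-chain property of $\{\rho^p:p\in S\}$, shows that each $\mathcal{A}'_{\ell-1}[p]$ terminates after serving exactly $\rho^p$ and returns $v^p$. Steps 6(a)--(d) of $\mathcal{A}_{\ell}$ merely forward each incoming request to every still-running parallel instance and mirror the server movements of the currently selected one, so $\mathcal{A}_{\ell}$ terminates precisely when the last $\mathcal{A}'_{\ell-1}[p]$ does, which is after $\max_{p\in S}|\rho^p|=|\rho'|$ requests past the explore step. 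Thus $\mathcal{A}_{\ell}$ serves exactly $\rho^0\rho'=\rho^*$ and returns $v^0+\sum_{p\in S}v^p$, which by Definition~\ref{def_phase} is the demand vector of $P$.

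The principal technical obstacle is the maximality argument in the explore step: showing that the longest $(\ell-1,H)$-multiphase prefix of $\rho$ coincides with $\rho^0$ and does not accidentally reach past $\rho^*$ by lengthening the final constituent phase. Dispatching this cleanly via the packaging argument sketched above requires a careful structural induction on $\ell$ together with a delicate interplay between conditions~2 and~3 of Definition~\ref{def_phase}, since the putative exploit multiphases that must be attached to a candidate longer explore part have to live entirely inside the remaining suffix of $\rho$.
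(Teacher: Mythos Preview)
Your overall architecture matches the paper's proof, and the demand-vector bookkeeping and the tracing of the exploit step are correct. The gap is the packaging argument for step (i). Given a hypothetical $(\ell-1,H)$-multiphase prefix $z$ of $\rho$ strictly extending past $\rho^*$, you propose to attach exploit multiphases living in $z\backslash\rho$ and thereby assemble an $(\ell,H)$-phase longer than $\rho^*$. But $z\backslash\rho$ may be too short---indeed empty when $z=\rho$, which is exactly case (ii)---to contain any $(\ell-1,H\cup\{p\})$-multiphase request sequence (Observation~\ref{obs_nonempty}); moreover the critical set of the new explore part need not be $S$, and even when suitable exploit pieces exist, certifying condition~2 for your package (that $z$ is the \emph{longest} $(\ell-1,H)$-multiphase prefix of the new string) is a statement of the same type you are trying to establish. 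The analogous issue recurs on the exploit side for those $p$ with $\rho^p$ a strict prefix of $\rho'$.

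The paper sidesteps packaging via Lemma~\ref{lem_prefix}: whenever two $(\ell-1,H)$-multiphase request sequences are nested, every intermediate prefix is one as well. Applied here, $\rho^*$ (lying strictly between $\rho^0$ and $z$, since $\rho'$ is nonempty) would itself be an $(\ell-1,H)$-multiphase request sequence, contradicting condition~2 directly. The same lemma handles the exploit side for $p$ with $\rho^p\subsetneq\rho'$; only for $p$ with $\rho^p=\rho'$ does the paper use a one-line repackaging to contradict the maximality of $\rho^*$. Lemma~\ref{lem_prefix} is itself proved by structural induction on $\ell$, so your instinct about the proof shape is right---but the correct inductive statement is this continuity property, not the packaging claim.
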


\begin{restatable}{claim}{claimservetwo}
Suppose $\mathcal{A}_{\ell}$ on input $\rho$ terminates. Then $\rho$ is not the request sequence of any $(\ell,H)$-phase but $\rho$ has the request sequence of some $(\ell,H)$-phase as a prefix.
\end{restatable}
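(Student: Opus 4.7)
The plan is to read off, from $\mathcal{A}_{\ell}$'s execution trace at its termination moment, the components of an $(\ell,H)$-phase whose request sequence is a prefix of $\rho$, and then derive a contradiction from the assumption that $\rho$ itself is the request sequence of some $(\ell,H)$-phase.

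First I would set up notation. Since $\mathcal{A}_\ell$ terminated, its explore-phase instance $\mathcal{A}'_{\ell-1}$ (step 2) must have terminated; by the specification of an $(\ell-1)$-multiphase strategy, the prefix $\rho^0$ of $\rho$ that it served is the request sequence of a (unique, by Lemma~\ref{lem_unambiguous}) $(\ell-1,H)$-multiphase $Q^0$, with demand vector $v^0$ and critical set $S=\text{top}_{d_{\ell}-1}(v^0)$. Write $\rho=\rho^0\sigma$; then $\sigma$ is exactly the sequence of requests passed to the exploit-phase instances. Similarly, because $\mathcal{A}_\ell$ terminates only after every $\mathcal{A}'_{\ell-1}[p]$ has terminated, for each $p\in S$ there is a unique $(\ell-1,H\cup\{p\})$-multiphase $Q^p$ whose request sequence $\rho^p$ is the longest $(\ell-1,H\cup\{p\})$-multiphase request sequence that is a prefix of the input received by $\mathcal{A}'_{\ell-1}[p]$. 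The key observation is that the family of prefixes of valid $(\ell-1,H')$-multiphase request sequences is prefix-closed, so once an instance's input leaves this family, no extension re-enters it. Consequently, each $\rho^p$ is also the longest $(\ell-1,H\cup\{p\})$-multiphase request sequence prefix of $\sigma$, and $\rho^0$ is the longest $(\ell-1,H)$-multiphase request sequence prefix of $\rho$ itself.

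Second, I would verify the three conditions of Definition~\ref{def_phase} so that $(Q^0,\{Q^p:p\in S\})$ is a valid $(\ell,H)$-phase. The set $\{\rho^p:p\in S\}$ consists of prefixes of the common string $\sigma$ and is therefore a prefix chain; let $\rho'$ be its longest element. Since extending $\rho^0$ by even the first character of $\sigma$ already leaves the family of $(\ell-1,H)$-multiphase request sequence prefixes (that is why the explore instance terminated), no prefix of $\rho^0\rho'$ longer than $\rho^0$ is a valid $(\ell-1,H)$-multiphase request sequence, giving condition~2. For each $p\in S$, $\rho^p\preceq\rho'\preceq\sigma$, so the longest $(\ell-1,H\cup\{p\})$-multiphase request sequence prefix of $\rho'$ equals that of $\sigma$, which is $\rho^p$; this gives condition~3. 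Condition~1 is immediate from the prefix-chain observation above. Thus the request sequence $\rho^0\rho'$ of this $(\ell,H)$-phase is a prefix of $\rho=\rho^0\sigma$.

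Third, I would argue $\rho$ itself is not an $(\ell,H)$-phase request sequence. Suppose, for contradiction, that it were; Lemma~\ref{lem_unambiguous} then yields a unique $(\ell,H)$-phase $\tilde P=(\tilde Q^0,\{\tilde Q^p:p\in\tilde S\})$ with request sequence $\rho$. Condition~2 of Definition~\ref{def_phase} forces $\tilde\rho^0$ to be the longest $(\ell-1,H)$-multiphase request sequence prefix of $\rho$, so $\tilde\rho^0=\rho^0$, and Lemma~\ref{lem_unambiguous} further gives $\tilde Q^0=Q^0$ and $\tilde S=S$. Consequently $\tilde\rho'=\rho\setminus\tilde\rho^0=\sigma$, and, being the longest element of the prefix chain $\{\tilde\rho^p:p\in S\}$, it must equal $\tilde\rho^{p^*}$ for some $p^*\in S$, making $\sigma$ itself a valid $(\ell-1,H\cup\{p^*\})$-multiphase request sequence. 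But then, by the specification of an $(\ell-1)$-multiphase strategy, $\mathcal{A}'_{\ell-1}[p^*]$ on input $\sigma$ would serve $\sigma$ and wait for more requests rather than terminate, contradicting the fact that $\mathcal{A}_\ell$ has terminated. The only real subtlety is the prefix-closedness/overshoot-propagates observation that pins down why the longest valid-multiphase prefix is preserved between the individual instance termination times and that of $\mathcal{A}_\ell$; once that is in hand, the rest is routine verification against Definition~\ref{def_phase} and two applications of Lemma~\ref{lem_unambiguous}.
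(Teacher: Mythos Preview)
Your proof is correct and follows essentially the same approach as the paper's: read off an $(\ell,H)$-phase from $\mathcal{A}_\ell$'s termination trace, verify the three conditions of Definition~\ref{def_phase}, and then derive the contradiction by exhibiting a $p^*\in S$ whose exploit instance $\mathcal{A}'_{\ell-1}[p^*]$ could not have terminated. The ``prefix-closedness/overshoot-propagates'' subtlety you isolate is exactly the content of Lemma~\ref{lem_prefix}, which the paper also leans on (implicitly) when it asserts that each sub-instance serves the longest valid multiphase prefix of the entire remaining suffix rather than merely of the portion it actually received before terminating.
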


The proofs of both the above claims are deferred to Appendix~\ref{appendix_strategy}.

\begin{claim}\label{claim_cost}
The expected cost the solution output by a run of $\mathcal{A}_{\ell}$ on any input is at most $c_{\ell}\cdot w_{\ell}$.
\end{claim}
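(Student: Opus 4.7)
I plan to split the expected cost of $\mathcal{A}_\ell$ into three pieces: (a)~the explore step, (b)~the ``mimic'' cost during the exploit step (movements paid while actually following some copy), and (c)~the ``switch'' cost during the exploit step (movements paid each time we start following a new copy, with the very first move out of the post-explore configuration counted as the first such transition). Piece (a) contributes at most $c_{\ell-1}\cdot w_\ell$ in expectation directly from Definition~\ref{def_alg} applied to the single instance of $\mathcal{A}'_{\ell-1}$ invoked in the explore step. To match the recurrence in Definition~\ref{def_c}, it will suffice to show (b)~$\leq h(d_\ell-1)\cdot c_{\ell-1}\cdot w_\ell$ and (c)~$\leq 2h(d_\ell-1)\cdot w_\ell$ in expectation.

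The combinatorial core is computing the expected number $N$ of distinct copies $\mathcal{A}_\ell$ ever follows. Termination times of the $|S|=d_\ell-1$ imaginary runs depend only on $\rho$, because by Definition~\ref{def_alg} and Lemma~\ref{lem_unambiguous} an $(\ell-1)$-multiphase strategy terminates exactly when the served prefix becomes the (unique) request sequence of a maximal $(\ell-1,H\cup\{p\})$-multiphase. Labelling the copies $c_1,\ldots,c_{d_\ell-1}$ in this deterministic order of termination, copy $c_j$ is followed by $\mathcal{A}_\ell$ iff it has the smallest $\pi$-rank among the still-running copies $\{c_j,\ldots,c_{d_\ell-1}\}$ at the moment it terminates. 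Since $\pi$ is a uniformly random permutation of $S$, this has probability $1/(d_\ell-j)$, and linearity of expectation gives $\mathbb{E}[N]=\sum_{m=1}^{d_\ell-1}1/m=h(d_\ell-1)$.

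For (b), let $X_p\in\{0,1\}$ indicate that $p$ is ever followed and let $A_p$ be the (random) total movement cost of the imaginary run of $\mathcal{A}'_{\ell-1}[p]$; note $\mathbb{E}[A_p]\leq c_{\ell-1}\cdot w_\ell$ by Definition~\ref{def_alg}. During the interval in which $\mathcal{A}_\ell$ follows $p$ it exactly reproduces $\mathcal{A}'_{\ell-1}[p]$'s movements, so the total mimic cost is at most $\sum_{p\in S}X_p A_p$. Since $X_p$ is determined by $\pi$ alone (termination times being deterministic) while $A_p$ is determined by the internal coins of $\mathcal{A}'_{\ell-1}[p]$ alone, the two random variables are independent, yielding $\mathbb{E}[X_p A_p]=\mathbb{E}[X_p]\mathbb{E}[A_p]\leq\mathbb{E}[X_p]\cdot c_{\ell-1}w_\ell$. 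Summing and using $\mathbb{E}[N]=h(d_\ell-1)$ establishes (b). For (c), each of the at most $N$ transitions moves the $\ell$-th server across $S$ at cost $w_\ell$ and must reconfigure the lighter servers to match the new copy, which in the uniform metric costs at most $\sum_{i=1}^{\ell-1}w_i$. The weight-doubling assumption $w_i\geq 2w_{i-1}$ gives $\sum_{i=1}^{\ell-1}w_i\leq 2w_{\ell-1}\leq w_\ell$, so each transition costs at most $2w_\ell$ regardless of the source and target configurations; taking expectation yields (c). Adding (a), (b), (c) and invoking Definition~\ref{def_c} gives $\mathbb{E}[\text{cost}]\leq(1+h(d_\ell-1))c_{\ell-1}w_\ell+2h(d_\ell-1)w_\ell=c_\ell w_\ell$.

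The most delicate ingredient is the independence argument in (b): without it, one could only bound $\sum_p X_p A_p$ by $(d_\ell-1)\cdot c_{\ell-1}w_\ell$ via a worst-case union over all copies, which is exponentially worse and would break the recurrence. The deterministic prefix-termination built into the specification of an $(\ell-1)$-multiphase strategy (Definition~\ref{def_alg}) is precisely what makes the decoupling of $X_p$ from $A_p$ possible, and I expect most of the technical writing to go into justifying this decoupling carefully.
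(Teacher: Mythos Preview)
Your proposal is correct and follows essentially the same approach as the paper's proof: both bound the cost by $X^0+\sum_j Z^j(2w_\ell+X^j)$ (in the paper's notation), use the record-value analysis of a uniformly random permutation to get $\Pr[Z^j=1]\leq 1/j$, and exploit the independence between $Z^j$ (determined by $\pi$) and $X^j$ (determined by the internal coins of the sub-strategy) that is made possible by the deterministic termination times in Definition~\ref{def_alg}. The only cosmetic difference is that you label copies in increasing order of termination while the paper labels them in decreasing order of served-prefix length; your ``iff'' should strictly be ``only if'' when termination times tie, but the inequality direction you need is unaffected.
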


\begin{proof}
Let the random variable $X^0$ denote the cost paid by the call to $\mathcal{A}'_{\ell-1}$ in the explore step of $\mathcal{A}_{\ell}$, and let $y^0$ denote the prefix of $\rho$ served by this call. For each $p\in S$ (computed in the third step of $\mathcal{A}_{\ell}$), let $y^p$ denote the prefix of $y^0\backslash\rho$ served by $\mathcal{A}'_{\ell-1}[p]$ (in imagination). Sort the set $S$ in non-increasing order of the length of $y^p$, and let $p_1,\ldots,p_{d_{\ell}-1}$ denote this sorted order. Let the random variable $X^j$ denote the (imaginary) cost paid by the call $\mathcal{A}'_{\ell-1}[p_j]$, and let $Z^j$ be the indicator random variable of the event that  $\mathcal{A}_{\ell}$ ever follows the decisions of $\mathcal{A}'_{\ell-1}[p_j]$, or equivalently, $\pi(f)=p_j$ for some time during the run of $\mathcal{A}_{\ell}$.

For every $j$ such that $Z^j=1$, the (real) cost paid by $\mathcal{A}_{\ell}$ while following the decisions of $\mathcal{A}'_{\ell-1}[p_j]$ is bounded from above by the (imaginary) cost paid by $\mathcal{A}'_{\ell-1}[p_j]$, that is, $X^j$. In addition, just before $\mathcal{A}_{\ell}$ begins to follow $\mathcal{A}'_{\ell-1}[p_j]$, $\mathcal{A}_{\ell}$ must pay a cost of at most $w_1+\cdots+w_{\ell}$ to match the positions of its cheapest $\ell$ servers with the positions of the corresponding servers of $\mathcal{A}'_{\ell-1}[p_j]$. Thus, the total cost paid by $\mathcal{A}_{\ell}$ is bounded from above by
\[X^0+\sum_{j=1}^{d_{\ell}-1}Z^j\cdot(w_1+\cdots+w_{\ell}+X^j)\leq X^0+\sum_{j=1}^{d_{\ell}-1}Z^j\cdot(2w_{\ell}+X^j)=X^0+2w_{\ell}\cdot\sum_{j=1}^{d_{\ell}-1}Z^j+\sum_{j=1}^{d_{\ell}-1}Z^j\cdot X^j\text{,}\]
where the inequality holds because we assumed that the weights are separated by a factor at least $2$.

Observe that $Z^j=1$ only if $p_j$ is placed before all of $p_1,\ldots,p_{j-1}$ in the random order $\pi$. Thus, $\mathbb{E}[Z^j]=\Pr[Z^j=1]\leq1/j$. By the definition of $(\ell-1)$-multiphase strategy, we have $\mathbb{E}[X^0]\leq c_{\ell-1}\cdot w_{\ell}$, and for each $j\in\{1,\ldots,d_{\ell}-1\}$, $\mathbb{E}[X^j]\leq c_{\ell-1}\cdot w_{\ell}$. Finally, note that the value of each $Z^j$ is determined by the random permutation $\pi$ while the value of $X^j$ is determined by the randomness internal to the run of $\mathcal{A}'_{\ell-1}[p^j]$. Thus, $Z^j$ and $X^j$ are independent random variables. Putting everything together, we get that the expected cost paid by $\mathcal{A}_{\ell}$ is bounded from above by
\begin{align*}
\mathbb{E}[X^0]+2w_{\ell}\cdot\sum_{j=1}^{d_{\ell}-1}\mathbb{E}[Z^j]+\sum_{j=1}^{d_{\ell}-1}\mathbb{E}[Z^j]\cdot\mathbb{E}[X^j] &\leq c_{\ell-1}\cdot w_{\ell}+2w_{\ell}\cdot\sum_{j=1}^{d_{\ell}-1}\frac{1}{j}+\sum_{j=1}^{d_{\ell}-1}\frac{1}{j}\cdot c_{\ell-1}\cdot w_{\ell}\\
&= ((1+h(d_{\ell}-1))\cdot c_{\ell-1}+2h(d_{\ell}-1))\cdot w_{\ell}\\
&= c_{\ell}\cdot w_{\ell}\text{,}
\end{align*}
where the last equality follows from Definition~\ref{def_c}.
\end{proof}

\begin{proof}[Proof of Lemma~\ref{lem_rec_phase}]
Follows from the definition of $\mathcal{A}_{\ell}$ and the last three claims.
\end{proof}

\begin{lemma}\label{lem_k_phase}
A $k$-phase strategy exists.
\end{lemma}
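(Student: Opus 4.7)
The plan is to prove Lemma~\ref{lem_k_phase} by a straightforward finite induction on $\ell$, using Lemmas~\ref{lem_alg_base}, \ref{lem_rec_multiphase}, and~\ref{lem_rec_phase} as black boxes. The base case is immediate: by Lemma~\ref{lem_alg_base}, a $1$-phase strategy exists.

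For the inductive step, I would prove the following strengthened statement: for every $\ell \in \{1,\ldots,k\}$, an $\ell$-phase strategy exists, and for every $\ell \in \{1,\ldots,k-1\}$, an $\ell$-multiphase strategy exists. Assuming inductively that an $\ell$-phase strategy exists for some $\ell \in \{1,\ldots,k-1\}$, Lemma~\ref{lem_rec_multiphase} yields an $\ell$-multiphase strategy. Feeding this $\ell$-multiphase strategy into Lemma~\ref{lem_rec_phase} (with index $\ell+1 \in \{2,\ldots,k\}$) yields an $(\ell+1)$-phase strategy. Iterating this pair of deductions $k-1$ times starting from the base case gives a $k$-phase strategy, which is exactly what the lemma asserts.

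There is essentially no obstacle here, since all of the substantive work is already encapsulated in the three preceding lemmas; the only thing to check is that the index ranges in Lemmas~\ref{lem_rec_multiphase} and~\ref{lem_rec_phase} line up with the induction and that the induction is applied only $k-1$ times, so that we never invoke Lemma~\ref{lem_rec_multiphase} at $\ell = k$ (where an $\ell$-multiphase strategy is not defined). Once the $k$-phase strategy is produced, no further inductive step is needed, so this endpoint issue is automatically avoided.
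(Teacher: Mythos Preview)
Your proposal is correct and matches the paper's own proof essentially verbatim: the paper simply says the lemma follows from Lemma~\ref{lem_alg_base} together with applications of Lemma~\ref{lem_rec_multiphase} and Lemma~\ref{lem_rec_phase}, alternated and repeated $k-1$ times. Your additional care about the index ranges (in particular, not invoking Lemma~\ref{lem_rec_multiphase} at $\ell=k$) is a welcome explicit check but does not constitute a different approach.
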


\begin{proof}
Follows from Lemma~\ref{lem_alg_base} and applications of Lemma~\ref{lem_rec_multiphase} and Lemma~\ref{lem_rec_phase}, alternated and repeated $k-1$ times, using appropriate values of $\ell$ each time.
\end{proof}

\section{Online Algorithm and its Analysis}\label{sec_alg}

Having demonstrated the existence of a $k$-phase strategy, the design of a competitive randomized algorithm for weighted $k$-server on uniform metrics becomes fairly straightforward.

\main*

\begin{proof}
The required algorithm is as follows: repeatedly call the $k$-phase strategy, whose existence is assured by Lemma~\ref{lem_k_phase}. Each call serves a non-empty sequence of requests, so all the requests are served. Suppose that on an arbitrary instance $\rho$, the $k$-phase strategy is called $m$ times. Then, except possibly the last call, all previous calls serve the request sequence of a $(k,\emptyset)$-phase. By Corollary~\ref{cor_opt}, irrespective of the initial configuration, the cost of an optimum solution for the request sequence of any  $(k,\emptyset)$-phase is at least $w_k/2^k$. Thus, the cost of an optimum solution to the instance is bounded as $\text{OPT}\geq(m-1)\cdot w_k/2^k$. On the other hand, the algorithm's cost is bounded as $\text{ALG}\leq m\cdot c_k\cdot w_k$. Thus, $\text{ALG}\leq2^k\cdot c_k\cdot\text{OPT}+c_kw_k$, which means that the algorithm is $(2^k\cdot c_k)$-competitive. From Observation~\ref{obs_c} and Definition~\ref{def_dl}, it is easy to see that $c_k$ is $\exp(\mathcal{O}(k^2))$, and hence, the competitive ratio is $\exp(\mathcal{O}(k^2))$. Note that all this holds under the assumption (which was made without loss of generality and stated in Section~\ref{sec_prelims}) that each $w_i$ is a multiple of $w_{i-1}$ greater than $w_{i-1}$. As remarked (and guaranteed by Lemma~\ref{lem_distort} in Appendix~\ref{appendix_prelims}), given arbitrary weights, they can be rounded up to satisfy the constraint with a loss of at most $2^{k-1}$ in the competitive ratio. This still keeps the competitive ratio bounded by $\exp(\mathcal{O}(k^2))$.
\end{proof}

\section{Concluding Remarks and Open Problems}

As a consequence of Theorem~\ref{thm_gks}, we have almost solved an open problem by Bansal et al.~\cite{BansalEKN_TALG23}, asking whether there exists an $\exp(k)$-competitive algorithm for the generalized $k$-server problem on weighted uniform metrics. We also have a reasonably comprehensive understanding of the competitive ratio of weighted $k$-server and generalized $k$-server: it is doubly exponential in $k$ for deterministic algorithms, while it is singly exponential in a polynomial of $k$ for randomized algorithms. The only gap left to close is between the $\Omega(2^k)$ and $\exp(\mathcal{O}(k^2))$ bounds on the randomized competitive ratio.

While the focus of competitive analysis is on proving information-theoretic results rather than computational efficiency, the latter cannot be ignored when algorithms are to be put into practice. Except for the randomized memoryless algorithms considered by Chiplunkar and Vishwanathan~\cite{ChiplunkarV_TAlg20}, all the algorithms for weighted $k$-server known so far need memory to be at least doubly exponential in $k$. It would therefore be interesting to understand the trade-off between memory usage and the competitive ratio of randomized algorithms for weighted $k$-server.

Since weighted $k$-server is reasonably understood on uniform metrics now, it makes sense to initiate a study of weighted $k$-server on classes of metrics beyond uniform metrics. For results to be non-trivial, the class of metrics under consideration must have metrics with arbitrarily large \textit{aspect ratio}, i.e., the ratio of the maximum distance to the minimum distance between points. The line metric and tree metrics might seem to be the simplest such classes, however, we believe designing algorithms for these metrics will already involve fundamentally new ideas. In particular, it is already known that randomized memoryless algorithms are not competitive on the line metric, even for the weighted $2$-server problem~\cite{ChrobakS_TCS04}.

\bibliographystyle{alpha}
\bibliography{references}

\appendix

\section{Illustrative Example}\label{appendix_example}

In this section, we present a concrete example to provide an intuitive visualization of the terms -- phase, multiphase, demand vector, critical set, etc.\ -- from Definition~\ref{def_phase}. Assume $k$ is at least $4$, and $w_2/w_1=5$. Although $d_2=16$ by Definition~\ref{def_dl}, we take $d_2=4$ in this example to reduce the clutter in the figure. The points in the metric space are $\texttt{a}, \texttt{b}, \text{etc.}$. Recall that demand vectors have a coordinate for each point in the metric space. A demand vector is shown as a set of key value pairs: for example, $\{\texttt{c}:1,\text{ }\texttt{d}:2,\text{ }\texttt{e}:1,\text{ }\texttt{f}:1\}$ denotes the vector that has value $2$ in the \texttt{d} coordinate, $1$ in each of $\texttt{c}$, $\texttt{e}$, $\texttt{f}$, and $0$ in all other coordinate directions.

\begin{figure}
\centering
\begin{center}
\rotatebox{90}{
\begin{tikzpicture}[scale=0.5]


\draw[<->] (-1,21) -- (40,21) node[midway, above] {input};

\node at (-0.5,19.5) {\texttt{\ldots}};
\node at (0.5,19.5) {\texttt{a}};
\node at (1.5,19.5) {\texttt{b}};
\node at (2.5,19.5) {\texttt{c}};
\node at (3.5,19.5) {\texttt{a}};
\node at (4.5,19.5) {\texttt{d}};
\node at (5.5,19.5) {\texttt{a}};
\node at (6.5,19.5) {\texttt{d}};
\node at (7.5,19.5) {\texttt{b}};
\node at (8.5,19.5) {\texttt{e}};
\node at (9.5,19.5) {\texttt{a}};
\node at (10.5,19.5) {\texttt{d}};
\node at (11.5,19.5) {\texttt{b}};
\node at (12.5,19.5) {\texttt{a}};
\node at (13.5,19.5) {\texttt{f}};
\node at (14.5,19.5) {\texttt{b}};
\node at (15.5,19.5) {\texttt{c}};
\node at (16.5,19.5) {\texttt{a}};
\node at (17.5,19.5) {\texttt{b}};
\node at (18.5,19.5) {\texttt{c}};
\node at (19.5,19.5) {\texttt{d}};
\node at (20.5,19.5) {\texttt{a}};
\node at (21.5,19.5) {\texttt{b}};
\node at (22.5,19.5) {\texttt{c}};
\node at (23.5,19.5) {\texttt{b}};
\node at (24.5,19.5) {\texttt{e}};
\node at (25.5,19.5) {\texttt{a}};
\node at (26.5,19.5) {\texttt{c}};
\node at (27.5,19.5) {\texttt{b}};
\node at (28.5,19.5) {\texttt{a}};
\node at (29.5,19.5) {\texttt{d}};
\node at (30.5,19.5) {\texttt{b}};
\node at (31.5,19.5) {\texttt{g}};
\node at (32.5,19.5) {\texttt{a}};
\node at (33.5,19.5) {\texttt{b}};
\node at (34.5,19.5) {\texttt{d}};
\node at (35.5,19.5) {\texttt{a}};
\node at (36.5,19.5) {\texttt{h}};
\node at (37.5,19.5) {\texttt{d}};
\node at (38.5,19.5) {\texttt{i}};
\node at (39.5,19.5) {\texttt{\ldots}};


\node at (0.5,17.5) {\texttt{a}};
\node at (1.5,17.5) {\texttt{b}};
\node at (2.5,17.5) {\texttt{c}};
\node at (3.5,17.5) {\texttt{a}};
\node at (4.5,17.5) {\texttt{d}};
\node at (5.5,17.5) {\texttt{a}};
\node at (6.5,17.5) {\texttt{d}};
\node at (7.5,17.5) {\texttt{b}};
\node at (8.5,17.5) {\texttt{e}};
\node at (9.5,17.5) {\texttt{a}};
\node at (10.5,17.5) {\texttt{d}};
\node at (11.5,17.5) {\texttt{b}};
\node at (12.5,17.5) {\texttt{a}};
\node at (13.5,17.5) {\texttt{f}};
\node at (14.5,17.5) {\texttt{b}};
\node at (15.5,13.5) {\texttt{c}};
\node at (16.5,13.5) {\texttt{a}};
\node at (17.5,13.5) {\texttt{b}};
\node at (18.5,13.5) {\texttt{c}};
\node at (19.5,13.5) {\texttt{d}};
\node at (20.5,13.5) {\texttt{a}};
\node at (21.5,13.5) {\texttt{b}};
\node at (22.5,13.5) {\texttt{c}};
\node at (23.5,13.5) {\texttt{b}};
\node at (24.5,13.5) {\texttt{e}};
\node at (25.5,13.5) {\texttt{a}};
\node at (26.5,13.5) {\texttt{c}};
\node at (27.5,13.5) {\texttt{b}};
\node at (28.5,13.5) {\texttt{a}};
\node at (29.5,13.5) {\texttt{d}};
\node at (30.5,13.5) {\texttt{b}};
\node at (31.5,13.5) {\texttt{g}};
\node at (32.5,13.5) {\texttt{a}};
\node at (33.5,13.5) {\texttt{b}};
\node at (34.5,13.5) {\texttt{d}};
\node at (35.5,13.5) {\texttt{a}};

\node at (2,16.5) {\texttt{\{c:1\}}};
\node at (6,16.5) {\texttt{\{d:1\}}};
\node at (9,16.5) {\texttt{\{e:1\}}};
\node at (11.5,16.5) {\texttt{\{d:1\}}};
\node at (14,16.5) {\texttt{\{f:1\}}};
\node at (19.5,12.5) {\texttt{\{d:1\}}};
\node at (26.5,12.5) {\texttt{\{e:1\}}};
\node at (30,12.5) {\texttt{\{d:1\}}};
\node at (32.5,12.5) {\texttt{\{g:1\}}};
\node at (35,12.5) {\texttt{\{d:1\}}};

\node at (7.5,15.5) {\texttt{\{c:1, d:2, e:1, f:1\}}};
\node at (7.5,14.5) {\texttt{(1,\{a,b\})}-multiphase};

\node at (7.5,9) {critical set};
\node at (7.5,8) {$\text{top}_3($\texttt{\{c:1, d:2, e:1, f:1\}}$)$};
\node at (7.5,7) {$=$\texttt{\{c,d,e\}}};

\node at (25.5,11.5) {\texttt{\{d:3, e:1, g:1\}}};
\node at (25.5,10.5) {\texttt{(1,\{a,b,c\})}-multiphase};

\node at (15.5,9.5) {\texttt{c}};
\node at (16.5,9.5) {\texttt{a}};
\node at (17.5,9.5) {\texttt{b}};
\node at (18.5,9.5) {\texttt{c}};
\node at (19.5,9.5) {\texttt{d}};
\node at (20.5,9.5) {\texttt{a}};
\node at (21.5,9.5) {\texttt{b}};
\node at (22.5,9.5) {\texttt{c}};
\node at (23.5,9.5) {\texttt{b}};
\node at (24.5,9.5) {\texttt{e}};
\node at (25.5,9.5) {\texttt{a}};
\node at (26.5,9.5) {\texttt{c}};
\node at (27.5,9.5) {\texttt{b}};
\node at (28.5,9.5) {\texttt{a}};
\node at (29.5,9.5) {\texttt{d}};
\node at (30.5,9.5) {\texttt{b}};
\node at (31.5,9.5) {\texttt{g}};
\node at (32.5,9.5) {\texttt{a}};
\node at (33.5,9.5) {\texttt{b}};
\node at (34.5,9.5) {\texttt{d}};
\node at (35.5,9.5) {\texttt{a}};
\node at (36.5,9.5) {\texttt{h}};
\node at (37.5,9.5) {\texttt{d}};

\node at (19.5,8.5) {\texttt{\{c:1\}}};
\node at (25,8.5) {\texttt{\{e:1\}}};
\node at (28.5,8.5) {\texttt{\{c:1\}}};
\node at (33.5,8.5) {\texttt{\{g:1\}}};
\node at (37,8.5) {\texttt{\{h:1\}}};

\node at (26.5,7.5) {\texttt{\{c:2, e:1, g:1, h:1\}}};
\node at (26.5,6.5) {\texttt{(1,\{a,b,d\})}-multiphase};

\node at (15.5,5.5) {\texttt{c}};
\node at (16.5,5.5) {\texttt{a}};
\node at (17.5,5.5) {\texttt{b}};
\node at (18.5,5.5) {\texttt{c}};
\node at (19.5,5.5) {\texttt{d}};
\node at (20.5,5.5) {\texttt{a}};
\node at (21.5,5.5) {\texttt{b}};
\node at (22.5,5.5) {\texttt{c}};
\node at (23.5,5.5) {\texttt{b}};
\node at (24.5,5.5) {\texttt{e}};
\node at (25.5,5.5) {\texttt{a}};
\node at (26.5,5.5) {\texttt{c}};
\node at (27.5,5.5) {\texttt{b}};
\node at (28.5,5.5) {\texttt{a}};
\node at (29.5,5.5) {\texttt{d}};
\node at (30.5,5.5) {\texttt{b}};
\node at (31.5,5.5) {\texttt{g}};
\node at (32.5,5.5) {\texttt{a}};
\node at (33.5,5.5) {\texttt{b}};

\node at (17,4.5) {\texttt{\{c:1\}}};
\node at (20.5,4.5) {\texttt{\{d:1\}}};
\node at (25.5,4.5) {\texttt{\{c:1\}}};
\node at (30,4.5) {\texttt{\{d:1\}}};
\node at (32.5,4.5) {\texttt{\{g:1\}}};

\node at (24.5,3.5) {\texttt{\{c:2, d:2, g:1\}}};
\node at (24.5,2.5) {\texttt{(1,\{a,b,e\})}-multiphase};

\draw[|-|] (0,1) -- (15,1) node[midway, below] {explore part};
\draw[|-|] (15,1) -- (38,1) node[midway, below] {exploit part};

\node at (19,-0.5) {\texttt{\{c:5, d:7, e:3, f:1, g:3, h:1\}}};
\node at (19,-1.5) {\texttt{(2,\{a,b\})}-phase};


\draw[dashed,thick] (0,-2) rectangle (38,18);

\draw[thick] (0,14) rectangle (15,18);
\draw[dashed,thick] (0,16) -- (15,16);
\draw[dashed,thick] (4,16) -- (4,18);
\draw[dashed,thick] (8,16) -- (8,18);
\draw[dashed,thick] (10,16) -- (10,18);
\draw[dashed,thick] (13,16) -- (13,18);

\draw[thick] (15,2) rectangle (38,14);
\draw[thick] (15,6) -- (38,6);
\draw[thick] (15,10) -- (38,10);
\draw[thick] (34,2) -- (34,6);
\draw[thick] (36,10) -- (36,14);
\draw[dashed,thick] (15,4) -- (34,4);
\draw[dashed,thick] (15,8) -- (38,8);
\draw[dashed,thick] (15,12) -- (36,12);
\draw[dashed,thick] (24,12) -- (24,14);
\draw[dashed,thick] (29,12) -- (29,14);
\draw[dashed,thick] (31,12) -- (31,14);
\draw[dashed,thick] (34,12) -- (34,14);
\draw[dashed,thick] (24,8) -- (24,10);
\draw[dashed,thick] (26,8) -- (26,10);
\draw[dashed,thick] (31,8) -- (31,10);
\draw[dashed,thick] (36,8) -- (36,10);
\draw[dashed,thick] (19,4) -- (19,6);
\draw[dashed,thick] (22,4) -- (22,6);
\draw[dashed,thick] (29,4) -- (29,6);
\draw[dashed,thick] (31,4) -- (31,6);

\end{tikzpicture}
}
\end{center}
\caption{Illustrative example}
\label{fig_example}
\end{figure}

Refer to Figure~\ref{fig_example}, which shows a $(2,\{\texttt{a},\texttt{b}\})$-phase. The phase consists of an explore part and an exploit part. The explore part is a $(1,\{\texttt{a},\texttt{b}\})$-multiphase, and it consists of $w_2/w_1=5$ $(1,\{\texttt{a},\texttt{b}\})$-phases, each of which is enclosed in dashed boxes. The request sequence of each such phase is a string over $\texttt{a}$, $\texttt{b}$, and one other point, and the demand vector of the phase is a unit vector in the direction of the point other than  $\texttt{a}$, $\texttt{b}$. The request sequence is maximal in the sense that appending to it the first request from the next phase results in a string that doesn't qualify to be the request sequence of a $(1,\{\texttt{a},\texttt{b}\})$-phase. The demand vector of the explore part is the sum of the demand vectors of the $5$ $(1,\{\texttt{a},\texttt{b}\})$-phases in the explore part. The critical set $\{\texttt{c},\texttt{d},\texttt{e}\}$ is the set of $d_2-1=3$ points having the largest coordinate values in the demand vector, where ties are resolved alphabetically, due to which $\texttt{c}$ and $\texttt{e}$ are in the critical set, but $\texttt{f}$ isn't. Finally, observe that the request sequence of the explore part, which is simply the concatenation of the request sequences of the $(1,\{\texttt{a},\texttt{b}\})$-phases, is maximal in the sense that appending the first request from the exploit part results in a string that doesn't qualify to be the request sequence of a $(1,\{\texttt{a},\texttt{b}\})$-multiphase.

The exploit part of the $(2,\{\texttt{a},\texttt{b}\})$-phase consists of a $(1,\{\texttt{a},\texttt{b},\texttt{c}\})$-multiphase, a $(1,\{\texttt{a},\texttt{b},\texttt{d}\})$-multiphase, and a $(1,\{\texttt{a},\texttt{b},\texttt{e}\})$-multiphase, i.e., one multiphase for each point in the critical set. All these multiphases and the phases within them satisfy properties analogous to the explore part. Additionally, their request sequences form a prefix chain, and the longest of these request sequences is the request sequence of the exploit part. 

Finally, the demand vector of the $(2,\{\texttt{a},\texttt{b}\})$-phase shown in the figure is the sum of the demand vector of its explore part and the demand vectors of all multiphases in its exploit part, and the request sequence of the $(2,\{\texttt{a},\texttt{b}\})$-phase is the concatenation of the request sequences of its explore and exploit parts. Observe that, again, the request sequence of $(2,\{\texttt{a},\texttt{b}\})$-phase is maximal in the sense that appending the next request $\texttt{i}$ result in a string that doesn't qualify to be the request sequence of a $(2,\{\texttt{a},\texttt{b}\})$-phase.

\section{Missing Proofs}\label{app_missingproofs}

\subsection{Missing proof from Section~\ref{sec_prelims}}\label{appendix_prelims}

In this paper, we give an algorithm for the weighted $k$-server problem under the assumption that the weights satisfy the condition that each $w_i$ is an integer multiple of $w_{i-1}$ greater than $w_{i-1}$, and claim that there is no loss of generality in making this assumption. We formalize and prove this claim below. Let us call this version 
the \textit{weight-constrained} $k$-server problem.

\begin{lemma}\label{lem_distort}
If there exists a $c_k$-competitive randomized algorithm for  \textit{weight-constrained} $k$-server problem then there is a $(c_k\cdot 2^{k-1})$-competitive algorithm for the weighted $k$-server problem.
\end{lemma}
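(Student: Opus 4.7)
The plan is to round the original weights $w_1 \le \cdots \le w_k$ up to new weights $w'_1 \le \cdots \le w'_k$ that satisfy the constraint of the \emph{weight-constrained} problem, run the given $c_k$-competitive algorithm $\mathcal{A}$ on the instance with weights $w'_i$, and then compare costs between the two weight vectors. Concretely, I would set $w'_1 := w_1$, and for $i \geq 2$ let $w'_i$ be the smallest positive integer multiple of $w'_{i-1}$ that is strictly greater than both $w'_{i-1}$ and $w_i$. By construction, $w'_i$ is a multiple of $w'_{i-1}$ strictly greater than $w'_{i-1}$, so $w'_1,\ldots,w'_k$ is a valid input for the weight-constrained algorithm, and clearly $w'_i \geq w_i$.

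Next I would prove by induction on $i$ that $w_i \leq w'_i \leq 2^{i-1} w_i$. The base case $i=1$ is immediate. For the inductive step, use $w'_{i-1} \le 2^{i-2} w_{i-1} \le 2^{i-2} w_i$ from the hypothesis. If $w_i > w'_{i-1}$, then the smallest multiple of $w'_{i-1}$ that is $\ge w_i$ is at most $w_i + w'_{i-1} \le (1 + 2^{i-2}) w_i \le 2^{i-1} w_i$. If $w_i \le w'_{i-1}$, we instead take $w'_i = 2 w'_{i-1} \le 2^{i-1} w_{i-1} \le 2^{i-1} w_i$. Thus in all cases $w'_i \leq 2^{i-1} w_i \leq 2^{k-1} w_i$.

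Now, given an instance $\rho$ of weighted $k$-server with weights $w_i$, the algorithm for the original problem simply executes $\mathcal{A}$ on the same request sequence, pretending that the weights are $w'_i$, and physically performs the server moves that $\mathcal{A}$ prescribes. Let $\mathrm{ALG}(\rho)$ and $\mathrm{OPT}(\rho)$ denote the expected algorithm cost and offline optimum under the original weights $w_i$, and let $\mathrm{ALG}'(\rho), \mathrm{OPT}'(\rho)$ denote the analogous quantities under the rounded weights $w'_i$. Since $w_i \le w'_i$, every sequence of server moves has cost at most as large under $w_i$ as under $w'_i$; in particular, $\mathrm{ALG}(\rho) \le \mathrm{ALG}'(\rho)$. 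Conversely, applying any offline optimal solution for weights $w_i$ to the rounded-weight instance and using $w'_i \le 2^{k-1} w_i$ uniformly gives $\mathrm{OPT}'(\rho) \le 2^{k-1}\,\mathrm{OPT}(\rho)$.

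Finally I would chain these bounds with the competitiveness of $\mathcal{A}$: there exists a constant $c'$ (depending only on the weights, not on $\rho$) such that $\mathrm{ALG}'(\rho) \le c_k \cdot \mathrm{OPT}'(\rho) + c'$. Combining,
\[
\mathrm{ALG}(\rho) \;\le\; \mathrm{ALG}'(\rho) \;\le\; c_k \cdot \mathrm{OPT}'(\rho) + c' \;\le\; (c_k \cdot 2^{k-1}) \cdot \mathrm{OPT}(\rho) + c',
\]
which is the desired $(c_k \cdot 2^{k-1})$-competitive guarantee. There is no real obstacle here; the only point requiring a bit of care is the rounding rule, which must simultaneously ensure strict integer-multiple separation between consecutive $w'_i$ and keep the per-coordinate distortion bounded by $2^{i-1}$ — handled by the two-case inductive argument above.
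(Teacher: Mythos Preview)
Your proposal is correct and follows essentially the same approach as the paper's proof: round each weight up to a multiple of the previous rounded weight, show by a two-case induction that $w_i \le w'_i \le 2^{i-1} w_i$, and then chain the cost comparisons with the competitiveness of $\mathcal{A}$. The only cosmetic differences are that the paper rounds $w'_i$ to the smallest multiple of $w'_{i-1}$ that is at least $\max(2w'_{i-1},w_i)$ rather than strictly greater than $w_i$, and in your Case~1 you wrote ``$\ge w_i$'' where your definition says ``$> w_i$''; neither affects the bound $w'_i \le w_i + w'_{i-1}$.
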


\begin{proof}
Suppose there is a $c_k$-competitive randomized algorithm $\mathcal{B}$ for  \textit{weight-constrained} $k$-server problem. We will now construct an algorithm for the weighted $k$-server problem with competitive ratio $c_k\cdot 2^{k-1}$.

Suppose the weights of the servers in the weighted $k$-server problem are $w_1,\ldots,w_k$, in non decreasing order. Define the weights $w'_1,\ldots,w'_k$ as $w'_1=w_1$, and for each $i>1$, $w'_i$ is the smallest multiple of $w'_{i-1}$ which is at least $\max(2w'_{i-1},w_i)$. Then we can inductively show that for each $i$, $w'_i\leq2^{i-1}w_i$. Indeed, the claim holds for $i=1$, and assuming that the claim holds for $i-1$, we have, $w'_i\leq\max(2w'_{i-1},w_i+w'_{i-1})\leq\max(2\cdot2^{i-2}w_{i-1},w_i+2^{i-2}w_{i-1})\leq2^{i-1}\cdot w_i$. Thus, $w_i\leq w'_i\leq2^{i-1}\cdot w_i$ for each $i$. Moreover, the weights $w'_1,\ldots,w'_k$ make the instance weight-constrained. The required weighted $k$-server algorithm is just algorithm $\mathcal{B}$ pretending as if the weights of the servers are $w'_1,\ldots w'_k$ instead of $w_1,\ldots, w_k$.

Given an arbitrary sequence of requests $\rho$, let $O$ and $O'$ be optimum solutions to $\rho$ with respect to the weights $w_1,\ldots, w_k$ and $w'_1,\ldots w'_k$ respectively. Let $w(O)$ denote the cost of $O$ with respect to weights $w_1,\ldots, w_k$, $w'(O)$ denote the cost of $O$ with respect to weights $w'_1,\ldots,w'_k$, and $w'(O')$ denote the cost of $O'$ with respect to weights $w'_1,\ldots, w'_k$. Then we have,
\[w'(O')\leq w'(O)\leq2^{k-1}\cdot w(O)\text{,}\]
where the first inequality holds because $O'$ is an optimum solution with respect to the weights $w'_1,\ldots, w'_k$, and $O$ is another solution, while the second inequality holds because $w'_i\leq2^{i-1}\cdot w_i\leq2^{k-1}\cdot w_i$ for each $i$.

Next, let $A$ denote the (random) solution output by the algorithm $\mathcal{B}$, and let $w(A)$ and $w'(A)$ respectively denote the expected cost of $A$ with respect to weights $w_1,\ldots, w_k$ and $w'_1,\ldots,w'_k$ respectively, so that $w(A)\leq w'(A)$. Since $\mathcal{B}$ is $c_k$ competitive, there exists a $c_0$ independent of $\rho$ such that $w'(A)\leq c_k\cdot w'(O')+c_0$. This implies,
\[w(A)\leq w'(A)\leq c_k\cdot w'(O')+c_0\leq c_k\cdot2^{k-1}\cdot w(O)+c_0\text{.}\]
Thus, we have a $(c_k\cdot 2^{k-1})$-competitive algorithm, as required.
\end{proof}

\subsection{Missing proofs from Section~\ref{sec_phase}}\label{appendix_phase}

\lemeqsize*

\begin{proof}
We will prove the claim by induction. Consider a $(1,H)$-phase. Recall that its demand vector is $u_p$ for some point $p$, and by definition $|u_p| = 1 = d_1$, as required.

Let us assume inductively that the claim holds for $(\ell,H)$-phases for an arbitrary $\ell\in\{1,\ldots,k-1\}$ and $H$, and prove it for an $(\ell,H)$-multiphase $Q$. Suppose $Q$ is the sequence $P_1,\ldots, P_{w_{\ell+1}/w_{\ell}}$ is of $(\ell,H)$-phases. Let $v_i$ denote the demand vector of $P_i$ and $v$ denote the demand vector of $Q$. Then
\[|v|=\sum_{i=1}^{w_{\ell+1}/w_{\ell}}|v_i|=\frac{w_{\ell+1}}{w_{\ell}}\times\frac{w_{\ell}}{w_1}\cdot\prod_{i=1}^{\ell}d_i=\frac{w_{\ell+1}}{w_1}\cdot\prod_{i=1}^{\ell}d_i\text{,}\]
as required, where the second equality follows from the inductive assumption.

Let us assume inductively that the claim holds for $(\ell-1,H')$-multiphases for an arbitrary $\ell\in\{2,\ldots,k\}$ and $H'$, and prove it for an $(\ell,H)$-phase $Q$. Let $Q^0$ and $\mathcal{Q}$ denote the explore and exploit parts of $P$ respectively, and let $S$ denote the critical set of $Q^0$. Recall that $Q^0$ is an $(\ell-1,H)$-multiphase, $|S|=d_{\ell}-1$, and $\mathcal{Q}$ contains an $(\ell-1,H\cup\{p\})$-multiphase $Q^p$ for each point $p\in S$. Let $v^0$, $v^p$, and $v$ denote the demand vectors of $Q^0$, each $Q^p$, and $P$ respectively. Then
\[|v|=|v^0|+\sum_{p\in S}|v^p|=d_{\ell}\times\frac{w_{\ell}}{w_1}\cdot\prod_{i=1}^{\ell-1}d_i=\frac{w_{\ell}}{w_1}\cdot\prod_{i=1}^{\ell}d_i\text{,}\]
as required, where the second equality follows from the inductive assumption.
\end{proof}

\lemunamb*

\begin{proof}
We will prove the claim by induction. First we will show that the lemma holds for $(1,H)$-phases. Since a $(1,H)$-phase is identified with its request sequence (provided the request sequence satisfies  the criteria in Definition~\ref{def_phase}), the claim holds.
    
Assume the lemma is true for $(\ell,H)$-phases. We will show it holds for $(\ell,H)$-multiphases as well. Let $Q^1 = (P^1_1,P^1_2,\ldots,P^1_{w_{\ell+1}/w_{\ell}})$ and $Q^2 = (P^2_1,P^2_2,\ldots,P^2_{w_{\ell+1}/w_{\ell}})$ be two $(\ell,H)$-multiphases with the same request sequence $\rho$. We denote the request sequence of $P^i_j$ as $\rho^i_j$. By Definition~\ref{def_phase}, both $\rho^1_1$ and $\rho^2_1$ are the longest prefixes of $\rho$ that are the request sequence of an $(\ell,H)$-phase, so, $\rho^1_1 = \rho^2_1$. Since the lemma is true for $(\ell,H)$-phases, this implies $P^1_1 = P^2_1$. Similarly, if $\rho^1_j = \rho^2_j$ for all $j\in \{1,\ldots,i-1\}$, then $\rho^1_i\cdots\rho^1_{w_{\ell+1}/w_{\ell}}=\rho^2_i\cdots\rho^2_{w_{\ell+1}/w_{\ell}}$. The  longest prefix of this string which is the request sequence of an $(\ell,H)$-phase is both $\rho^1_i$ as well as $\rho^2_i$, so $\rho^1_i = \rho^2_i$. Again, since the lemma is true for $(\ell,H)$-phases, $P^1_i = P^2_i$. Thus, by induction $P^1_i = P^2_i$ for all $i\in \{1,\ldots,w_{\ell+1}/w_{\ell}\}$. Thus $Q^1$ = $Q^2$, which implies the lemma is true for $(\ell,H)$-multiphases.

Assume the lemma is true for $(\ell-1,H)$-multiphases. We will show it is true for $(\ell,H)$-phases. Let $P_1 = (Q^0_1,\mathcal{Q}_1)$ and $P_2 = (Q^0_2,\mathcal{Q}_2)$ be two $(\ell,H)$-phases with the same request sequence $\rho$. We denote the request sequence of $Q^0_i$ as $\rho^0_i$ and that of $Q^p_i\in \mathcal{Q}_i$ as $\rho^p_i$. By Definition~\ref{def_phase}, $\rho^0_1,\rho^0_2$ are both the longest prefix of $\rho$ which is the request sequence of an $(\ell-1,H)$-multiphase. Thus $\rho^0_1=\rho^0_2 = \rho^0\text{ (say)}$.  Since the lemma is true for $(\ell-1,H)$-multiphases, $Q^0_1 = Q^0_2$. This implies that the critical sets of both $Q^0_1,Q^0_2$ are the same, say $S$. By Definition~\ref{def_phase}, for all $p \in S$, $\rho^p_1$ and $\rho^p_2$ are both the longest prefix of $\rho_0\backslash \rho$ which is the request sequence of an $(\ell-1,H\cup\{p\})$-multiphase. Thus, $\rho^p_1=\rho^p_2$, which implies $Q^p_1 = Q^p_2$ for all $p \in S$, by our inductive assumption. Thus $P_1 = P_2$, which implies the lemma is true for $(\ell,H)$-phases.
\end{proof}

\subsection{Missing proof from Section~\ref{sec_opt}}\label{appendix_opt}

We prove Lemma~\ref{lem_l_cont} here. A pre-requisite to proving it is Lemma~\ref{lem_cont_lbd}, which states that if a solution $\mathcal{C}=[C_0,\ldots,C_m]$ is $\ell'$-contaminated for a phase $P$, then the location $C_0[\ell']$ of the $\ell'$'th lightest server contributes significantly to $|v|$, where $v$ is the demand vector of $P$. To quantify this contribution, we need constants $F_{\ell,\ell'}$ defined as follows.

\begin{definition}\label{def_df}
The constants $F_{\ell,\ell'}$ for integers $\ell,\ell'$ such that $1\leq\ell\leq\ell'$ are defined as $F_{1,\ell'}=1$, and $F_{\ell,\ell'}=2^{\ell'-\ell+3}\cdot d_{\ell-1}\cdot F_{\ell-1,\ell'}$, if $\ell>1$.
\end{definition}

\begin{claim}\label{claim_df}
For every positive integer $\ell>1$, we have $d_{\ell}\geq2F_{\ell,\ell}$.
\end{claim}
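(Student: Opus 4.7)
The plan is to derive a closed form for $F_{\ell,\ell}$ by fully unrolling its recursive definition, express it as a power of $2$, and then compare the resulting exponent with $\log_2 d_\ell = 5^{\ell-1}-1$. Since both $F_{\ell,\ell}$ and $d_\ell$ are powers of $2$, the claim reduces to a purely numerical inequality between exponents, which I would verify by observing that the dominant term in $\log_2 F_{\ell,\ell}$ is only $(5^{\ell-1}-1)/4$ while $\log_2 d_\ell$ is $5^{\ell-1}-1$, leaving plenty of slack for the lower-order $\mathrm{poly}(\ell)$ terms.

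First, I would apply Definition~\ref{def_df} repeatedly: $F_{\ell,\ell} = 2^{3} \cdot d_{\ell-1} \cdot F_{\ell-1,\ell}$, and more generally $F_{i,\ell} = 2^{\ell-i+3} \cdot d_{i-1} \cdot F_{i-1,\ell}$ for every $i \in \{2,\ldots,\ell\}$, with the base case $F_{1,\ell}=1$. Iterating from $i=\ell$ down to $i=2$ gives
\[
F_{\ell,\ell} \;=\; \Biggl(\prod_{i=2}^{\ell} 2^{\ell-i+3}\Biggr) \cdot \prod_{i=1}^{\ell-1} d_i \;=\; 2^{E(\ell)} \cdot \prod_{i=1}^{\ell-1} d_i,
\]
where $E(\ell) = \sum_{i=2}^{\ell}(\ell-i+3) = \sum_{j=3}^{\ell+1} j = \tfrac{(\ell+1)(\ell+2)}{2} - 3$.

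Second, I would substitute Definition~\ref{def_dl}, namely $d_i = 2^{5^{i-1}-1}$, to rewrite the remaining product as
\[
\prod_{i=1}^{\ell-1} d_i \;=\; 2^{\sum_{i=1}^{\ell-1}(5^{i-1}-1)} \;=\; 2^{\frac{5^{\ell-1}-1}{4} - (\ell-1)},
\]
using the geometric sum $\sum_{i=0}^{\ell-2} 5^i = \tfrac{5^{\ell-1}-1}{4}$. Combining the two pieces gives
\[
\log_2 F_{\ell,\ell} \;=\; \tfrac{5^{\ell-1}-1}{4} + \tfrac{(\ell+1)(\ell+2)}{2} - \ell - 2.
\]

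Finally, I would compare with $\log_2 d_\ell = 5^{\ell-1}-1$. The target inequality $d_\ell \geq 2 F_{\ell,\ell}$ becomes
\[
5^{\ell-1}-1 \;\geq\; 1 + \tfrac{5^{\ell-1}-1}{4} + \tfrac{(\ell+1)(\ell+2)}{2} - \ell - 2,
\]
which rearranges to $\tfrac{3}{4}\bigl(5^{\ell-1}-1\bigr) \geq \tfrac{(\ell+1)(\ell+2)}{2} - \ell - 1$. The left-hand side grows exponentially in $\ell$ while the right-hand side is quadratic, so this is straightforward: I would verify the base case $\ell=2$ by direct substitution (both sides equal $3$, giving $d_2 = 16 = 2 F_{2,2}$ with equality) and then observe that multiplying $\ell$ by $1$ multiplies the left side by $5$ while adding $O(\ell)$ to the right side, so a one-line induction (or an appeal to the fact that $5^{\ell-1} \geq \ell^2$ for all $\ell \geq 2$) finishes the argument. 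The only thing that could go wrong is a sign error in the unrolling of the exponents, so the bookkeeping in the first step is the one place requiring care; once the closed form is pinned down, the inequality is comfortable.
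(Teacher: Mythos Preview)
Your proposal is correct and follows essentially the same approach as the paper: unroll the recurrence to get $\log_2 F_{\ell,\ell}=\tfrac{5^{\ell-1}-1}{4}+\tfrac{(\ell-1)(\ell+2)}{2}$ (your expression $\tfrac{(\ell+1)(\ell+2)}{2}-\ell-2$ simplifies to this), then compare exponents to reduce to $5^{\ell-1}-1\ge \tfrac{2(\ell^2+\ell)}{3}$. The only cosmetic difference is in the final step---the paper verifies this inequality by comparing second derivatives of $f(\ell)=5^{\ell-1}-1$ and $g(\ell)=\tfrac{2(\ell^2+\ell)}{3}$ on $[2,\infty)$, whereas you propose induction from the (tight) base case $\ell=2$; either works.
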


\begin{proof}
Unrolling the recursion for $F_{\ell,\ell}$, we get,
\[F_{\ell,\ell} = 2^{\frac{5^{\ell-1}-1}{4}+(\ell-1)(\frac{\ell}{2}+1)}\text{.}\]
Thus $2F_{\ell,\ell}=2^{\frac{5^{\ell-1}-1}{4}+\frac{\ell^2+\ell}{2}}$. We want to show $d_{\ell}\geq 2F_{\ell,\ell}$, that is, $2^{5^{\ell-1}-1}\geq2^{\frac{5^{\ell-1}-1}{4}+\frac{\ell^2+\ell}{2}} $ for all $\ell\geq 2$. This is true if and only if, $5^{\ell-1}-1 \geq \frac{5^{\ell-1}-1}{4}+\frac{\ell^2+\ell}{2} $ for all $\ell \geq 2 $, that is, $5^{\ell-1}-1 \geq \frac{2\cdot(\ell^2+\ell)}{3}\text{ for all }\ell\geq 2$. Let $f(\ell) = 5^{\ell-1}-1$ and $g(\ell) = \frac{2\cdot(\ell^2+\ell)}{3} $. Observe that
\begin{align*}
f(2) &= g(2)\text{,}\\
f'(2) = 5\cdot\ln{5} &\geq \frac{10}{3} = g'(2)\text{, and}\\
f''(\ell) = 5^{\ell-1}\cdot(\ln{5})^2 &\geq \frac{4}{3} = g''(\ell)
\end{align*}
for all $\ell\in[2,\infty)$. Thus $f'(\ell)\geq g'(\ell)$ for all $\ell\in[2,\infty)$, and so,
$f(\ell)\geq g(\ell)$ for all $\ell\in[2,\infty)$, as required.
\end{proof}

\begin{lemma}\label{lem_cont_lbd}
Let $P$ be an $(\ell,H)$-phase for an arbitrary $\ell\in\{1,\ldots,k\}$ and subset $H$ of at most $k-\ell$ points, and let $v$ be the demand vector of $P$. Let $\mathcal{C}=[C_0,\ldots,C_m]$ be an arbitrary $(\ell-1)$-active solution to the request sequence of $P$, and let $\ell'\in\{\ell+1,\ldots,k\}$ (so that $C_0[\ell']=\cdots=C_m[\ell']$). If $\mathcal{C}$ is $\ell'$-contaminated for $P$, then
\[\frac{v[C_0[\ell']]}{|v|}\geq\frac{1}{d_{\ell}\cdot F_{\ell,\ell'}}\text{.}\]
\end{lemma}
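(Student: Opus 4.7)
The plan is to prove this by induction on $\ell$, exploiting the recursive structure of both phases and of contamination. The base case $\ell=1$ is immediate: the demand vector of a $(1,H)$-phase equals $u_p$ for some $p\notin H$, so $|v|=1$, and Definition~\ref{def_contamination} specializes to the condition $v[C_0[\ell']]=1$. Hence $v[C_0[\ell']]/|v|=1$, while $d_1\cdot F_{1,\ell'}=1\cdot 1=1$, so the bound holds with equality.

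For the inductive step, let $Q^*$ denote the hard multiphase of $P$ for $\mathcal{C}$, with demand vector $v^*$, and recall that $v=v^0+\sum_{p\in S}v^p$ as in Definition~\ref{def_phase}. Since $v^*$ equals one of these non-negative summands (either $v^0$ or $v^{C_0[\ell]}$), we have $v[C_0[\ell']]\geq v^*[C_0[\ell']]$, while Lemma~\ref{lem_eq_size} gives $|v|=d_\ell\cdot|v^*|$. Therefore it suffices to establish that $v^*[C_0[\ell']]/|v^*|\geq 1/F_{\ell,\ell'}$ and then invoke the recursion $F_{\ell,\ell'}=2^{\ell'-\ell+3}\cdot d_{\ell-1}\cdot F_{\ell-1,\ell'}$ from Definition~\ref{def_df}.

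To obtain the bound on $v^*[C_0[\ell']]/|v^*|$, I will apply Definition~\ref{def_contamination}: at least $N:=2^{-(\ell'-\ell+3)}\cdot(w_\ell/w_{\ell-1})$ out of the $w_\ell/w_{\ell-1}$ phases $P'$ that make up $Q^*$ have the property that the restriction $\mathcal{C}'$ of $\mathcal{C}$ to $P'$ is $(\ell-2)$-active and $\ell'$-contaminated for $P'$. Since $\mathcal{C}$ is $(\ell-1)$-active the $\ell'$-th server never moves, so $C'_0[\ell']=C_0[\ell']$ in every such restriction. The inductive hypothesis applied to each such $P'$ with demand vector $v'$ yields $v'[C_0[\ell']]\geq|v'|/(d_{\ell-1}\cdot F_{\ell-1,\ell'})$, and by Lemma~\ref{lem_eq_size} each $|v'|$ equals the common value $|v^*|/(w_\ell/w_{\ell-1})$. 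Summing over these $N$ contaminated subphases and dropping the non-negative contributions from the remaining subphases gives $v^*[C_0[\ell']]\geq |v^*|/(2^{\ell'-\ell+3}\cdot d_{\ell-1}\cdot F_{\ell-1,\ell'})=|v^*|/F_{\ell,\ell'}$, which combined with the ratio $|v|/|v^*|=d_\ell$ closes the induction.

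The argument is essentially bookkeeping: the recursive structures of contamination (Definition~\ref{def_contamination}), of the demand-vector norms (Lemma~\ref{lem_eq_size}), and of the constants $F_{\ell,\ell'}$ (Definition~\ref{def_df}) all align, with the factor $d_\ell$ introduced when passing from $|v^*|$ to $|v|$ accounting for the implicit factor in the critical set's size and being absorbed cleanly into the target bound. The only mildly subtle point to keep in mind is that the hard multiphase $Q^*$ may be either $Q^0$ or $Q^{C_0[\ell]}$; however, in both cases $v^*$ is a non-negative summand of $v$, so the same inequality $v[C_0[\ell']]\geq v^*[C_0[\ell']]$ applies. I do not anticipate any genuine obstacle beyond verifying that the constants combine as claimed.
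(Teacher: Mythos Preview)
Your proposal is correct and follows essentially the same argument as the paper's own proof: induction on $\ell$, with the base case immediate from $|v|=1=d_1F_{1,\ell'}$, and the inductive step obtained by applying the hypothesis to the $\ell'$-contaminated subphases of the hard multiphase, summing via Lemma~\ref{lem_eq_size}, and then passing from the multiphase demand vector to $v$ using $|v|=d_\ell\cdot|v^*|$ and the non-negativity of the summands. Your explicit remark that $C'_0[\ell']=C_0[\ell']$ because $\mathcal{C}$ is $(\ell-1)$-active is a detail the paper leaves implicit, but otherwise the two proofs are the same.
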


\begin{proof}
We prove by induction on $\ell$. Suppose $\ell=1$. By Definition~\ref{def_contamination}, $v[C_0[\ell']]=1$, and by Definition~\ref{def_phase}, $|v|=1$. Since $d_1=F_{1,\ell'}=1$, the claim holds.

Suppose $\ell>1$. Consider the hard multiphase of $P$ for $\mathcal{C}$ and let $v'$ denote its demand vector. Recall that the hard multiphase is made up of a sequence $P_1,\ldots,P_{w_{\ell}/w_{\ell-1}}$ of $(\ell-1,H')$-phases for some fixed set $H'$. Let $v_i$ denote the demand vector of $P_i$, so that $v'=\sum_{i=1}^{w_{\ell}/w_{\ell-1}}v_i$. For each $i$ such that  the restriction of $\mathcal{C}$ to $P_i$ is ($(\ell-2)$-active, and) $\ell'$-contaminated for $P_i$, by induction hypothesis, we have,
\[v_i[C_0[\ell']]\geq\frac{|v_i|}{d_{\ell-1}\cdot F_{\ell-1,\ell'}}=\frac{(w_{\ell-1}/w_1)\times\prod_{j=1}^{\ell-1}d_j}{d_{\ell-1}\cdot F_{\ell-1,\ell'}}\text{,}\]
where the equality follows from Lemma~\ref{lem_eq_size}. By Definition~\ref{def_contamination}, at least $2^{-(\ell'-\ell+3)}\cdot(w_{\ell}/w_{\ell-1})$ indices $i$ are such that the restriction of $\mathcal{C}$ to $P_i$ is ($(\ell-2)$-active, and) $\ell'$-contaminated for $P_i$. Thus, we have,
\[v'[C_0[\ell']]\geq\frac{w_{\ell}/w_{\ell-1}}{2^{\ell'-\ell+3}}\cdot\frac{(w_{\ell-1}/w_1)\times\prod_{j=1}^{\ell-1}d_j}{d_{\ell-1}\cdot F_{\ell-1,\ell'}}=\frac{(w_{\ell}/w_1)\times\prod_{j=1}^{\ell-1}d_j}{2^{\ell'-\ell+3}\cdot d_{\ell-1}\cdot F_{\ell-1,\ell'}}=\frac{|v'|}{F_{\ell,\ell'}}\text{,}\]
where the last equality follows by Lemma~\ref{lem_eq_size} and Definition~\ref{def_df}. By Definition~\ref{def_phase}, $v$ is the sum of $d_{\ell}$ non-negative vectors, one of which is $v'$. Thus, $v[C_0[\ell']]\geq v'[C_0[\ell']]$. From Lemma~\ref{lem_eq_size}, we have $|v|=d_{\ell}\cdot|v'|$. Therefore, we get,
\[\frac{v[C_0[\ell']]}{|v|}\geq\frac{v'[C_0[\ell']]}{d_{\ell}\cdot|v'|}\geq\frac{1}{d_{\ell}\cdot F_{\ell,\ell'}}\text{,}\]
as required.
\end{proof}

\lemlcont*

\begin{proof}
Let $P = (Q^0, \mathcal{Q})$, where $Q^0$ and $\mathcal{Q}$ are respectively the explore and exploit parts of $P$. Let $S$ be the critical set of $P$, or equivalently, the critical set of $Q^0$.

First consider the case where $C_0[\ell]\notin S$, and therefore, $Q^0$ is the hard multiphase of $P$ for $\mathcal{C}$. Let $P_1,\ldots,P_{w_{\ell}/w_{\ell-1}}$ be the $(\ell-1,H)$-phases that make up the $(\ell-1,H)$-multiphase $Q^0$, and let $v_1,\ldots,v_{w_{\ell}/w_{\ell-1}}$ be their demand vectors. Then the demand vector $v'$ of $Q^0$ is given by $\sum_{i=1}^{w_{\ell}/w_{\ell-1}}v_i$. Let $B$ be the set of indices $i$ such that the restriction of $\mathcal{C}$ to $P_i$ is $\ell$-contaminated. For the sake of contradiction, assume that $|B|>w_{\ell}/(8w_{\ell-1})$. From Lemma~\ref{lem_cont_lbd}, for each $i\in B$, we have $v_i[C_0[\ell]]\geq|v_i|/(d_{\ell-1}\cdot F_{\ell-1,\ell})$. Therefore,
\[v'[C_0[\ell]] = \sum_{i=1}^{w_{\ell}/w_{\ell-1}}v_i[C_0[\ell]] \geq \sum_{i\in B}v_i[C_0[\ell]]\geq \sum_{i\in B}\frac{|v_i|}{d_{\ell-1}\cdot F_{\ell-1,\ell}}=\frac{|v'|}{w_{\ell}/w_{\ell-1}}\cdot\frac{|B|}{d_{\ell-1}\cdot F_{\ell-1,\ell}}\text{,}\]
where the last equality holds because by Lemma~\ref{lem_eq_size}, $|v'| = (w_{\ell}/w_{\ell-1})\cdot|v_i|$ for each $i$. Using the assumed lower bound of $|B|$, we have,
\[\frac{v'[C_0[\ell]]}{|v'|}\geq\frac{1}{w_{\ell}/w_{\ell-1}}\cdot\frac{w_{\ell}/8w_{\ell-1}}{d_{\ell-1}\cdot F_{\ell-1,\ell}} = \frac{1}{8d_{\ell-1}\cdot F_{\ell-1,\ell}}=\frac{1}{F_{\ell,\ell}}\geq\frac{2}{d_{\ell}}  \geq \frac{1}{d_{\ell}-1}\text{,}\]
where the second equality follows from Definition~\ref{def_df}, the second inequality follows from Claim~\ref{claim_df}, and the last inequality follows from the fact that $d_{\ell}\geq2$ for all $\ell>1$. But then the definition of critical set of a multiphase implies that $C_0[\ell]$ must be included in the critical set $S$ of $Q^0$, a contradiction. 

Next, consider the case where $C_0[\ell]\in S$, and therefore, the $(\ell-1, H\cup \{C_0[\ell]\})$-multiphase $Q^{C_0[\ell]}$ is the hard multiphase of $P$ for $\mathcal{C}$. $Q^{C_0[\ell]}$ consists of $w_{\ell}/w_{\ell-1}$ $(\ell-1, H\cup \{C_0[\ell]\})$-phases, and for each of them, by Observation~\ref{obs_demand_H}, its critical vector $v'$ satisfies $v'[C_0[\ell]]=0$. By (the contrapositive of) Lemma~\ref{lem_cont_lbd}, the restriction of $\mathcal{C}$ to any of these phases $P'$ is (either not $(\ell-2)$-active, or) not $\ell$-contaminated for $P'$. Thus, the claim holds.
\end{proof}

\subsection{Missing proofs from Section~\ref{sec_strategy}}\label{appendix_strategy}

Recall Definition~\ref{def_concat}. If a string $\rho_1$ is a prefix of a string $\rho_2$, then $\rho_1\backslash\rho_2$ denotes the result of chopping off the prefix $\rho_1$ from $\rho_2$, or in other words, the unique string $\rho_3$ such that $\rho_2=\rho_1\rho_3$. We will use this notation frequently in this section.

We first state and prove a lemma which will be useful in presenting the missing proofs. 
Suppose we are given a request sequence $\rho$ in an online manner and the task is to identify its longest prefix which is the request sequence of an $(\ell,H)$-(multi)phase (if such a prefix exists). The lemma states that, as soon as we find the smallest $m$ such that the first $m$ requests form the request sequence of some $(\ell,H)$-(multi)phase but the first $m+1$ requests don't, we can immediately stop and conclude that the former sequence is the one we need. It is impossible to append more requests to the latter sequence and make it the request sequence of an $(\ell,H)$-(multi)phase again.

\begin{lemma}\label{lem_prefix}
For every $\ell\in\{1,\ldots,k\}$ (resp.\ $\ell\in\{1,\ldots,k-1\}$) and every set $H$ of at most $k-\ell$ points, the following holds. Let $x$ and $z$ be the request sequence of two arbitrary $(\ell,H)$-phases (resp.\ $(\ell,H)$-multiphases), such that $x$ is a prefix of $z$. Then every prefix of $z$ longer than $x$ is also the request sequence of an $(\ell,H)$-phase (resp.\ $(\ell,H)$-multiphase).
\end{lemma}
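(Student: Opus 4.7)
The plan is to prove this lemma by induction on $\ell$, interleaving the phase and multiphase cases at each level, in the order: phase at level $1$ (base), multiphase at level $1$, phase at level $2$, multiphase at level $2$, and so on up to phase at level $k$. The base case of phase at level $1$ is immediate: by Definition~\ref{def_phase}, $x$ is over $H \cup \{p\}$ for some $p \notin H$ and contains $p$, while $z$ is over $H \cup \{q\}$ for some $q \notin H$; since $x$ is a prefix of $z$ and contains $p \notin H$, the character $p$ must lie in $H \cup \{q\}$, forcing $p = q$. Consequently, every prefix of $z$ longer than $x$ is over $H \cup \{p\}$ and contains $p$, hence is a $(1, H)$-phase request sequence.

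For the phase inductive step at level $\ell \geq 2$, I would decompose $x = \rho^0_x \rho'_x$ and $z = \rho^0_z \rho'_z$ per Definition~\ref{def_phase}. Since $\rho^0_x$ is a $(\ell-1, H)$-multiphase prefix of $z$ and $\rho^0_z$ is the longest such, $\rho^0_x$ is a prefix of $\rho^0_z$. If $\rho^0_z$ extended strictly beyond $x$, then $x$ would be an intermediate prefix between the two $(\ell-1, H)$-multiphases $\rho^0_x$ and $\rho^0_z$; the induction hypothesis would then force $x$ itself to be a $(\ell-1, H)$-multiphase, so the maximality of $\rho^0_x$ in $x$ would give $\rho^0_x = x$, making $\rho'_x$ empty and contradicting Observation~\ref{obs_nonempty}. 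Hence $\rho^0_z \preceq x$, and maximality of $\rho^0_x$ in $x$ then yields $\rho^0_z = \rho^0_x =: \rho^0$, so the critical set $S$ is shared. Letting $q_z \in S$ be such that $\rho'_z = \rho^{q_z}_z$, any prefix $y = \rho^0 y'$ of $z$ longer than $x$ satisfies $\rho^{q_z}_x \prec y' \preceq \rho^{q_z}_z$; the induction hypothesis for $(\ell-1, H \cup \{q_z\})$-multiphases then makes $y'$ itself a $(\ell-1, H \cup \{q_z\})$-multiphase, furnishing the longest exploit component of $y$. The remaining conditions of Definition~\ref{def_phase} for $y$ follow routinely: $\rho^0$ stays the longest $(\ell-1, H)$-multiphase prefix of $y$ (else contradicting the max in $z$), and the prefix-chain condition on the $\rho^p_y$'s is automatic since they are all prefixes of $y'$.

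For the multiphase inductive step at level $\ell$, let $x = x_1 \cdots x_m$ and $z = z_1 \cdots z_m$ be the unique phase decompositions given by Lemma~\ref{lem_unambiguous}. I plan to iteratively show $z_i = x_i$ for $i < m$, leaving $x_m \preceq z_m$. The main obstacle is a key sub-claim: a concatenation $P_1 \cdots P_t$ of $t \geq 2$ $(\ell, H)$-phases where each $P_j$ is the longest $(\ell, H)$-phase prefix of $P_j \cdots P_t$ cannot be a proper prefix of any single $(\ell, H)$-phase. To prove the sub-claim, decompose the hypothetical phase as $y = \sigma^0 \sigma'$ and $P_1 = \tau^0_1 \tau'_1$; matching the explore parts via the induction hypothesis for $(\ell-1, H)$-multiphases yields $\sigma^0 = \tau^0_1$ (otherwise $P_1$ itself would be a $(\ell-1, H)$-multiphase, contradicting nonemptyness of $\tau'_1$); then applying the induction hypothesis for $(\ell-1, H \cup \{p^*\})$-multiphases to the chain $\tau^{p^*}_1 \preceq \tau'_1 P_2 \cdots P_t \preceq \sigma^{p^*} = \sigma'$ (where $p^*$ is the argmax in $y$'s exploit) makes $\tau'_1 P_2 \cdots P_t$ a $(\ell-1, H \cup \{p^*\})$-multiphase. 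This in turn makes $P_1 \cdots P_t$ itself a $(\ell, H)$-phase, so the maximality of $P_1$ as the longest phase prefix of $P_1 \cdots P_t$ forces $P_1 = P_1 \cdots P_t$, contradicting nonemptyness of $P_2, \ldots, P_t$ for $t \geq 2$. Iterating the sub-claim at each step $i < m$ (applied to the partial multiphase $x_i \cdots x_m$, which has $m - i + 1 \geq 2$ phases) establishes $z_i = x_i$ for $i < m$. Any prefix $y = z_1 \cdots z_{m-1} \alpha$ of $z$ longer than $x$ then has $\alpha$ a prefix of $z_m$ strictly longer than $x_m$, so $\alpha$ is a $(\ell, H)$-phase by the phase case of the lemma at level $\ell$ just proved, and the decomposition $(z_1, \ldots, z_{m-1}, \alpha)$ satisfies the multiphase conditions, since each max-phase-prefix property transfers from that of $z$'s decomposition.
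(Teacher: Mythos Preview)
Your overall inductive scheme and your handling of the base case and the phase step ($\mathcal{P}'(\ell-1)\Rightarrow\mathcal{P}(\ell)$) coincide with the paper's proof essentially line for line: show $\rho^0_x=\rho^0_z$ by sandwiching via the $(\ell-1)$-multiphase hypothesis and Observation~\ref{obs_nonempty}, identify the common critical set, then extend the exploit part using the $(\ell-1,H\cup\{q_z\})$-multiphase hypothesis.

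Where you diverge is the multiphase step ($\mathcal{P}(\ell)\Rightarrow\mathcal{P}'(\ell)$). The paper establishes $x_i=z_i$ for $i<m$ by a one-line application of $\mathcal{P}(\ell)$ itself: if $x_i$ is a strict prefix of $z_i$, take the first request $r$ of $x_{i+1}$; then $x_ir$ is a prefix of $z_i$ longer than $x_i$, so by $\mathcal{P}(\ell)$ it is an $(\ell,H)$-phase, contradicting maximality of $x_i$. Your sub-claim---that a maximally-decomposed concatenation of $t\geq2$ phases cannot be a proper prefix of a single phase---is true and yields the same conclusion, but it is strictly more work (you descend into the explore/exploit structure and invoke the $(\ell-1)$-multiphase hypothesis, whereas the paper only needs the just-proved phase case at level $\ell$). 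Your sub-claim proof also has a small gap: decomposing $y=\sigma^0\sigma'$ presupposes $\ell\geq2$, so the $\ell=1$ multiphase step is not covered as written (it is easy---any prefix of a $(1,H)$-phase over $H\cup\{p\}$ containing $p$ is itself a $(1,H)$-phase---but you should say so). Finally, ``Iterating the sub-claim \ldots establishes $z_i=x_i$'' elides a step: the sub-claim only rules out $x_i\cdots x_m\prec z_i$; you still need to argue $z_i\preceq x_i\cdots x_m$ and then apply both maximalities to conclude $x_i=z_i$.
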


\begin{proof}
We will prove the above lemma by induction. Let us first develop some notation for the same. $\mathcal{P}(\ell)$ and $\mathcal{P'}(\ell)$ are propositions defined as follows.

\begin{definition}[$\mathcal{P}(\ell)$ (resp.\ $\mathcal{P'}(\ell)$)]
For every set $H$ of at most $k-\ell$ points, the following holds. Let $x$ and $z$ be the request sequence of two arbitrary $(\ell,H)$-phases (resp.\ $(\ell,H)$-multiphases), such that $x$ is a prefix of $z$. Every prefix of $z$ longer than $x$ is also the request sequence of an $(\ell,H)$-phase (resp.\ $(\ell,H)$-multiphase).
\end{definition}

\begin{claim} 
$\mathcal{P}(1)$ is true.
\end{claim}

\begin{proof}
By Definition~\ref{def_phase}, $x$ and $z$ must be strings over $H\cup \{p\}$ and $H\cup \{q\}$ that contain at least one occurrence of $p$ and $q$ respectively, for some $p,q\notin H$. Since $x$ is a prefix of $z$, $p=q$. Thus, any prefix $y$ of $z$ which is longer than $x$ must also be a string over $H\cup \{p\}$. Hence, $y$ is also the request sequence of a $(1,H)$-phase. 
\end{proof}

\begin{claim}
For every $\ell\in\{1,\ldots,k-1\}$, $\mathcal{P}(\ell)$ implies $\mathcal{P'}(\ell)$.
\end{claim}

\begin{proof}
Consider the $(\ell,H)$-multiphases $Q^x$ and $Q^z$ with request sequences $x$ and $z$ respectively, and let $y$ be a prefix of $z$ which is longer than $x$. We will construct an $(\ell,H)$-multiphase $Q^y$ whose request sequence is $y$. Let the sequence of phases in $Q^x$ and $Q^z$ be $P^x_1,\ldots,P^x_m$ and $P^z_1,\ldots,P^z_m$ and their request sequences be $\rho^x_1,\ldots,\rho^x_{w_{\ell+1}/w_\ell}$ and $\rho^z_1,\ldots,\rho^z_{w_{\ell+1}/w_\ell}$. Recall that $\rho^x_i$ (resp.\ $\rho^z_i$) is the longest prefix of $\rho^x_i\cdots\rho^x_{w_{\ell+1}/w_\ell}$ (resp.\ $\rho^z_i\cdots\rho^z_{w_{\ell+1}/w_\ell}$) which is the request sequence of an $(\ell,H)$-phase.  We first show that for $1\leq i< w_{\ell+1}/w_\ell$, $\rho^x_i=\rho^z_i$. To prove this, assume the contrary. Let $j<w_{\ell+1}/w_\ell$ be the first index such that $\rho^x_j \neq \rho^z_j$. Then $\rho^x_j$ will be a strict prefix of $\rho^z_j$ or vice versa.

Assume that $\rho^x_j$ is a strict prefix of $\rho^z_j$. Given that $P^x_j$ is not the last phase in the multiphase, $\rho^x_{j+1}$ is not empty. Let its first request be $r$. Given $\mathcal{P}(\ell)$ is true, $\rho^x_jr$ is also the request sequence of an $(\ell,H)$-phase, and therefore, contradicts the maximality of $\rho^x_j$. On the other hand, assume that $\rho^z_j$ is a strict prefix of $\rho^x_j$. Since $x$ is a prefix of $z$, $\rho^x_j$ is a prefix of $\rho^z_j\cdots\rho^z_{w_{\ell+1}/w_\ell}$ . This contradicts the maximality of $\rho^z_j$. This proves that for $1\leq i<w_{\ell+1}/w_\ell$, $\rho^x_i=\rho^z_i$ and hence, $\rho^x_{w_{\ell+1}/w_\ell}$ is a prefix of $\rho^z_{w_{\ell+1}/w_\ell}$.

Note that $(\rho^x_1\cdots\rho^x_{(w_{\ell+1}/w_\ell)-1})\backslash y = \rho'$ (say) is a prefix of $\rho^z_{w_{\ell+1}/w_\ell}$ which is longer than $\rho^x_{w_{\ell+1}/w_\ell}$. Now, from $\mathcal{P}(\ell)$, we know that $\rho'$ is the request sequence of an $(\ell,H)$-phase, say $P^y_{w_{\ell+1}/w_\ell}$. Therefore, $Q^y=[P^x_1,P^x_2,\ldots,P^x_{(w_{\ell+1}/w_\ell)-1},P^y_{w_{\ell+1}/w_\ell}]$ is an $(\ell,H)$-multiphase with request sequence $y$.
\end{proof}

\begin{claim}
For every $\ell\in\{2,\ldots,k\}$, $\mathcal{P'}(\ell-1)$ implies $\mathcal{P}(\ell)$.
\end{claim}

\begin{proof} 
Assume $\mathcal{P'}(\ell-1)$ is true. Let $x$ and $z$ be the request sequences of two $(\ell,H)$-phases, $P_x$ and $P_z$ respectively, such that $x$ is a prefix of $z$. Suppose $y$ is a prefix of $z$ longer than $x$. We will construct an $(\ell,H)$-phase $P_y$, such that its request sequence is $y$.

Let $P_x = (Q_x,\mathcal{Q}_x)$ and $P_z = (Q_z,\mathcal{Q}_z)$. Let $x_0$ and $x_1$ be the request sequences of the explore and exploit parts of $P_x$, and let $z_0$ and $z_1$ be the request sequences of the explore and exploit parts of $P_z$, so that $x = x_0x_1$ and $z = z_0z_1 $. Let us first prove that $z_0$ is a prefix of $x$. Suppose instead that $x$ is a prefix of $z_0$. Since $x_0$ $z_0$ are the request sequences of two $(\ell-1,H)$-multiphases, by $\mathcal{P'}(\ell-1)$, $x$ is also the request sequence of an $(\ell-1,H)$-multiphase. Note that $x_0$ is the longest prefix of $x$ which is the request sequence of an $(\ell-1,H)$-multiphase, so $x=x_0$. However, this contradicts the non-emptiness of $x_1$ (Observation ~\ref{obs_non-empty}). Thus, $z_0$ must be a prefix of $x$. Recall that $x_0$ (resp.\ $z_0$) is the longest prefix of $x$ (resp.\ $z$) which is the request sequence of an $(\ell-1,H)$-multiphase. If $z_0$ is longer than $x_0$, then it contradicts the maximality of $x_0$. Similarly, if $x_0$ is longer than $z_0$, then it contradicts the maximality of $z_0$. Thus, $x_0=z_0$. By Lemma~\ref{lem_unambiguous}, $Q_x = Q_z$, which implies that their critical sets are the same, say $S$.

Recall that $\mathcal{Q}_x = \{Q^p_x\mid p\in S\}$ (resp.\ $\mathcal{Q}_z = \{Q^p_z\mid p\in S\}$) is the exploit part of $P_x$ (resp.\ $P_z$), where $Q^p_x$  (resp.\ $Q^p_z$) is the $(\ell-1,H\cup \{p\})$-multiphase with request sequence $x^p_1$ (resp.\ $z^p_1$) such that $x^p_1$ (resp.\ $z^p_1$) is the longest prefix of $x_1$ (resp.\ $z_1$) which is the request sequence of an $(\ell-1,H\cup \{p\})$-multiphase. The longest string among $\{x^p_1:p\in S\}$ (resp.\ $\{z^p_1:p\in S\}$) is $x_1$ (resp.\ $z_1$). Let $p^*\in S$ be such that $z_1$ is the request sequence of $Q^{p^*}_z\in \mathcal{Q}_z$. Note that $x_0\backslash y=z_0\backslash y$ is a prefix of $z_1$ which is longer than $x^{p^*}_1$. By $\mathcal{P'}(\ell-1)$, there exists a $(\ell-1,H\cup\{p^*\})$-multiphase $Q^{p^*}_y$, whose request sequence is $x_0\backslash y$. For every $p\in S$, let $Q^p_y$ be the $(\ell-1,H\cup \{p\})$-multiphase whose request sequence is the longest prefix of $x_0\backslash y$ which is the request sequence of an $(\ell-1,H\cup \{p\})$-multiphase (such a multiphase exists as the request sequence of $Q^p_x$ is a prefix of $x_0\backslash y$ and $Q^p_x$ is an $(\ell-1,H\cup \{p\})$-multiphase). Consider the phase $P_y = (Q_x,\{Q^p_y:p\in S\})$. The longest request sequence among the request sequences of $\{Q^p_y:p\in S\}$ is $x_0\backslash y$. Thus, the request sequence of $P_y$ is $y$. Therefore, there is an $(\ell,H)$-phase $P_y$ whose request sequence is $y$ which implies $\mathcal{P}(\ell)$ is true.
\end{proof}

The above three claims imply that the proposition $\mathcal{P}(\ell)$ (resp.\ $\mathcal{P}'(\ell)$) holds for all $\ell\in\{1,\ldots,k\}$ (resp.\ $\ell\in\{1,\ldots,k-1\}$), as required.
\end{proof}

\lemalgbase*

\begin{proof}
Recall that the request sequence of a $(1,H)$-phase is a string over the set $H\cup\{p\}$ with at least one occurrence of $p$, for some $p\notin H$. Consider the algorithm $\mathcal{A}_1$ that does the following, given a set $H$ of at most $k-1$ points and an initial configuration $C_0$ such that $H\subseteq\{C_0[2],\ldots,C_0[k]\}$. $\mathcal{A}_1$ ignores all requests until it finds the first occurrence of a point, say $p$, not in $H$. By assumption, the ignored requests are already covered by the algorithm's servers other than the lightest one. On the first request to $p$, $\mathcal{A}_1$ moves the lightest server to $p$. Thereafter, it ignores all requests in the set $H\cup\{p\}$, because they are already covered. As soon as it encounters a request not in $H\cup\{p\}$, $\mathcal{A}_1$ terminates without serving the request and returns the vector $u_p$. It is easy to see that $\mathcal{A}_1$ satisfies all the specifications of a $1$-phase.
\end{proof}

\lemrecmultiphase*

Let $\mathcal{A}_{\ell}$ be an $\ell$-phase strategy. The required $\ell$-multiphase strategy, which we call $\mathcal{A}'_{\ell}$, is fairly natural. Suppose a set $H$ of at most $k-\ell$ points and an initial configuration $C_0$ such that $H\subseteq\{C_0[\ell+1],\ldots,C_0[k]\}$ is given. $\mathcal{A}'_{\ell}$ initializes $v$ to the zero vector, and repeats the following sequence of steps $w_{\ell+1}/w_{\ell}$ times:
\begin{enumerate}
\item Call $\mathcal{A}_{\ell}$ with initial input $H,C_0$.
\item Pass the yet unserved suffix of $\rho$ to $\mathcal{A}_{\ell}$, and wait for it to terminate.
\item Let $v'$ be the vector returned by $\mathcal{A}_{\ell}$. Set $v\gets v+v'$.
\item Set $C_0$ to be the final configuration of $\mathcal{A}_{\ell}$.
\end{enumerate}
Once these iterations are over, $\mathcal{A}'_{\ell}$ terminates and returns the vector $v$.

By definition, since $\mathcal{A}_{\ell}$ be an $\ell$-phase strategy, each call to it moves only the lightest $\ell$ servers, and each call incurs expected cost of at most $c_{\ell}\cdot w_{\ell}$. Thus, $\mathcal{A}'_{\ell}$ moves only the lightest $\ell$ servers and its expected cost is at most $(w_{\ell+1}/w_{\ell})\times c_{\ell}\cdot w_{\ell}=c_{\ell}\cdot w_{\ell+1}$, as required. In the next two claims, we prove that $\mathcal{A}'_{\ell}$ satisfies the remaining two properties in the definition of $\ell$-multiphase strategy.

\begin{claim}
Suppose $\rho$ is not the request sequence of any $(\ell,H)$-multiphase but $\rho$ has the request sequence of some $(\ell,H)$-multiphase as a prefix. Then
\begin{itemize}
\item $\mathcal{A}'_{\ell}$ serves the longest such prefix, say $\rho^*$, and terminates.
\item Let $P$ denote the unique $(\ell,H)$-multiphase whose request sequence is $\rho^*$. Upon termination, $\mathcal{A}'_{\ell}$ returns the demand vector of $P$.
\end{itemize}
\end{claim}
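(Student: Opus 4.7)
The plan is to induct on the iteration count of $\mathcal{A}'_\ell$'s main loop. By Lemma~\ref{lem_unambiguous} there is a unique $(\ell,H)$-multiphase $P$ with request sequence $\rho^*$; I write $P$ as a sequence $P_1,\ldots,P_{w_{\ell+1}/w_\ell}$ of $(\ell,H)$-phases with request sequences $\rho_1,\ldots,\rho_{w_{\ell+1}/w_\ell}$ and demand vectors $v_1,\ldots,v_{w_{\ell+1}/w_\ell}$, so that $\rho^* = \rho_1\cdots\rho_{w_{\ell+1}/w_\ell}$ and the demand vector of $P$ is $\sum_j v_j$. The claim to prove inductively is that after the $i$-th iteration of $\mathcal{A}'_\ell$, the algorithm has served exactly $\rho_1\cdots\rho_i$ and its accumulator equals $\sum_{j\leq i} v_j$. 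The base $i=0$ is immediate, and once the claim holds for $i = w_{\ell+1}/w_\ell$, $\mathcal{A}'_\ell$ terminates having served $\rho^*$ and returns the demand vector of $P$, as required.

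For the inductive step, at the start of iteration $i+1$ the unserved suffix is $\sigma_i := (\rho_1\cdots\rho_i)\backslash\rho$. To apply the specification of $\mathcal{A}_\ell$ to $\sigma_i$ and conclude that $\mathcal{A}_\ell$ serves exactly $\rho_{i+1}$, terminates, and returns $v_{i+1}$ (the demand vector of the unique phase $P_{i+1}$, by Lemma~\ref{lem_unambiguous}), it suffices to show that (i) $\sigma_i$ is strictly longer than $\rho_{i+1}$, and (ii) $\rho_{i+1}$ is the longest prefix of $\sigma_i$ that is the request sequence of an $(\ell,H)$-phase. Property (i) follows from Observation~\ref{obs_nonempty} when $i+1 < w_{\ell+1}/w_\ell$, since $\rho_{i+2}$ is then non-empty, and from the hypothesis $\rho \neq \rho^*$ when $i+1 = w_{\ell+1}/w_\ell$. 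I would also note that since $\mathcal{A}_\ell$ never moves any server heavier than the $\ell$-th, the initial configuration passed to each call continues to satisfy the condition required by the specification.

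The main obstacle is (ii), particularly in the last iteration. For $i+1 < w_{\ell+1}/w_\ell$, if some prefix $y$ of $\sigma_i$ strictly longer than $\rho_{i+1}$ were the request sequence of a phase, then Lemma~\ref{lem_prefix} would yield that $\rho_{i+1} r$ is also a phase request sequence, where $r$ is the symbol of $y$ immediately after $\rho_{i+1}$. Since $r$ coincides with the first symbol of $\rho_{i+2}$, the string $\rho_{i+1} r$ is a prefix of $\rho_{i+1}\cdots\rho_{w_{\ell+1}/w_\ell}$ longer than $\rho_{i+1}$, contradicting the defining multiphase condition. For $i+1 = w_{\ell+1}/w_\ell$ the multiphase condition on $\rho_{w_{\ell+1}/w_\ell}$ is vacuous, so I would instead assume for contradiction that a longer prefix $y$ of $\sigma_{w_{\ell+1}/w_\ell - 1} = \rho_{w_{\ell+1}/w_\ell}(\rho^*\backslash\rho)$ is the request sequence of a phase $P_y$, and argue that $(P_1,\ldots,P_{w_{\ell+1}/w_\ell - 1}, P_y)$ forms an $(\ell,H)$-multiphase. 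Verifying the longest-prefix condition for each $\rho_j$ in this new multiphase reduces to the condition for the original multiphase: since $y$ begins with $\rho_{w_{\ell+1}/w_\ell}$, the first symbol after $\rho_j$ in $\rho_j\cdots\rho_{w_{\ell+1}/w_\ell - 1} y$ agrees with the first symbol after $\rho_j$ in $\rho_j\cdots\rho_{w_{\ell+1}/w_\ell}$, so the same Lemma~\ref{lem_prefix} argument applies. The resulting multiphase has request sequence $\rho_1\cdots\rho_{w_{\ell+1}/w_\ell - 1} y$, which is a prefix of $\rho$ strictly longer than $\rho^*$, contradicting the maximality of $\rho^*$.
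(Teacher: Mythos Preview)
Your proof is correct and follows essentially the same approach as the paper's: both decompose $\rho^*$ into its constituent phase request sequences, show that each $\rho_{i+1}$ is the longest $(\ell,H)$-phase prefix of the remaining suffix of $\rho$ (using Lemma~\ref{lem_prefix} for the non-final iterations and maximality of $\rho^*$ for the final one), and conclude via the specification of $\mathcal{A}_\ell$. Your version is organized as an explicit induction on the iteration count and is in fact a bit more careful in verifying that the replacement tuple in the last case is a valid multiphase, whereas the paper leaves this implicit; otherwise the arguments coincide.
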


\begin{proof}
We have $\rho^*=y_1\cdots y_{w_{\ell+1}/w_{\ell}}$, where $y_i$, for each $i\in\{1,\ldots,w_{\ell+1}/w_{\ell}-1\}$, is the longest prefix of $(y_1\cdots y_{i-1})\backslash\rho^*$ which is the request sequence of some $(\ell,H)$-phase, and $y_{w_{\ell+1}/w_{\ell}}$ is also the request sequence of some $(\ell,H)$-phase. Note that for $i<w_{\ell+1}/w_{\ell}$, $y_i$ is not the entire $(y_1\cdots y_{i-1})\backslash\rho^*$, since the subsequent $y_j$'s cannot be empty. Thus, $y_i$ must be the longest prefix of $(y_1\cdots y_{i-1})\backslash\rho$ which is the request sequence of some $(\ell,H)$-phase -- any longer prefix would contradict the maximality of $y_i$, possibly using Lemma~\ref{lem_prefix}. Moreover, $y_{w_{\ell+1}/w_{\ell}}$ is not the entire $(y_1\cdots y_{i-1})\backslash\rho$ (else $\rho^*=\rho$, making $\rho$ the request sequence of an $(\ell,H)$-multiphase), and $y_{w_{\ell+1}/w_{\ell}}$ must be the longest prefix of $(y_1\cdots y_{w_{\ell+1}/w_{\ell}-1})\backslash\rho$ which is the request sequence of some $(\ell,H)$-phase -- any longer such prefix can replace $y_{w_{\ell+1}/w_{\ell}}$ in $\rho^*$ to give a longer prefix of $\rho$ which is the request sequence of some $(\ell,H)$-multiphase. Thus, each $y_i$ must be the longest prefix of $(y_1\cdots y_{i-1})\backslash\rho$ which is the request sequence of some $(\ell,H)$-phase, and $y_i\neq(y_1\cdots y_{i-1})\backslash\rho$.

Consider the run of $\mathcal{A}'_{\ell}$ on $\rho$. The $i$'th call to $\mathcal{A}_{\ell}$ must serve exactly $y_i$, and return the demand vector of the unique $(\ell,H)$-phase whose request sequence is $y_i$. $\mathcal{A}'_{\ell}$ computes the sum of these demand vectors, which is the demand vector of $P$, by definition. Thus, $\mathcal{A}'_{\ell}$ serves $y_1\cdots y_{w_{\ell+1}/w_{\ell}}=\rho^*$ and returns the demand vector of $P$.
\end{proof}

\begin{claim}
Suppose $\mathcal{A}'_{\ell}$ on input $\rho$ terminates. Then $\rho$ is not the request sequence of any $(\ell,H)$-multiphase but $\rho$ has the request sequence of some $(\ell,H)$-multiphase as a prefix.
\end{claim}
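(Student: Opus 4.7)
The plan is to extract the structure of the served prefix from the behavior of the inner calls to $\mathcal{A}_\ell$ and then verify the two halves of the claim separately. Let $y_i$ denote the prefix of the streamed input that the $i$-th call to $\mathcal{A}_\ell$ inside the loop serves before terminating, and set $y = y_1 y_2 \cdots y_{w_{\ell+1}/w_{\ell}}$, so that $y$ is a prefix of $\rho$. Since each inner call terminated, Definition~\ref{def_alg} guarantees that each $y_i$ is the longest prefix of its streamed input (the yet-unserved suffix of $\rho$ at the start of iteration $i$) that is the request sequence of some $(\ell,H)$-phase, and moreover the streamed input strictly extends $y_i$ because at least one extra request is needed to trigger the termination of $\mathcal{A}_\ell$.

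For the ``prefix exists'' half, I would upgrade the local maximality above to the condition demanded by Definition~\ref{def_phase} for multiphases: I want $y_i$ to be the longest prefix of $y_i y_{i+1} \cdots y_{w_{\ell+1}/w_{\ell}}$, not of the whole streamed input, that is the request sequence of some $(\ell,H)$-phase. If a strictly longer such prefix $y'_i$ existed, then, since $y_i y_{i+1} \cdots y_{w_{\ell+1}/w_{\ell}}$ is itself a prefix of the streamed input, $y'_i$ would be a longer prefix of the streamed input with the same property, contradicting the maximality guaranteed by $\mathcal{A}_\ell$'s specification. Hence $y$ matches the recursive condition in Definition~\ref{def_phase} and is the request sequence of some $(\ell,H)$-multiphase, establishing the second half of the claim.

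For the ``$\rho$ is not itself'' half, the starting observation is that $y$ is a strict prefix of $\rho$, because the last inner call terminated and therefore its streamed input strictly extended $y_{w_{\ell+1}/w_{\ell}}$. Supposing for contradiction that $\rho$ were the request sequence of some $(\ell,H)$-multiphase with constituent phases having request sequences $z_1, \ldots, z_{w_{\ell+1}/w_{\ell}}$, I would argue by induction on $i$ that $z_i = y_i$. The base case holds because both $z_1$ and $y_1$ are characterized as the longest prefix of $\rho$ that is an $(\ell,H)$-phase's request sequence (the former by Definition~\ref{def_phase}, the latter by the upgraded maximality from the previous paragraph). The inductive step follows after stripping the common prefix $y_1 \cdots y_{i-1} = z_1 \cdots z_{i-1}$ and reapplying the same characterization on the residual suffix. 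This forces $\rho = y$, contradicting strictness, and completes the argument.

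There is no deep obstacle here; the only delicate point is cleanly distinguishing ``longest prefix of $\mathcal{A}_\ell$'s streamed input'' from ``longest prefix of $y_i \cdots y_{w_{\ell+1}/w_{\ell}}$'' that the definition of a multiphase actually requires. Fortunately the direction we need (transferring a hypothetical longer candidate from the shorter concatenation to the longer streamed input) is automatic, so the upgrade requires no appeal to Lemma~\ref{lem_prefix} and the whole proof reduces to a careful bookkeeping of maximal prefixes together with the termination condition in Definition~\ref{def_alg}.
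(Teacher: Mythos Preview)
Your proposal is correct and follows essentially the same approach as the paper: both extract the served pieces $y_i$ (the paper calls them $x_i$), transfer the maximality from the streamed suffix of $\rho$ to the shorter concatenation $y_i\cdots y_{w_{\ell+1}/w_\ell}$ to certify that $y$ is the request sequence of an $(\ell,H)$-multiphase, and then derive the contradiction for the ``$\rho$ is not itself'' half via the same induction $z_i=y_i$. The only cosmetic difference is that the paper phrases the final contradiction as ``the last inner call cannot terminate on input $y_{w_{\ell+1}/w_\ell}$'' whereas you phrase it as ``$\rho=y$ contradicts strictness''; these are equivalent observations.
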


\begin{proof}
Let $x$ be the prefix of $\rho$ which is served by $\mathcal{A}'_{\ell}$. Then $x=x_1\cdots x_{w_{\ell+1}/w_{\ell}}$, where $x_i$ is the string of requests served by the $i$'th call to $\mathcal{A}_{\ell}$. At the time the $i$'th call is made, the string or requests yet to be served is $(x_1\cdots x_{i-1})\backslash\rho$. Since $\mathcal{A}_{\ell}$ is an $\ell$-phase strategy, $x_i$ is the longest prefix of $(x_1\cdots x_{i-1})\backslash\rho$ which is the request sequence of some $(\ell,H)$-phase. Since $x$ is a prefix of $\rho$, $x_i$ is the longest prefix of $(x_1\cdots x_{i-1})\backslash x=x_i\cdots x_{w_{\ell+1}/w_{\ell}}$ which is the request sequence of some $(\ell,H)$-phase. Thus, $x=x_1\cdots x_{w_{\ell+1}/w_{\ell}}$ is the request sequence of an $(\ell,H)$-multiphase, which means $\rho$ has the request sequence of some $(\ell,H)$-multiphase as prefix. We are left to prove that $\rho$ itself is not the request sequence of an $(\ell,H)$-multiphase.

Suppose $\rho$ itself is the request sequence of an $(\ell,H)$-multiphase. Then $\rho=y_1\cdots y_{w_{\ell+1}/w_{\ell}}$, where $y_i$, for each $i\in\{1,\ldots,w_{\ell+1}/w_{\ell}-1\}$, is the longest prefix of $(y_1\cdots y_{i-1})\backslash\rho$ which is the request sequence of some $(\ell,H)$-phase, and $y_{w_{\ell+1}/w_{\ell}}$ is also the request sequence of some $(\ell,H)$-phase. But then this implies $y_i=x_i$ for each $i$, by induction on $i$. Thus, the last call to $\mathcal{A}_{\ell}$ got $y_{w_{\ell+1}/w_{\ell}}=x_{w_{\ell+1}/w_{\ell}}$, which is the request sequence of an $(\ell,H)$-phase, as its online input. Therefore, upon serving $x_{w_{\ell+1}/w_{\ell}}$, the last call $\mathcal{A}_{\ell}$ cannot terminate. This is a contradiction to the assumption that $\mathcal{A}'_{\ell}$ on input $\rho$ terminates.
\end{proof}

\begin{proof}[Proof of Lemma~\ref{lem_rec_multiphase}]
Follows from the definition of $\mathcal{A}'_{\ell}$ and the last two claims.
\end{proof}

We are left to complete the proof of Lemma~\ref{lem_rec_phase} from Section~\ref{sec_strategy}, stated as follows. 

\lemrecphase*

We now present the deferred proofs of the claims that constitute the proof of the above lemma.

\claimserveone*

\begin{proof}
Let $P=(Q^0,\mathcal{Q})$ denote the unique $(\ell,H)$-phase whose request sequence is $\rho^*$. Let $S$ denote the critical set of $P$. The request sequence $y^0$ of $Q^0$ is the longest prefix of $\rho^*$ which is the request sequence of some $(\ell-1,H)$-multiphase. For each $p\in S$, let $y^p$ denote the longest prefix of $y^0\backslash\rho^*$ which is the request sequence of some $(\ell-1,H\cup\{p\})$-multiphase, say $Q^p$. Then $y^0\backslash\rho^*$ is the longest of these $y^p$'s, which means there exists a $p^*\in S$ such that $y^{p^*}=y^0\backslash\rho^*$. In particular, by Observation~\ref{obs_non-empty}, $y^{p^*}$ is non-empty. This implies that $y^0$ is also the longest prefix of $\rho$ which is the request sequence of some $(\ell-1,H)$-multiphase -- any longer prefix would contradict the maximality of $y^0$, possibly using Lemma~\ref{lem_prefix}. Moreover, $y^{p^*}=y^0\backslash\rho^*$ is not the entire $y^0\backslash\rho$ (else $\rho^*=\rho$, making $\rho$ the request sequence of an $(\ell,H)$-phase). For every $p\in S$ such that $y^p$ is shorter than $y^{p^*}$, $y^p$ must also be the longest prefix of $y^0\backslash\rho$ which is the request sequence of some $(\ell-1,H\cup\{p\})$-multiphase -- any longer prefix would contradict the maximality of $y^p$, possibly using Lemma~\ref{lem_prefix}. For every $p\in S$ such that $y^p=y^{p^*}$, $y^p$ must also be the longest prefix of $y^0\backslash\rho$ which is the request sequence of some $(\ell-1,H\cup\{p\})$-multiphase -- any longer prefix would contradict the maximality of $\rho^*$. Thus, for each $p\in S$, $y^p$ is the longest prefix of $y^0\backslash\rho$ which is the request sequence of some $(\ell-1,H\cup\{p\})$-multiphase.

Consider the run of $\mathcal{A}_{\ell}$ on $\rho$. The call to $\mathcal{A}'_{\ell-1}$ in the explore step must serve exactly $y^0$, and return the demand vector $v^0$ of $Q^0$. In the next step, the critical set $S$ of $Q^0$ is computed. Consider the run of $\mathcal{A}'_{\ell-1}[p]$ for an arbitrary $p\in S$ in the exploit step. It must, in imagination, serve exactly $y^p$ and return the demand vector $v^p$ of $Q^p$. Thus, $\mathcal{A}_{\ell}$ serves exactly the longest of all $y^p$'s, namely $y^{p^*}=y^0\backslash\rho^*$, in the exploit step. $\mathcal{A}_{\ell}$ computes and returns $v^0+\sum_{p\in S}v^p$, which is the demand vector of $P$, by definition. Thus, $\mathcal{A}'_{\ell}$ serves $\rho^*$ and returns the demand vector of $P$.
\end{proof}

\claimservetwo*

\begin{proof}
Let $x$ be the prefix of $\rho$ which is served by $\mathcal{A}_{\ell}$. Then $x=x^0x'$, where $x^0$ is the string of requests served by call to $\mathcal{A}'_{\ell-1}$ in the explore step, and $x'$ is the string of requests served in the exploit step. Since $\mathcal{A}'_{\ell-1}$ is an $(\ell-1)$-multiphase strategy, $x^0$ is the longest prefix of $\rho$ which is the request sequence of some $(\ell-1,H)$-multiphase, say $Q^0$. The call to $\mathcal{A}'_{\ell-1}$ in the explore step terminates and returns the demand vector of $Q^0$, from which the critical set $S$ of $Q$ is computed by $\mathcal{A}_{\ell}$. For each $p\in S$, the instance $\mathcal{A}'_{\ell-1}[p]$ created in the exploit step serves, in imagination, the longest prefix $x^p$ of $x^0\backslash\rho$ which is the request sequence of some $(\ell-1,H\cup\{p\})$-multiphase, say $Q^p$. Since we are given that $\mathcal{A}_{\ell}$ on input $\rho$ terminates, the instance $\mathcal{A}'_{\ell-1}[p]$ on input $x^0\backslash\rho$ must terminate, and therefore, $x^p$ is well defined. Since each $x^p$ is a prefix of $x^0\backslash\rho$, $\{x^p\mid p\in S\}$ is a prefix chain. $\mathcal{A}_{\ell}$, in reality, in its exploit step, serves the longest of the $x^p$'s, and therefore, that string equals $x'$. Note that $x^0$ is the longest prefix of $x$ which is the request sequence of some $(\ell-1,H)$-multiphase, because $x$ is a prefix of $\rho$. Similarly, each $x^p$ is the longest prefix of $x'$ which is the request sequence of some $(\ell-1,H\cup\{p\})$-multiphase, because $x'$ is a prefix of $x^0\backslash\rho$. Thus, $x$ is the request sequence of the $(\ell,H)$-phase $P=(Q^0,\mathcal{Q})$, where $\mathcal{Q}=\{Q^p\mid p\in S\}$. This means $\rho$ has the request sequence of some $(\ell,H)$-phase as prefix. We are left to prove that $\rho$ itself is not the request sequence of an $(\ell,H)$-phase.

Suppose $\rho$ itself is the request sequence of an $(\ell,H)$-phase $P'$. Then $\rho=y^0y'$, where $y^0$ and $y'$ are the request sequences of the explore and exploit parts of $P$ respectively. By definition, $y^0$ is the longest prefix of $\rho$ which is the request sequence of some $(\ell-1,H)$-multiphase. Thus, $y_0=x_0$, and by Lemma~\ref{lem_unambiguous}, the explore part of $P'$ is also $Q^0$. This means that there exists $p'\in S$, the critical set of $Q^0$, such that $y'=y_0\backslash\rho=x_0\backslash\rho$ is the request sequence of some $(\ell-1,H\cup\{p'\})$-multiphase. Consider the instance $\mathcal{A}'_{\ell-1}[p']$ created in the exploit step. It is the $(\ell-1)$-multiphase strategy $\mathcal{A}'_{\ell-1}$ made to run on $y'$, which is itself the request sequence of an $(\ell-1,H\cup\{p'\})$-multiphase. Therefore, $\mathcal{A}'_{\ell-1}[p']$ must serve the whole $y'$ and wait for more requests. Thus, $\mathcal{A}_{\ell}$ on input $\rho$ cannot terminate.
\end{proof}

\section{Generalized $k$-Server on Weighted Uniform Metrics}\label{app_gks}

In this section, we show how the techniques used in the design of the weighted $k$-server algorithm can be lifted to the generalized $k$-server on uniform metrics, thus proving Theorem~\ref{thm_gks}.

\thmgks*

The proof of this theorem is analogous to the proof of Theorem~\ref{thm_main}, once we have proven claims analogous to Lemma~\ref{lem_k_phase} and Corollary~\ref{cor_opt}. We sketch the proofs of these claims by enlisting the minor modifications that need to be done. As before, we assume $w_1\leq w_2 \leq \cdots \leq w_k$.

\subsection{Modifications to Section~\ref{sec_prelims}}

Let $M=\bigcup_{i=1}^kM_i$. Definitions \ref{def_concat}, \ref{def_one_hot}, and \ref{def_dl} remain the same as before. In addition, we need the following definitions.

\begin{definition}
Given a vector $v$ in the span of $\{u_p\mid p\in M\}$, we denote its projection on the span of $\{u_p\mid p\in M_{\ell}\}$ by $v\vert_{\ell}$.
\end{definition}

\begin{definition}
A subset $H$ of $M$ is defined to be an \textit{$\ell$-configuration} if $H$ contains at most one point from each of $M_{\ell+1},\ldots,M_k$, and no point from any of $M_1,\ldots,M_{\ell}$.
\end{definition}

\begin{definition}
We say a set $H\subseteq M$ \textit{satisfies} the request $r$, if there exists $\ell\in\{1,\ldots,k\}$, such that $r[\ell]\in H$.
\end{definition}

\subsection{Modifications to Section~\ref{sec_phase}}

\begin{definition}[Analogous to Definition~\ref{def_phase}]
Let $\ell\in\{1,\ldots,k\}$ and let $H$ be an $\ell$-configuration.

A \textit{$(1,H)$-phase} is a sequence $\rho$ of requests, all of which are satisfied by $H\cup\{p\}$ for some point $p\in M_1$, and at least one of which 
is not satisfied by $H$.
Let $r$ be the first such request (which implies $r[1]=p$).
The \textit{request sequence} of such a $(1,H)$-phase is $\rho$, its \textit{demand vector} is $\sum_{\ell=1}^ku_{r[\ell]}$, and its critical set is $\{p\}$.

For $\ell<k$, an \textit{$(\ell,H)$-multiphase} $Q$,
its \textit{request sequence}, and its \textit{demand vector} $v$ are defined exactly as in Definition~\ref{def_phase}.
The \textit{critical set} of $Q$ is defined to be the set $\text{top}_{d_{\ell+1}-1}(v\vert_{l+1})$.

For $\ell>1$, an \textit{$(\ell,H)$-phase} $P$,
its \textit{explore} and \textit{exploit parts}, its \textit{request sequence}, \textit{demand vector}, and \textit{critical set} are all defined exactly as in Definition~\ref{def_phase}.
\end{definition}

From now on, when we refer to an $(\ell,H)$-phase or $(\ell,H)$-multiphase, we assume that $H$ is an $\ell$-configuration. Observations \ref{obs_non-empty} and \ref{obs_demand_H} remain true for this definition of phase.

\begin{lemma}[Analogous to Lemma~\ref{lem_eq_size}]
For every $\ell\in\{1,\ldots,k\}$ (resp.\ $\ell\in\{1,\ldots,k-1\}$) and every set $H$, the demand vector $v$ of every $(\ell,H)$-phase (resp.\ $(\ell,H)$-multiphase) satisfies, for all $\ell'\in\{1,\ldots,k\}$, $|(v\vert_{\ell'})|=(w_{\ell}/w_1)\cdot\prod_{i=1}^{\ell}d_i$ (resp.\ $|(v\vert_{\ell'})|=(w_{\ell+1}/w_1)\cdot\prod_{i=1}^{\ell}d_i$).
\end{lemma}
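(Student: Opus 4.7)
The plan is to mirror the proof of Lemma~\ref{lem_eq_size} via simultaneous induction on $\ell$, alternating between phases and multiphases, with the key modification confined to the base case. At every level $\ell > 1$, the recursive structure of an $(\ell,H)$-phase and $(\ell,H)$-multiphase, and the way the demand vector is assembled from those of sub-objects, is identical to Definition~\ref{def_phase}. Thus the additive decomposition of the demand vector carries through to each projection $v\vert_{\ell'}$ verbatim, and the only genuinely new verification is that the claimed formula for $|v\vert_{\ell'}|$ does not depend on $\ell'$.

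For the base case, I would consider a $(1,H)$-phase whose first request not satisfied by $H$ is $r$; by the new definition, its demand vector is $v=\sum_{\ell=1}^{k}u_{r[\ell]}$. Because $M_1,\ldots,M_k$ are pairwise disjoint with $r[\ell']\in M_{\ell'}$, the projection $v\vert_{\ell'}$ is exactly $u_{r[\ell']}$ for every $\ell'\in\{1,\ldots,k\}$, so $|v\vert_{\ell'}|=1=(w_1/w_1)\cdot d_1$, matching the formula. This is the crucial new observation: the generalized definition of a $(1,H)$-phase was tailored precisely to place one unit of mass in each coordinate block $M_{\ell'}$.

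For the multiphase step, I would take an $(\ell,H)$-multiphase $Q$ with constituent phases $P_1,\ldots,P_{w_{\ell+1}/w_\ell}$ and demand vector $v=\sum_i v_i$; linearity of projection together with non-negativity of each $v_i\vert_{\ell'}$ (a trivial auxiliary induction) gives $|v\vert_{\ell'}|=\sum_i|v_i\vert_{\ell'}|$, and invoking the inductive hypothesis on each $v_i$ yields $(w_{\ell+1}/w_\ell)\cdot(w_\ell/w_1)\cdot\prod_{i=1}^{\ell}d_i=(w_{\ell+1}/w_1)\cdot\prod_{i=1}^{\ell}d_i$ as required. For the phase step, I would write $v=v^0+\sum_{p\in S}v^p$ for an $(\ell,H)$-phase $P=(Q^0,\mathcal{Q})$ whose critical set $S=\text{top}_{d_\ell-1}(v^0\vert_\ell)$ has size $d_\ell-1$; applying the inductive hypothesis for multiphases at level $\ell-1$ to each of the $d_\ell$ summands gives $|v\vert_{\ell'}|=d_\ell\cdot(w_\ell/w_1)\cdot\prod_{i=1}^{\ell-1}d_i=(w_\ell/w_1)\cdot\prod_{i=1}^{\ell}d_i$. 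I do not anticipate any real obstacle: once the base-case observation is in place, the inductive steps are essentially identical to those of Lemma~\ref{lem_eq_size}, merely applied block by block rather than globally.
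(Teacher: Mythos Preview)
Your proposal is correct and follows precisely the approach the paper intends: the paper's own proof consists of the single sentence ``The proof is analogous to the proof of Lemma~\ref{lem_eq_size},'' and your write-up is exactly that analogy spelled out, with the base case correctly adapted to the new demand vector $\sum_{\ell=1}^{k}u_{r[\ell]}$ and the inductive steps applied projection by projection.
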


\begin{proof}
The proof is analogous to the proof of Lemma~\ref{lem_eq_size}.
\end{proof}

Note that the above lemma implies that if $v$ is the demand vector of a phase or a multiphase, then for every $\ell,\ell'\in[k], |(v\vert_{\ell})|=|(v\vert_{\ell'})|$. Lemmas \ref{lem_unambiguous} and \ref{lem_prefix} continue to hold, and their proofs remain unchanged.

\subsection{Modifications to Section~\ref{sec_opt}}

\begin{definition}[Analogous to Definition~\ref{def_conf}]
A \textit{server configuration} is a $k$-tuple of points, one from each $M_{\ell}$. The $\ell$'th point in a configuration $C$ is denoted by $C[\ell]$ (thus, $C[\ell]\in M_{\ell}$). Given a sequence of requests $\rho=r_1\cdots r_m$, a \textit{solution} to $\rho$ is a sequence $\mathcal{C}=[C_0,C_1,\ldots,C_m]$ of configurations such that for each $i\in\{1,\ldots,m\}$ there exists $\ell\in\{1,\ldots,k\}$ such that $r_i[\ell]=C_i[\ell]$. The \textit{cost} of such a solution is $\sum_{i=1}^m\sum_{\ell=1}^kw_{\ell}\cdot\mathbb{I}[C_{i-1}[\ell]\neq C_i[\ell]]$. The cost of a minimum cost solution of a request sequence $\rho$ is denoted by $\text{OPT}(\rho)$.
\end{definition}

Definitions \ref{def_substring}, \ref{def_restriction}, \ref{def_l_active}, \ref{def_hard_multiphase}, \ref{def_contamination} and Observation~\ref{obs_cost} remain unchanged except that we use the new definitions of phase and configuration. 

Definition~\ref{def_df} remains as it is, and Claim~\ref {claim_df} remains true. 

\begin{lemma}[Analogous to Lemma~\ref{lem_cont_lbd}]
Let $P$ be an $(\ell,H)$-phase for an arbitrary $\ell\in\{1,\ldots,k\}$ and an $\ell$-configuration $H$, and let $v$ be the demand vector of $P$. Let $\mathcal{C}=[C_0,\ldots,C_m]$ be an arbitrary $(\ell-1)$-active solution to the request sequence of $P$, and let $\ell'\in\{\ell+1,\ldots,k\}$ (so that $C_0[\ell']=\cdots=C_m[\ell']$). If $\mathcal{C}$ is $\ell'$-contaminated for $P$, then
\[\frac{v[C_0[\ell']]}{|(v\vert_{\ell'})|}\geq\frac{1}{d_{\ell}\cdot F_{\ell,\ell'}}\text{.}\]
\end{lemma}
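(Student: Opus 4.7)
The plan is to mimic the proof of Lemma~\ref{lem_cont_lbd} essentially verbatim, replacing $|v|$ by $|(v\vert_{\ell'})|$ throughout. The induction is on $\ell$, and all the relevant structural facts carry over to the generalized setting thanks to the generalized versions of Lemma~\ref{lem_eq_size} and Definition~\ref{def_phase}.

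For the base case $\ell=1$, the $\ell'$-contamination condition gives $v[C_0[\ell']]=1$. By the new definition of a $(1,H)$-phase, the demand vector is $v=\sum_{j=1}^k u_{r[j]}$ for some request $r$ not satisfied by $H$, so $(v\vert_{\ell'})=u_{r[\ell']}$ and $|(v\vert_{\ell'})|=1$. Since $d_1=F_{1,\ell'}=1$, the inequality $v[C_0[\ell']]/|(v\vert_{\ell'})|\geq 1/(d_1\cdot F_{1,\ell'})$ is immediate.

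For the inductive step, let $Q$ be the hard multiphase of $P$ for $\mathcal{C}$, made of the $(\ell-1,H')$-phases $P_1,\ldots,P_{w_\ell/w_{\ell-1}}$ with demand vectors $v_1,\ldots,v_{w_\ell/w_{\ell-1}}$, and let $v'=\sum_i v_i$ be the demand vector of $Q$. By Definition~\ref{def_contamination}, at least $2^{-(\ell'-\ell+3)}\cdot(w_\ell/w_{\ell-1})$ indices $i$ are such that the restriction of $\mathcal{C}$ to $P_i$ is $(\ell-2)$-active and $\ell'$-contaminated for $P_i$. For each such $i$, the induction hypothesis yields
\[
v_i[C_0[\ell']]\;\geq\;\frac{|(v_i\vert_{\ell'})|}{d_{\ell-1}\cdot F_{\ell-1,\ell'}}\;=\;\frac{(w_{\ell-1}/w_1)\cdot\prod_{j=1}^{\ell-1}d_j}{d_{\ell-1}\cdot F_{\ell-1,\ell'}},
\]
where the equality is the generalized Lemma~\ref{lem_eq_size}. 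Summing over the contaminated indices and applying Definition~\ref{def_df},
\[
v'[C_0[\ell']]\;\geq\;\frac{w_\ell/w_{\ell-1}}{2^{\ell'-\ell+3}}\cdot\frac{(w_{\ell-1}/w_1)\cdot\prod_{j=1}^{\ell-1}d_j}{d_{\ell-1}\cdot F_{\ell-1,\ell'}}\;=\;\frac{|(v'\vert_{\ell'})|}{F_{\ell,\ell'}},
\]
again using the generalized Lemma~\ref{lem_eq_size} in the final equality. Finally, by Definition~\ref{def_phase}, $v$ is a sum of $d_\ell$ non-negative vectors one of which is $v'$, so $v[C_0[\ell']]\geq v'[C_0[\ell']]$, while the generalized Lemma~\ref{lem_eq_size} gives $|(v\vert_{\ell'})|=d_\ell\cdot|(v'\vert_{\ell'})|$. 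Combining yields the desired bound $v[C_0[\ell']]/|(v\vert_{\ell'})|\geq 1/(d_\ell\cdot F_{\ell,\ell'})$.

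There is no real obstacle here; the only thing that requires a moment's care is verifying that the generalized Lemma~\ref{lem_eq_size} provides a uniform value of $|(v\vert_{\ell'})|$ across all coordinates $\ell'\in\{1,\ldots,k\}$. That is precisely the content of the remark following the generalized Lemma~\ref{lem_eq_size}, so every ratio manipulation in the original proof transports directly by replacing $|v|$ with $|(v\vert_{\ell'})|$.
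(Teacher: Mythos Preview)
Your proposal is correct and follows essentially the same approach as the paper, which simply states that the proof is analogous to that of Lemma~\ref{lem_cont_lbd} with $v$ replaced by $v\vert_{\ell'}$. Your explicit treatment of the base case and the inductive step, together with the observation that the generalized Lemma~\ref{lem_eq_size} makes $|(v\vert_{\ell'})|$ level-independent, is exactly the intended argument.
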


\begin{proof}
The proof is analogous to the proof of Lemma~\ref{lem_cont_lbd} with the modification that $v$ is replaced by $v\vert_{\ell'}$.
\end{proof}

Lemmas \ref{lem_l_cont}, \ref{lem_opt_rec} and Corollary~\ref{cor_opt} remain true with analogous proofs.

\subsection{Modifications to Section~\ref{sec_strategy}}

Definition~\ref{def_c}, and Observation~\ref{obs_c} remain unchanged.

\begin{definition}[Analogous to Definition~\ref{def_alg}]
For $\ell\in\{1,\ldots,k\}$ (resp.\ for $\ell\in\{1,\ldots,k-1\}$), an \textit{$\ell$-phase strategy} (resp.\ \textit{$\ell$-multiphase strategy}) is a randomized online algorithm $\mathcal{A}_{\ell}$ (resp.\ $\mathcal{A}'_{\ell}$) that satisfies the following specifications.
\begin{enumerate}
\item $\mathcal{A}_{\ell}$ (resp.\ $\mathcal{A}'_{\ell}$) takes as initial input an $\ell$-configuration $H$ and an initial configuration $C_0$ such that $H\subseteq\{C_0[\ell+1],\ldots,C_0[k]\}$.
\item $\mathcal{A}_{\ell}$ (resp.\ $\mathcal{A}'_{\ell}$) takes as online input an arbitrary sequence $\rho$ of requests.
\item $\mathcal{A}_{\ell}$ (resp.\ $\mathcal{A}'_{\ell}$) serves requests by moving only the servers in $M_1,\ldots,M_{\ell}$. 
\item If $\rho$ is not the request sequence of any $(\ell,H)$-phase (resp.\ $(\ell,H)$-multiphase) but $\rho$ has the request sequence of some $(\ell,H)$-phase (resp.\ $(\ell,H)$-multiphase) as a prefix, then
\begin{itemize}
\item $\mathcal{A}_{\ell}$ (resp.\ $\mathcal{A}'_{\ell}$) serves the longest such prefix, say $\rho^*$, and terminates.
\item Let $P$ denote the unique $(\ell,H)$-phase (resp.\ $(\ell,H)$-multiphase) whose request sequence is $\rho^*$ (uniqueness is guaranteed by Lemma~\ref{lem_unambiguous}). Upon termination, $\mathcal{A}_{\ell}$ (resp.\ $\mathcal{A}'_{\ell}$) returns the demand vector of $P$.
\end{itemize}
\item Else, $\mathcal{A}_{\ell}$ (resp.\ $\mathcal{A}'_{\ell}$) serves the whole of $\rho$ and waits for more requests.
\item The expected cost of $\mathcal{A}_{\ell}$ (resp.\ $\mathcal{A}'_{\ell}$) is at most $c_{\ell}\cdot w_{\ell}$ (resp.\ $c_{\ell}\cdot w_{\ell+1}$).
\end{enumerate}
\end{definition}

Lemmas \ref{lem_alg_base}, and \ref{lem_rec_multiphase} remain true with similar proofs. In line 3 of the algorithm in Lemma~\ref{lem_rec_phase}, we set $S$ to $\text{top}_{d_{\ell}-1}(v\vert_{\ell})$. Lemma~\ref{lem_rec_phase} is proven exactly as before, thus implying Lemma~\ref{lem_k_phase}.

\end{document}